\pgfplotsset{compat=1.18} % Use the latest version compatibility
\definecolor{NiceOrange}{rgb}{.9,0.55,0}
\definecolor{NicePurple}{rgb}{0.3,0.1,1}
\definecolor{DarkBlue}{rgb}{0.1,0.1,0.5}
\let\origtau\tau % save the original form of '\tau'
\renewcommand{\tau}{\scalebox{1.44}{$\origtau$}}
\DeclareFixedFont{\ttb}{T1}{txtt}{bx}{n}{8}
\DeclareFixedFont{\ttm}{T1}{txtt}{m}{n}{8}
\definecolor{deepblue}{rgb}{0,0,0.5}
\definecolor{deepred}{rgb}{0.6,0,0}
\definecolor{deepgreen}{rgb}{0,0.5,0}
\newcommand{\pythonlisting}[2]{\lstinputlisting[
    language=Python,
    basicstyle=\fontsize{#1}{#1}\ttfamily,
    commentstyle=\fontsize{#1}{#1}\ttfamily\bfseries\color{deepgreen},
    keywordstyle=\fontsize{#1}{#1}\ttfamily\bfseries\color{deepblue},
    xleftmargin=0.5cm,
    frame=tlbr,
    framesep=0.2cm,
    framerule=1pt,
    morekeywords={self,None,assert,venting\_into,venting\_into\_new\_table,alloc\_quint,quint,z,del\_measure\_x,alloc\_phase\_gradient,qpu,QPU,del\_phase\_gradient,del\_by\_equal\_to\_const,push\_uncompute\_info,pop\_uncompute\_info,ghz\_lookup},
    keywordstyle={[4]\fontsize{#1}{#1}\ttfamily\color{deepred}},
    morekeywords={[4]Q\_residue,Q\_result,Q\_total,Q\_e,Q\_grad,Q\_target,Q\_helper,Q\_unresult,Q\_l1,Q\_l0,Q\_l,Q\_k,Q\_exponent,Q\_acc,Q\_dlog,Q\_out},
]{#2}}
\theoremstyle{definition}
\newtheorem{definition}{Definition}[section]
\theoremstyle{definition}
\newtheorem{theorem}[definition]{Theorem}
\theoremstyle{definition}
\newtheorem{lemma}[definition]{Lemma}
\newtheorem{corollary}{Corollary}[section]
\newtheorem{remark}{Remark}[section]
\newcommand{\GL}{\mathrm{GL}}
\newcommand{\tr}{\mathrm{tr}}
\newcommand{\Aff}{\mathrm{Aff}}   % affine subspaces
\newcommand{\F}{\mathbb{F}}   % finite field:
\newcommand{\GF}{\text{GF}}   % finite field:
\newcommand{\Google}{\affiliation{Google Quantum AI, Venice, CA 90291}}
\newcommand{\bigOtilde}{\widetilde{O}}
\begin{document}

% Title
\title{Verifiable Quantum Advantage via Optimized DQI Circuits}

% Author Block (RevTeX style)
\author{Tanuj Khattar}
\email[Corresponding author: ]{tanujkhattar4@gmail.com}
\Google
\author{Noah Shutty}
\email[Corresponding author: ]{shutty@google.com}
\Google
\author{Craig Gidney}
\Google
\author{Adam Zalcman}
\Google
\author{Noureldin Yosri}
\Google
\author{Dmitri Maslov}
\Google
\author{Ryan Babbush}
\Google
\author{Stephen P.~Jordan}
\Google

\date{October 09, 2025}

\begin{abstract}
Decoded Quantum Interferometry (DQI) provides a framework for superpolynomial quantum speedups by reducing certain optimization problems to reversible decoding tasks. We apply DQI to the Optimal Polynomial Intersection (OPI) problem, whose dual code is Reed-Solomon (RS). 
We establish that DQI for OPI is the first known candidate for verifiable quantum advantage with optimal asymptotic speedup: solving instances with classical hardness $O(2^N)$ requires only $\widetilde{O}(N)$ quantum gates, matching the theoretical lower bound. Realizing this speedup requires highly efficient reversible RS decoders. 
We introduce novel quantum circuits for the Extended Euclidean Algorithm, the decoder's bottleneck. Our techniques, including a new representation for implicit Bézout coefficient access, and optimized in-place architectures, reduce the leading-order space complexity to the theoretical minimum of $2nb$ qubits while significantly lowering gate counts. 
These improvements are broadly applicable, including to Shor's algorithm for the discrete logarithm. 
We analyze OPI over binary extension fields $GF(2^b)$, assess hardness against new classical attacks, and identify resilient instances. 
Our resource estimates show that classically intractable OPI instances (requiring $>10^{23}$ classical trials) can be solved with approximately 5.72 million Toffoli gates. 
This is substantially less than the count required for breaking RSA-2048, positioning DQI as a compelling candidate for practical, verifiable quantum advantage.
\end{abstract}

\maketitle
\emph{\textbf{Data availability:}} Code and assets created for this paper are 
\href{https://doi.org/10.5281/zenodo.17301475}{available on Zenodo} \cite{https://doi.org/10.5281/zenodo.17301475}.

% Table of Contents
\begingroup
  \hypersetup{linkcolor=black}
  \tableofcontents
\endgroup

\section{Introduction}
The pursuit of verifiable quantum advantage is a central goal in quantum computing.
{\it Quantum advantage} refers to solving problems efficiently on a quantum computer where no efficient classical algorithm is known. {\it Verifiability} implies we can efficiently check the solution on a classical computer.
Verifiable quantum advantage problems are a useful model of future applications of quantum computers to classical search and optimization problems abundant in industry.

A fundamental question in this pursuit concerns the efficiency of the quantum speedup itself. Given a problem instance with a target classical hardness of $O(2^N)$ (where $N$ is the security parameter), what is the required quantum runtime?
The theoretical lower bound is $\Omega(N)$ quantum gates \cite{Bravyi_2016}. However, existing candidates for verifiable superpolynomial advantage exceed this bound. For example, achieving $O(2^N)$ hardness in integer factorization requires Shor's algorithm \cite{Shor1999} to use $\bigOtilde(N^6)$ gates (against the General Number Field Sieve), while Elliptic Curve Cryptography (ECC) requires $\bigOtilde(N^2)$ gates (against Pollard Rho).

One corollary of the resource estimation performed in this work is to demonstrate that Decoded Quantum Interferometry (DQI) \cite{jordan2024optimizationdecodedquantuminterferometry}, when applied to the Optimal Polynomial Intersection (OPI) problem, is the first known candidate for verifiable quantum advantage that achieves the optimal asymptotic speedup (up to polylogarithmic factors), requiring only $\bigOtilde(N)$ quantum gates to solve instances with $O(2^N)$ classical hardness:
\begin{restatable}{theorem}{THMoptimalSpeedup}\label{thm:optimalSpeedup}
There is an NP-search / optimization problem where the runtime of the best-known classical algorithm for the problem is $2^N$ and which can be solved with a circuit of $\bigOtilde(N)$ quantum gates.
\end{restatable}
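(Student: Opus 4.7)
The plan is to exhibit an explicit family of Optimal Polynomial Intersection (OPI) instances over $\GF(2^b)$ and show that (i) under the best currently known classical attacks the instance solution requires $2^N$ work, while (ii) the DQI quantum algorithm applied to the same instance fits in $\widetilde{O}(N)$ gates. Concretely, I would parameterize a family by $(n,k,b)$ with $n=2^b-1$ (the full-length Reed--Solomon block over $\GF(2^b)$) and the OPI degree bound $k$ chosen as a constant fraction of $n$. The security parameter $N$ would be taken to be the logarithm of the size of the candidate space explored classically, so that the best-known classical running time is $2^N$ by definition of the attack model; the overall input size is then $\Theta(N)$ up to polylogarithmic factors.

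Next I would establish the classical lower bound. This is not a proof of unconditional hardness but an analysis of all currently-known attacks: brute-force enumeration of degree-$k$ polynomials, Guruswami--Sudan / Koetter--Vardy list-decoding of RS codes, Prange-style information-set decoding of the dual code, and the new attacks introduced later in the paper. For the chosen $(n,k,b)$ I would verify that each of these attacks has running time $\geq 2^N$, and that their combination offers no better trade-off. This gives an upper bound of $2^N$ on classical hardness under present knowledge, which is the sense of ``best-known'' in the theorem statement.

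For the quantum upper bound I would follow the DQI pipeline of \cite{jordan2024optimizationdecodedquantuminterferometry} specialized to OPI, whose dual is RS. The algorithm consists of: preparing a weighted Dicke-like superposition over error patterns of bounded weight; coherently computing the RS syndrome from the parity-check matrix, which is a constant number of $\GF(2^b)$ arithmetic operations per codeword coordinate and therefore $\widetilde{O}(n b)=\widetilde{O}(N)$ Toffolis; running a reversible RS decoder in superposition to recover the error locator and evaluator polynomials; uncomputing the syndrome and Dicke register using the decoder output; and applying a Fourier transform over $\GF(2^b)^k$ on the remaining register. Standard primitives (multiplication, inversion, FFT over $\GF(2^b)$) each cost $\widetilde{O}(b)$ gates, and are invoked $O(n)$ times, so every block except the decoder is manifestly $\widetilde{O}(N)$.

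The main obstacle, and the core technical content that the remainder of the paper supplies, is showing that the reversible RS decoder itself fits in $\widetilde{O}(N)$ gates. The decoding bottleneck is the Extended Euclidean Algorithm on polynomials of degree $\Theta(n)$ over $\GF(2^b)$; a naive reversible implementation would dominate the cost and could blow up both space and time. I would therefore invoke the optimized EEA circuits developed in the paper (implicit Bézout-coefficient representation and the in-place architecture that achieves leading-order space $2nb$), which collectively bring the decoder to $\widetilde{O}(nb)=\widetilde{O}(N)$ gates. Combining this with the $\widetilde{O}(N)$ bounds on the other DQI stages yields a total quantum circuit of $\widetilde{O}(N)$ gates, matching the $\Omega(N)$ lower bound of \cite{Bravyi_2016} up to polylogarithmic factors and completing the theorem.
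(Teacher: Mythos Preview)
Your high-level architecture is right---the paper does instantiate the theorem via OPI over binary extension fields, with DQI on the quantum side and a Prange-type analysis on the classical side. But two concrete pieces of the argument are either missing or misattributed, and both matter for the $\widetilde O(N)$ claim.

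First, you never say what the target sets $F_i$ are, and this is not a detail you can defer. On the quantum side, for arbitrary $F_i$ the constraint-encoding gates $G_i$ in Stage~2 each cost $\widetilde O(q)=\widetilde O(m)$ gates (generic state preparation over $\F_q$), so Stage~2 alone is $\widetilde O(m^2)$ and the theorem fails. The paper fixes this by taking the $F_i$ to be (randomly twisted) Maiorana--McFarland bent-function sets, for which $G_i$ is a layer of CZ gates conjugated by CNOTs and costs $\widetilde O(1)$; this is condition~(2) of \cref{thm:quantumEasiness}. On the classical side, the hardness bound is not a generic Prange analysis but relies on the specific affine-intersection inequalities \eqref{eq:upper_bound_on_a_intersection_sk} for the bent set $S_k$, which feed the Extended Prange knapsack and yield the $\exp(\Omega(m))$ lower bound of \cref{thm:classicalHardness} via Hoeffding. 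Without committing to this particular family, neither side of the theorem goes through.

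Second, you locate the $\widetilde O(N)$ decoder bound in the wrong construction. The in-place Dialog / implicit-B\'ezout circuits of \cref{sec:eea_circuit} achieve $2nb$ qubits but their gate cost is $\Theta(n^2)$ field multiplications (see Tables~\ref{tbl:prior_art_eea} and~\ref{tbl:decoder_costs}); they are constant-factor optimizations for the finite-size resource estimates, not the vehicle for the asymptotic theorem. The $\widetilde O(m)$ decoder that the theorem actually uses is \cref{lem:fast_algebra}: the \emph{Fast} Extended Euclidean Algorithm built on $\widetilde O(m)$ polynomial multiplication, together with fast multipoint evaluation---classical computer-algebra results made reversible. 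Similarly, the Dicke-state step is not automatically $\widetilde O(m)$; the standard construction is $O(\ell m)=O(m^2)$, and the paper supplies a separate divide-and-conquer unranking scheme (\cref{sec:fast_dicke_state}) to bring it down. Your pipeline accounting should invoke these asymptotic tools rather than the finite-size EEA architectures.
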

We prove this theorem in \cref{sec:optimal_speedup}. We compare this against prior speedups based on Shor's algorithm in \cref{tbl:verifiable_advantage_classical_hardness}.
\begin{table}[ht]
    \begin{center}
    \begin{tabular}{| p{4.35cm} | p{3.7cm}| p{1.8cm} |p{3.38cm}|}
    \hline
    \textbf{Problem} & \textbf{Classical Runtime} & \textbf{Quantum Gate Cost} & \textbf{Gate Cost (for classical runtime $2^N$)} \\
    \hline \vspace{0.25em}
    Factoring & \vspace{0.25em} $2^{\bigOtilde(n^{1/3})}$ (GNFS) & \vspace{0.25em}$\bigOtilde(n^2)^\dagger$ & \vspace{0.25em} $\bigOtilde(N^6)$ \\
    Jacobi Factoring & $2^{\bigOtilde(n^{1/3})}$ (GNFS) & $\bigOtilde(n)$ & $\bigOtilde(N^3)$\\
    Elliptic Curve Cryptography & $2^{\bigOtilde(n)}$ (Pollard Rho) & $\bigOtilde(n^2)$ & $\bigOtilde(N^2)$\\
    \hline \vspace{0.25em}
    \textbf{OPI} & \vspace{0.25em}$2^{n}$ (Extended Prange) & \vspace{0.25em}$\bigOtilde(n)$ & \vspace{0.25em}$\bigOtilde(N)$\\
    \textbf{Best Possible} & $2^n$ & $n$ & $\Omega(N)$ \\
    \hline
    \end{tabular}
    \end{center}
    \caption{
    Asymptotic costs for verifiable quantum advantage with classical hardness $2^N$.
    ${}^\dagger$By \cite{regev2025efficient}, one can factor $n$ bit integers by running $\sqrt{n}+4$ quantum circuits of $\bigOtilde(n^{3/2})$ gates. In this case, the overall circuit size is still $\bigOtilde(n^2)$, but even counting a single quantum circuit, the quantum gate cost for classical runtime $2^N$ is $N^{4.5}$.
    % As a practical caveat, we note that the constant factors matter quite a bit and the scaling of Shor's algorithm is $n^3$ for all practically-relevant key sizes. 
    % for which fast multiplication algorithms are actually more costly.
}\label{tbl:verifiable_advantage_classical_hardness}
\end{table}

To understand how DQI achieves this speedup, we must formalize the class of problems it addresses. DQI provides a framework for approximating solutions to constraint satisfaction problems, specifically max-LINSAT, where its efficiency is tied to the efficiency of reversibly decoding a related error-correcting code.

\begin{definition}[max-LINSAT \cite{jordan2024optimizationdecodedquantuminterferometry}]
Let $\F_q$ be a finite field. Given an $m\times n$ matrix $B$ over $\F_q$ (with $m>n$), and for each constraint $i = 1,2,\dots, m$, a subset $F_i \subset \F_q$. The max-LINSAT problem is to find an assignment $x \in \F_q^n$ that maximizes the number of satisfied constraints, where the $i$-th constraint is satisfied if $\sum_{j=1}^n B_{ij}x_j \in F_i$.
\end{definition}

The DQI algorithm addresses max-LINSAT by defining a related objective function, $f(x)$, as the number of satisfied constraints minus the number of unsatisfied constraints. This can be expressed as:
\begin{equation}
    f(x) = \sum_{i=1}^{m} f_i\left(\sum_{j=1}^{n} B_{ij}x_j\right), \quad \text{where} \quad f_i(y) = 
    \begin{cases} 
        +1 & \text{if } y \in F_i \\
        -1 & \text{if } y \notin F_i 
    \end{cases}.
\end{equation}

Maximizing $f(x)$ is equivalent to maximizing the number of satisfied constraints. The DQI algorithm works by preparing a quantum state $|P(f)\rangle = \sum_x P(f(x))|x\rangle$, where $P$ is a polynomial of degree $\ell$ designed to enhance the amplitudes of states $\ket{x}$ where $f(x)$ is large. The core insight of DQI is that the preparation of this state can be reduced to a decoding problem on the dual code $C^\perp = \{d \in \mathbb{F}_{q}^{m}: B^T d = 0\}$. The algorithm involves creating a superposition of errors $e$ and their syndromes $B^T e$. To achieve the necessary interference that amplifies good solutions, the error register $|e\rangle$ must be coherently uncomputed, which necessitates a reversible quantum implementation of a decoder for $C^\perp$.
The performance of DQI is directly tied to the error-correction capability of this decoder. The degree $\ell$ of the enhancing polynomial $P$ corresponds to the maximum number of errors the decoder must correct. A key result of the DQI framework \cite{jordan2024optimizationdecodedquantuminterferometry} is the semicircle law, which provides a rigorous performance guarantee based on the decoding capability.

\begin{theorem}[DQI Semicircle Law (Informal) \cite{jordan2024optimizationdecodedquantuminterferometry}]
Given a max-LINSAT instance where the allowed sets $F_i$ have size $r$ over a field of size $q$. If the dual code $C^\perp$ can be efficiently decoded up to $\ell$ errors, DQI can sample solutions that satisfy an expected fraction of constraints $\langle s \rangle/m$ approaching:
\begin{equation}
\left(\sqrt{\frac{\ell}{m} \left(1-\frac{r}{q}\right)} + \sqrt{\left(1-\frac{\ell}{m}\right)\frac{r}{q}}\right)^2.
\end{equation}
\end{theorem}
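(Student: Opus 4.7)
The plan is to reduce the problem of maximizing the expected satisfied-fraction over polynomials $P$ of degree $\le \ell$ to diagonalizing a symmetric tridiagonal matrix, then analyze its top eigenvalue asymptotically as $m \to \infty$ with $\ell/m$ held fixed.

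First, I would expand $f(x)^k$ using additive characters of $\F_q$. Writing each $f_i$ as $f_i(y) = (2r/q - 1) + \frac{2}{q}\sum_{\chi \ne 1} \hat{f}_i(\chi)\,\chi(y)$ and substituting into $f(x) = \sum_i f_i(\sum_j B_{ij} x_j)$, the expansion of $f(x)^k$ becomes a sum indexed by $k$-tuples of (constraint, nontrivial character) pairs, equivalently by error vectors $e \in \F_q^m$ of Hamming weight at most $k$. Summing the corresponding character $\chi_e(Bx)$ over $x$ then forces $B^T e = 0$, i.e.\ only codewords of $C^\perp$ survive in the computational-basis expansion of $|P(f)\rangle$. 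Hence the DQI state can be prepared as a superposition over errors $|e| \le \ell$ labeled by their syndromes $B^T e$, and the hypothesized $\ell$-error decoder for $C^\perp$ is exactly what is needed to coherently erase the $|e\rangle$ register and leave the desired amplitude profile on $|x\rangle$.

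Second, I would observe that by symmetry (the $F_i$'s enter only through their common cardinality $r$), the expected satisfied-fraction $\langle s \rangle / m$ depends on $P$ only through $\ell + 1$ weight-class coefficients $a_0, \ldots, a_\ell$, where $a_k$ collects amplitudes of errors of weight $k$. Expressing $\langle s \rangle / m = \frac{1}{2} + \frac{1}{2m}\,\langle P(f)|f|P(f)\rangle / \lVert P(f)\rVert^2$ in this basis, a direct character-orthogonality calculation shows that $f$ acts tridiagonally: on a weight-$k$ error it either satisfies or unsatisfies one additional constraint ($\Delta |e| = \pm 1$) or leaves the weight unchanged, yielding off-diagonal entries scaling as $\sqrt{k(m-k+1)(r/q)(1-r/q)}$ and a diagonal linear in $k$ with slope $(1 - 2r/q)/m$. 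Maximization of $\langle s \rangle / m$ over $P$ thus becomes an eigenvalue problem for a finite Jacobi matrix.

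Third, I would extract the asymptotic top eigenvalue. Rescaling and taking $m \to \infty$ with $\ell/m$ fixed, the Jacobi operator converges to one of semicircular type whose spectral edge is exactly $\bigl(\sqrt{(\ell/m)(1 - r/q)} + \sqrt{(1 - \ell/m)(r/q)}\bigr)^2$. This edge can be read off via the Krawtchouk/Hahn orthogonal-polynomial recursion implicit in the tridiagonal entries, or equivalently via a moment calculation (the hypergeometric moments match those of a shifted semicircle). The Perron eigenvector of the finite matrix then gives the optimal $P$, and the ``approaching'' qualifier in the theorem statement quantifies the $o(1)$ gap between the finite-$m$ maximum and this asymptotic edge.

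The main obstacle will be the spectral computation of step three: verifying that the tridiagonal matrix with the stated entries has its largest eigenvalue converging cleanly to the $(\sqrt{a} + \sqrt{b})^2$ closed form under the rescalings $a = (\ell/m)(1 - r/q)$ and $b = (1 - \ell/m)(r/q)$. This requires identifying the correct orthogonal-polynomial family whose spectral edge admits this algebraic simplification --- the very simplification that motivates the name ``semicircle law.'' By contrast, steps one and two are essentially Fourier bookkeeping on $\F_q$ combined with combinatorial identities counting weight-changing transitions of errors, and should pose no fundamental difficulty beyond careful normalization.
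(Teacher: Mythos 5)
The paper you are reading does not prove this theorem: it is explicitly labeled ``(Informal)'' and attributed to the original DQI paper by Jordan et al., which is cited as the source. The statement appears here only to motivate the reversible-decoder bottleneck that drives the rest of the work. Consequently there is no ``paper's own proof'' to compare against in this document.

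That said, your sketch is a faithful outline of how the cited DQI paper does establish the semicircle law, and the three steps you describe are the right ones. The Fourier bookkeeping in step one (expanding $f_i$ into its character sum, then observing that summing $\chi_e(Bx)$ over $x$ annihilates everything except $e\in C^\perp$, so the weight-$\le\ell$ error shell parameterizes the state and the $\ell$-error decoder is precisely what coherently erases the error register) matches the structure of the DQI state preparation. The reduction in step two to a symmetric tridiagonal quadratic form indexed by error weight is also correct: the operator $f$ shifts Hamming weight by at most one, and symmetry collapses the degrees of freedom to the $\ell+1$ weight-class coefficients. Your off-diagonal and diagonal entries are each consistent with the correct Jacobi matrix, though you have quoted them at two different normalizations: $\sqrt{k(m-k+1)\,(r/q)(1-r/q)}$ is the off-diagonal of the matrix for $\langle f\rangle$, whereas the slope $(1-2r/q)/m$ is the diagonal slope of the rescaled matrix for $\langle s\rangle/m = \tfrac12 + \tfrac{1}{2m}\langle f\rangle$; pick one convention and carry it through. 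Step three is also where the real technical content lives (you flagged this yourself), but the local-spectral-edge heuristic $a_k + 2|b_k|$ evaluated near $k=\ell$ indeed reproduces $\bigl(\sqrt{(\ell/m)(1-r/q)}+\sqrt{(1-\ell/m)(r/q)}\bigr)^2$ after dividing through by $m$, so your plan would, if executed rigorously, land on the stated formula. In short: no gap in the strategy, only normalization housekeeping to tidy up, and the approach coincides with that of the cited reference rather than with anything proved in this paper.
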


This theorem formalizes the intuition that a better decoder (a larger correctable error fraction $\ell/m$) leads to a better optimization result. Consequently, the efficiency of the reversible decoder dominates the resource requirements of the entire DQI algorithm.

In this work, we focus on constructing efficient quantum circuits for DQI applied to the Optimal Polynomial Intersection (OPI) problem, identified in \cite{jordan2024optimizationdecodedquantuminterferometry} as a candidate for superpolynomial quantum speedup.

\begin{definition}[Optimal Polynomial Intersection (OPI)]
Let $\mathbb{F}_q$ be a finite field and $n < q-1$. Given $m=q-1$ subsets $F_y \subset \mathbb{F}_q$ for each $y \in \mathbb{F}_q^*$, find a polynomial $Q \in \mathbb{F}_q[y]$ of degree at most $n-1$ that maximizes the objective function: $f_\text{OPI}(Q) = |\{y \in \mathbb{F}_q^* : Q(y) \in F_y\}|$.
\end{definition}

When OPI is cast as max-LINSAT, the constraint matrix $B$ is a Vandermonde matrix, implying the dual code $C^\perp$ is a Reed-Solomon (RS) code. For OPI, DQI can achieve approximation ratios that appear beyond the reach of known polynomial-time classical algorithms \cite{jordan2024optimizationdecodedquantuminterferometry}. RS codes possess efficient classical decoders, such as the Berlekamp-Massey decoder \cite{berlekamp2015algebraic} or the Extended Euclidean Algorithm (EEA) \cite{SUGIYAMA197587}. Our primary contribution is the development of highly optimized, reversible quantum circuits for syndrome decoding of Reed-Solomon codes using Extended Euclidean Algorithm-based decoders. We introduce several key strategies to minimize the quantum resources required:

\begin{enumerate}
    \item \textbf{Analysis of OPI over Binary Extension Fields:} We shift the analysis from prime fields to binary extension fields, $\GF(2^b)$. 
    This choice significantly reduces the cost of the underlying quantum arithmetic (leveraging techniques like Karatsuba multiplication \cite{vanhoof2020spaceefficientquantummultiplicationpolynomials, maslov2025asymptotic} and Itoh-Tsujii inversion \cite{ITOH198921, maslov2025asymptotic}). 
    Crucially, we analyze the classical hardness in this setting, confirming that the problem remains intractable against known classical attacks like Prange's algorithm \cite{prange1962use} and a novel variant, the Extended Prange's algorithm (XP), tailored for extension fields.

    \item \textbf{Optimized Quantum Circuits for the Extended Euclidean Algorithm:} We introduce two distinct compilation strategies for the EEA, each achieving minimal qubit overhead while being tailored for different algorithmic requirements.
    These strategies are general and shall offer substantial improvements for other quantum algorithms utilizing the EEA, such as those in elliptic curve cryptography \cite{kaye2004optimizedquantumimplementationelliptic, roetteler2017quantumresourceestimatescomputing, häner2020improvedquantumcircuitselliptic} and DQI with EEA-based decoders for other codes like algebraic geometry codes \cite{GuJordan2025Algebraic} and RS codes with prime fields \cite{jordan2024optimizationdecodedquantuminterferometry}.
    
    \begin{itemize}
        \item For scenarios requiring \emph{explicit} access to the Bézout coefficients, we present an improved synchronized circuit for the classic Euclidean algorithm \cite{proos2004shorsdiscretelogarithmquantum, kaye2004optimizedquantumimplementationelliptic}. By storing quotients in-place within shared registers and merging arithmetic cycles, we deterministically achieve a leading-order space complexity of $2nb$ qubits and substantially reduce gate counts.
        \item For scenarios where \emph{implicit} access is sufficient, we introduce the novel \emph{Dialog} representation. This data structure records the execution trace of a constant-time EEA, allowing our circuits to capitalize on the low gate costs of modern, division-free EEA algorithms \cite{cryptoeprint:2019/266} without incurring their typically large qubit overhead \cite{häner2020improvedquantumcircuitselliptic, cryptoeprint:2020/1296}.
    \end{itemize}

    \item \textbf{Holistic Reversible Reed-Solomon Decoder Design:} We construct end-to-end reversible quantum circuits for the full RS decoder that integrate our optimized EEA modules. Our design is compatible with both the explicit and implicit EEA approaches and fully accounts for the resource costs of the subsequent decoding stages---Chien Search and Forney's algorithm---when operating on the compact, shared-register data structures produced by our EEA implementations.
\end{enumerate}

We synthesize these techniques to construct end-to-end quantum circuits and provide detailed resource estimates using Qualtran \cite{harrigan2024expressinganalyzingquantumalgorithms}. 
Our results demonstrate that classically intractable instances of OPI (requiring $>10^{23}$ classical trials) can be solved with modest quantum resources. 
For instance, an $(m=4095, n=70, b=12)$ instance requires approximately $5.72 \times 10^6$ Toffoli gates and $1885$ logical qubits. 
This is roughly $1000\mathbf{x}$ fewer Toffolis than that required for factoring 2048-bit RSA integers \cite{gidney2025factor2048bitrsa}, suggesting that DQI may offer a compelling near-term path to practical, verifiable quantum advantage in optimization.
We also provide a physical resource estimate showing that the $(m=4095, n=70, b=12)$ OPI instance can be solved using eight hundred thousand physical qubits and $1$ hour of runtime under standard assumptions for superconducting architectures: a square grid of qubits with nearest
neighbor connections, a uniform gate error rate of 0.1\%, a surface code cycle time of 1
microsecond, and a control system reaction time of 10 microseconds.

% The remainder of the paper is organized as follows. \cref{sec:methods} details the improved DQI construction, the specifics of OPI over binary fields, the constant-time EEA, and our novel Dialog representation and in-place quantum architecture. \cref{sec:classical_attacks} analyzes the classical hardness of the OPI instances. \cref{sec:results} presents the detailed logical resource estimates. We conclude in \cref{sec:conclusion}.

\section{Methods for Optimized Implementation}\label{sec:methods}

We begin by presenting an improved, space-efficient construction of the DQI quantum circuit. 
We then present new techniques for compiling the Extended Euclidean Algorithm (EEA) for the two regimes---first, where one needs to explicitly compute the Bézout coefficients in memory, and second, where an implicit access to the Bézout coefficients is sufficient. For both scenarios, our goal is to find a construction that minimizes the qubit counts. 
For the first case, we make several improvements to the classical reversible EEA construction by \cite{kaye2004optimizedquantumimplementationelliptic, proos2004shorsdiscretelogarithmquantum}, resulting in lower qubit and gate counts. 
For the second case, we first formalize the idea of having implicit access to the Bézout coefficients and show how one can take advantage of the low gate counts for constant-time division-free variants of EEA \cite{Stein1967, cryptoeprint:2019/266} while avoiding the high ancilla overhead. 
In both cases, we achieve the theoretical minimum leading order space complexity of $2nb + \mathcal{O}(\log_{2}(n))$. 
We believe these improved techniques for compiling the Extended Euclidean Algorithm will be useful beyond DQI, especially in the context of Elliptic Curve Cryptography \cite{häner2020improvedquantumcircuitselliptic, litinski2023compute256bitellipticcurve}. 
We then specialize the discussion to the Optimal Polynomial Intersection (OPI) problem over binary extension fields, motivated by the fact that arithmetic circuits over binary extension fields are significantly cheaper to compile. 
In the end, we show how the decoding problem for Reed-Solomon codes can be solved using the Extended Euclidean Algorithm, along with other subroutines like Chien Search \cite{Chien1964} and Forney's algorithm \cite{Forney1965}. We present optimized quantum circuits to account for the cost of these subroutines.
\cref{tbl:variables} provides a reference for the key parameters used throughout our analysis.

\begin{table}[H]
    \centering
    \caption{Key parameters and their dual roles in the DQI framework \cite{jordan2024optimizationdecodedquantuminterferometry}, connecting the optimization problem with the corresponding coding theory problem. 
    For Reed Solomon codes, the field size $q \geq m$. For our resource estimates, $m = q - 1$ and $\ell \approx n/2$.
    }
    \label{tbl:variables}
    \renewcommand{\arraystretch}{1.4} % Adds vertical padding for readability
    \begin{tabular}{| c | p{8cm} | p{8cm} |}
    \hline
    \textbf{Symbol} & \textbf{Role in Optimization (max-LINSAT/OPI)} & \textbf{Role in Coding Theory ($C^\perp$)} \\
    \hline
    \hline
    $q$ & Size of the finite field $\F_q$ over which the problem is defined. & Alphabet size of the code. \\
    \hline
    $b$ & Bit-length of the field elements, where $b = \lceil\log_2 q\rceil$. & Bit-length of the code symbols. \\
    \hline
    $m$ & Number of constraints in the optimization problem. & Block length of the dual code $C^\perp$. \\
    \hline
    $n$ & Number of variables in the optimization problem. & Dimension of the primal code $C$; length of the syndrome of dual code $C^\perp$. \\
    \hline
    $\ell$ & Maximum number of errors DQI is configured to handle, which sets the degree of the enhancing polynomial $P$. & The error-correction capability (number of correctable errors) of the decoder for $C^\perp$. \\
    \hline
    $r$ & Size of the allowed sets $F_i$ for each constraint. & (Not a standard coding parameter, but influences the problem instance). \\
    \hline
    $B$ & The $m \times n$ constraint matrix defining the instance. & Generator matrix of the primal code $C = \{x B : x \in \F_q^n\}$. \\
    \hline
    $B^\intercal$ & The $n \times m$ matrix used to compute the syndrome. & Parity-check matrix of the dual code $C^\perp = \{c \in \F_q^m : B^T c = 0\}$. \\
    \hline
    \end{tabular}
\end{table}

\subsection{The DQI Quantum Circuit: An Improved Construction for Qubit Efficiency}\label{sec:dqi_quantum_circuit}

The goal of the Decoded Quantum Interferometry (DQI) circuit is to efficiently prepare the state $|P(f)\rangle = \sum_x P(f(x))|x\rangle$, where $P$ is an appropriately normalized degree-$\ell$ polynomial. 
In the original construction\cite[Section-8]{jordan2024optimizationdecodedquantuminterferometry}, the algorithm requires simultaneous instantiation of an $mb$-sized error register, to hold a superposition of error patterns $e \in \F_q^m$, and an $nb$-sized syndrome register to hold the syndrome values $s = B^T e$. 
This leads to a total space complexity of $(m + n)b$ plus the ancilla overhead due to reversible decoding. 
Note that the decoding problem is defined on an input of size $nb$ (the length of the syndrome), and hence for the regime where $m \gg n$, the multiplicative $mb$ qubit overhead in the DQI circuit is prohibitive. 

We present an optimized circuit construction that significantly reduces qubit overhead by reformulating the Dicke state preparation, syndrome computation, and reversible decoding steps (stages 1, 2 and 3). 
The Dicke states in the DQI circuit are used to encode the locations of $\ell < \frac{n}{2}$ errors in the length $m$ codeword. Instead of using a dense representation consisting of $m$ qubits, we use a sparse representation consisting of $\ell \cdot \lceil \log_2{m} \rceil \leq \ell \cdot b$ qubits to encode this information. We show how to efficiently prepare Sparse Dicke States in \cref{sec:sparse_dicke_state}.
Next, instead of generating the entire $mb$-qubit error state $|e\rangle$ to encode the values of errors at each of the $m$ locations in the codeword, we employ a sequential approach utilizing Measurement-Based Uncomputation (MBU) \cite{Jones2013, Gidney2018}. 
We iterate through the $m$ constraints, generate one error symbol $e_i$ (of size $b$ qubits) at a time, update the syndrome register with its contribution, and immediately uncompute $e_i$ via measurement in the X-basis. 
This leads to phase errors that we later fix as part of the reversible decoding step, where we sequentially generate each decoded error term $e_i$ and apply a phase fixup using the measurement result $c_i$. This allows us to reuse a single $b$-qubit ancilla register for all error terms during state preparation. 
Using Sparse Dicke state preparation and sequential computation of the error terms, the ancilla cost for preparing the $nb$-qubit syndrome register reduces to $nb$ qubits. We will later show how to perform the reversible decoding on the $nb$-qubit syndrome register using $nb + \mathcal{O}(\log_{2}(n))$ ancilla qubits, thus achieving a total space cost of $2nb + \mathcal{O}(\log_{2}(n))$ qubits.

The following description of the DQI quantum circuit is for the general max-LINSAT case over a Galois field $\F_q$. The construction proceeds through four main stages. 
The evolution of the quantum state across these stages is summarized below, utilizing an error locator register ($\ell \cdot b$ qubits), an error value register ($b$ qubits), and a syndrome register ($n \cdot b$ qubits). We omit normalization factors for clarity.

\begin{align*}
&|0\rangle_{\ell b} |0\rangle_{b} |0\rangle_{nb} \\
&
\xrightarrow[\text{Sparse Dicke State Preparation}]{\text{Stage 1}} 
\sum_{k=0}^{\ell} \hat{w_k} \left(\sum_{\substack{1 \le j_1 < j_2 < \dots < j_{k} \le m \\ j_{k+1} \dots j_{\ell} = 0}} \ket{j_1}_b \ket{j_2}_b \cdots \ket{j_\ell}_b\right) |0\rangle_b |0\rangle_{nb}
\left(
\text{ where }\hat{w}_k = w_k \binom{m}{k}^{-1/2}
\right)
\\
&\left\{
\begin{aligned}
&\xrightarrow{\text{2a: Check if $i \in [j_1, \dots j_\ell]$ and store the result in qubit $y_i$}} 
\sum_{k=0}^{\ell} \hat{w}_k \ket{SD^{m}_{k}}_{lb} |y_i\rangle\sum_{e_i\in\mathbb{F}_{q}}\tilde{g}_{i}(e_i)\ket{e_i}_b |s\rangle_{nb} \\
&\xrightarrow{\text{2b: Apply } G_i \text{ on qubit } y_i \text{ to generate error term } e_i} 
\sum_{k=0}^{\ell} \hat{w}_k \ket{SD^{m}_{k}}_{lb} \sum_{e_i\in\mathbb{F}_{q}}\tilde{g}_{i}(e_i)\ket{e_i}_b |s\rangle_{nb} \\
&\xrightarrow{\text{2c: Update syndrome register } \ket{s} \text{ with } B^T_i \cdot e_i} 
\sum_{k=0}^{\ell} \hat{w}_k \ket{SD^{m}_{k}}_{lb} \sum_{e_i\in\mathbb{F}_{q}}\tilde{g}_{i}(e_i)\ket{e_i}_b  |s + B^T_i \cdot e_i\rangle_{nb}\\
&\xrightarrow{\text{2d: MBU on error register } \ket{e_i} \text{ (record } c_i \text{)}} 
\sum_{k=0}^{\ell} \hat{w}_k \ket{SD^{m}_{k}}_{lb} \sum_{e_i\in\mathbb{F}_{q}}\tilde{g}_{i}(e_i)(-1)^{e_i \cdot c_i}  |s + B^T_i \cdot e_i\rangle_{nb}
\\
\end{aligned}
\right\} \text{Repeat } i=1 \dots m \\
&\xrightarrow[\text{Syndrome Computation}]{\text{Stage 2}} 
\sum_{k=0}^{\ell} \hat{w}_k \sum_{|e|=k} \left(\prod_{i=1}^{m} \tilde{g}_{i}(e_i)\right) (-1)^{c \cdot e} |B^T e\rangle_{nb} \\
&\xrightarrow[\text{Reversible Decoding}]{\text{Stage 3}} 
\sum_{k=0}^{\ell} \hat{w}_k \sum_{|e|=k} \left(\prod_{i=1}^{m} \tilde{g}_{i}(e_i)\right) |B^T e\rangle_{nb} \equiv |\widetilde{P(f)}\rangle_{nb} \\
&\xrightarrow[\text{IQFT + Measurement}]{\text{Stage 4}} \sum_x P(f(x))|x\rangle_{nb} \equiv |P(f)\rangle_{nb}
\end{align*}

% \begin{table}[H]
%     \centering
%     \begin{tabular}{|l|c|c|c|}
%     \hline
%     \textbf{Stage} & \textbf{Gate cost} & \textbf{Qubit Cost} \\
%     \hline
%     \hline
%     Sparse Dicke State Preparation & $m\cdot \ell + \ell^2b^2$ & $2\cdot \ell \cdot b$\\
%     \hline
%     Sequential Syndrome Computation & $m\cdot \ell + \ell^2b^2$ & $2\cdot \ell \cdot b$\\
%     \hline
%     Phase Fixups via Reversible Decoding & $m\cdot \ell + \ell^2b^2$ & $2\cdot \ell \cdot b$\\
%     \hline
%     IQFT & $m\cdot \ell + \ell^2b^2$ & $2\cdot \ell \cdot b$\\
%     \hline
%     \end{tabular}
%     \caption{Costs for each stage of DQI circuit}
%     \label{tbl:dqi_circuit_costs}
% \end{table}

\begin{itemize}
    \item \textbf{Stage 1: Sparse Dicke State Preparation:}
    We prepare a superposition $\sum_{k=0}^{l} w_k |k\rangle$, using classically pre-computed coefficients $w_k$ which define the optimal enhancing polynomial $P$ \cite{gosset2024quantumstatepreparationoptimal, Low2024, berry2025rapid}. 
    This is used to prepare the $\ell \cdot b$-qubit error locator register into the corresponding superposition of Sparse Dicke states:
    
    \begin{equation}
        \sum_{k=0}^{l}w_k \binom{m}{k}^{-1/2}\left(\sum_{\substack{1 \le j_1 < j_2 < \dots < j_{k} \le m \\ j_{k+1} \dots j_{\ell} = 0}} \ket{j_1}_b \ket{j_2}_b \cdots \ket{j_\ell}_b\right)
    \end{equation}

    In \cref{sec:sparse_dicke_state}, we describe a way to prepare Sparse Dicke States using $2\cdot k \cdot \log_2{m}$ qubits and $\mathcal{O}(m.k + k^2b^2)$ gates. The sparse construction is useful in reducing the qubit counts for instances where $n \ll m$.

    \item \textbf{Stage 2: Sequential Syndrome Computation and MBU:}
    This stage translates the error locator register into a superposition of syndromes, using our space-efficient sequential approach.

    \textit{Defining the Constraint Encoding Gates $G_i$:}
    The operations in this stage depend on the specific constraints of the optimization problem. 
    Recall the objective function $f(\mathbf{x}) = \sum_{i=1}^m f_i(\mathbf{b}_i \cdot \mathbf{x})$ with $f_i:\mathbb{F}_q \to \{+1,-1\}$. 
    As detailed in \cite[Section 8.2.1]{jordan2024optimizationdecodedquantuminterferometry}, we find it convenient to work in terms of $g_i$, which we define as $f_i$ shifted and rescaled such that:
    \begin{equation}
        g_i(x) := \frac{f_i(x) - \bar{f}}{c},
    \end{equation}
    where $\bar{f}$ is the average value of $f_i$ over $\mathbb{F}_q$ and $c$ is a normalization constant. 
    This standardization ensures that the Fourier transform of $g_i$, denoted $\tilde{g}_i(e)$, vanishes at $e=0$ and is normalized ($\sum_e |\tilde{g}_i(e)|^2 = 1$).
    We then define an isometry $G_i$, which acts on a singe qubit $y_i$ storing whether an error occurs at index $i$ or not, and maps it to a superposition over $b$-qubit error terms that encode these Fourier coefficients:
    \begin{equation}
        G_i|0\rangle = |0\rangle_b, \quad G_i|1\rangle = \sum_{e \in \mathbb{F}_q, e\neq 0} \tilde{g}_i(e)|e\rangle_b.
    \end{equation}

    \textit{Sequential Syndrome Computation:}
    We now iterate $i$ from $1$ to $m$, utilizing a single $b$-qubit ancilla register:
    \begin{itemize}
        \item[(2a)] \textbf{Check if marked in Sparse Dicke State:} Check whether index $i$ is contained in the Sparse Dicke state and record the outcome in a qubit $y_i$. 
        Naively, this requires performing a $b$-bit comparison with each of the $\ell$ registers per iteration, requiring a total of $m \cdot \ell \cdot b$ gates across $m$ iterations. 
        However, since we are sequentially comparing each of the $\ell$ registers in the Sparse Dicke State with consecutive classical constants $i=1, 2, \dots, m$, we can use a unary iteration \cite{Babbush_2018, Khattar_2025} circuit to merge adjacent comparisons and get a total cost of $m \cdot \ell$ gates across all $m$ iterations.

        \item[(2b)] \textbf{Error Generation:} Apply $G_i$ on the $i$-th qubit of the mask register $\ket{y_i}$ to temporarily generate the $i$'th error term $\ket{e_i}_b$ using the $b$-qubit ancilla register. 
        % \tk{For arbitrary state prep, we need to load $q$ elements, each of size $\log_2{b/\epsilon}$. Doing this $m$ times leads to a total cost of $m^2$ Toffolis using unary iteration QROM (assuming $q=m$).}
        \item[(2c)] \textbf{Syndrome Update:} Compute the contribution of $e_i$ to the syndrome register $\ket{s}$ by coherently adding $b_i \cdot e_i$ to $\ket{s}$ (where $b_i$ is the $i$-th column of $B^T$).
        \item[(2d)] \textbf{Measurement-Based Uncomputation:} Immediately uncompute the ancilla $\ket{e_i}$ using MBU \cite{Jones2013, Gidney2018}. The $b$ qubits are measured in the X basis, recording the classical outcome $c_i$. This resets the ancilla for the next iteration but introduces a known phase kickback $(-1)^{c_i \cdot e_i}$.
    \end{itemize}
    At the end of the iteration, we simply do a measurement-based uncomputation of the Sparse Dicke state and perform phase fixups corresponding to the error locations during the reversible decoding step.
    By reusing a single ancilla register across all $m$ iterations, we avoid the $mb$-qubit overhead of the original DQI construction. The final state of the system is
    \begin{equation}
        \sum_{k=0}^{l} \frac{w_k}{\sqrt{\binom{m}{k}}} \sum_{|e|=k} \left(\prod_{i=1}^{m} \tilde{g}_{i}(e_i) 
        (-1)^{c_i . e_i}\right)
        |B^T e\rangle_{nb}.
    \end{equation}

    \item \textbf{Stage 3: Phase Fixups via Reversible Decoding:}
    In the original DQI construction \cite{jordan2024optimizationdecodedquantuminterferometry}, a reversible decoder was used to uncompute the explicit error register $|e\rangle$. 
    In our construction, the error register was already uncomputed via MBU in Stage 2. 
    However, we must now correct the phase errors $(-1)^{c\cdot e}$ introduced by the measurements. 
    Here, we perform reversible decoding using the syndrome register $|s\rangle = |B^T e\rangle$. 
    We sequentially compute each decoded error term $e_{i}$ and use the measurement results $c_i$ to fix the phase error $(-1)^{c_i . e_i}$. 
    For the Optimal Polynomial Intersection (OPI) problem, this requires a reversible implementation of a decoder for Reed-Solomon codes, like the Berlekamp-Massey algorithm \cite{berlekamp2015algebraic} or the Extended Euclidean Algorithm-based decoder \cite{SUGIYAMA197587, sarwate2009modifiedeuclideanalgorithmsdecoding}. 
    The final state of the system after this step is:
    \begin{equation}
        \sum_{k=0}^{l} \frac{w_k}{\sqrt{\binom{m}{k}}} \sum_{|e|=k} \left(\prod_{i=1}^{m} \tilde{g}_{i}(e_i)\right) 
        |B^T e\rangle_{nb}
        \equiv |\widetilde{P(f)}\rangle_{nb}        
    \end{equation}

    \item \textbf{Stage 4: Final Transformation and Measurement:}
    After the successful phase fixups, the syndrome register is left in a state that is the Quantum Fourier Transform of the desired output state. To obtain the final state, an Inverse Quantum Fourier Transform (IQFT) is applied to the $n$ qudits of the syndrome register. This produces the final DQI state:
    \begin{equation}
        |P(f)\rangle = \sum_x P(f(x)) |x\rangle.
    \end{equation}
\end{itemize}

\subsection{Efficient Quantum Circuits for Extended Euclidean Algorithm}\label{sec:eea_circuit}
The Extended Euclidean Algorithm (EEA) is a cornerstone of computational number theory and a critical subroutine in algorithms for both classical and quantum computing \cite{cormen2022introduction}. 
While often introduced for integers, its principles generalize to other Euclidean domains like univariate polynomials with coefficients in a finite field, which is relevant to Reed-Solomon decoding. 
Given two polynomials, $A(z)$ and $B(z)$, Euclid's algorithm iteratively generates a sequence of remainder polynomials $r_i(z)$ using the recurrence $r_{i-2}(z) = r_{i-1}(z) q_i(z) + r_{i}(z)$, where $r_{-1}(z)=A(z)$, $r_{0}(z) = B(z)$ and the quotient $q_i(z)$ is chosen such that $\deg(r_i(x)) < \deg(r_{i - 1}(x))$. The EEA is an extension to Euclid's algorithm, which, alongside this sequence, also maintains two corresponding cofactor sequences, $u_i(z)$ and $v_i(z)$, that are updated at each step to preserve a crucial linear relationship known as Bézout's identity. At every iteration $i$, the triplet $(r_i(z), u_i(z), v_i(z))$ satisfies the invariant:
\begin{equation}
    A(z)u_i(z) + B(z)v_i(z) = r_i(z).
\end{equation}
This process is guaranteed to terminate because the degree of the remainder polynomial, $\deg(r_i(z))$, is a non-negative integer that strictly decreases with each iteration. The algorithm stops when the remainder $r_m(z)$ becomes the zero polynomial. At this point, the previous remainder, $r_{m-1}(z)$, is the $\gcd(A(z), B(z))$, and the corresponding cofactors $u_{m-1}(z)$ and $v_{m-1}(z)$ are the final Bézout coefficients.

Each step of the EEA can be expressed as an application of a $2\times2$ transition matrix $\tau_{i}$ on the vector $[r_{i-2}, r_{i-1}]^T$. The product of the first $i$ transition matrices store the Bézout coefficients $u_i$, $u_{i-1}$ and $v_i$, $v_{i-1}$. This matrix representation of the EEA will be useful as we describe our new constructions below.

\begin{align}
\begin{pmatrix}
r_{i-1}\\
r_{i}
\end{pmatrix}=
\underbrace{
\begin{pmatrix}
0 & 1\\
1 & -q_{i}
\end{pmatrix}
}_{\tau_{i}}
\begin{pmatrix}
r_{i-2}\\
r_{i-1}
\end{pmatrix}
&&
\begin{pmatrix}
v_{i-1}\\
v_{i}
\end{pmatrix}=
\tau_\text{i}
\begin{pmatrix}
v_{i-2} \\
v_{i-1}
\end{pmatrix}
&&
\begin{pmatrix}
u_{i-1}\\
u_{i}
\end{pmatrix}=
\tau_\text{i}
\begin{pmatrix}
u_{i-2} \\
u_{i-1}
\end{pmatrix}
\end{align}

\begin{align}
\tau_{\text{total}}
= 
\tau_{m}
\cdot
\tau_{m - 1}
\dots
\tau_{1}
=
\begin{pmatrix}
u_{m-1} &  v_{m-1}\\
u_{m} & v_{m}
\end{pmatrix}
&&
\begin{pmatrix}
\gcd(A, B)\\
0
\end{pmatrix}=
\tau_\text{total}
\begin{pmatrix}
A \\
B
\end{pmatrix}
\end{align}

% \begin{align}
% &&
% \begin{pmatrix}
% u_{m-1}\\
% u_{m}
% \end{pmatrix}=
% \tau_\text{total}
% \begin{pmatrix}
% 1 \\
% 0
% \end{pmatrix}
% &&
% \end{align}\label{eq:eea_transition_matrix_intro}

One important application of the Extended Euclidean Algorithm is to compute multiplicative inverses in a finite field $\mathbb{F}_q \cong \mathbb{F}_p[z]/P(z)$, where $P(z)$ is an irreducible polynomial.  
To find the inverse of a polynomial $A(z)$, one applies the EEA to $P(z)$ and $A(z)$. This yields cofactor polynomials $u(z)$ and $v(z)$ such that $P(z)u(z) + A(z)v(z) = \gcd(P(z), A(z)) = 1$. Taking this equation modulo $P(z)$ gives $A(z)v(z) \equiv 1 \pmod{P(z)}$, revealing that the Bézout coefficient $v(z)$ is the modular inverse. 
This operation is the principal computational bottleneck in Shor's algorithm for Elliptic Curve Cryptography (ECC), a cornerstone of modern public-key infrastructure. 
As a result, a significant body of research, summarized in \cref{tbl:prior_art_eea} and \cref{tbl:prior_art_eea_division}, has been dedicated to designing resource-efficient quantum circuits for the EEA.

Early work by \textcite{proos2004shorsdiscretelogarithmquantum, kaye2004optimizedquantumimplementationelliptic} showed that the standard Euclidean Algorithm based on polynomial long division is stepwise reversible and can be compiled using $3nb + \mathcal{O}(\log{n})$ qubits using a register-sharing technique, where the registers holding the remainders and the Bézout coefficients are overlapped to use a total of $2nb$ qubits and another $nb$ ancilla qubits are used to store the quotient in each iteration. 
By tolerating a small error, they show how to further reduce the qubit counts down to $2nb + \mathcal{O}(\log{n})$, by using a $\mathcal{O}(\log{n})$ sized register to store the quotients since large quotients occur relatively rarely in the EEA.
Their circuit design relies on a complex synchronization scheme, and they do not analyze the constant factors for gate counts.
More recent works \cite{roetteler2017quantumresourceestimatescomputing, häner2020improvedquantumcircuitselliptic, cryptoeprint:2020/1296, kim2023newspaceefficientquantumalgorithm} focus on using constant-time division-free variants of the Extended Euclidean Algorithm, like Binary GCD \cite{Stein1967} and Bernstein-Yang GCD \cite{cryptoeprint:2019/266}, to reduce the gate counts but suffer from significantly higher qubit counts because each iteration of these constant-time EEA algorithms is not stepwise reversible and generates additional garbage that leads to a higher ancilla overhead.

In this work, we provide two improved constructions for compiling the EEA, the first where the EEA explicitly computes the Bézout coefficients in memory, and the second where it suffices to have an implicit representation of Bézout coefficients. 
For both the constructions, we reduce the qubit counts to $2nb + \mathcal{O}(\log{n})$, which is the best one can hope for given the input size is $2nb$, and our gate counts are lower than previous state of the art.

% \begin{enumerate}
%     \item \textbf{Explicit Bézout Coefficients}: In situations where you need to explicitly compute the Bézout coefficients in-memory, the best one can hope for is 
%     \item \textbf{Implicit Bézout Coefficients}: 
% \end{enumerate}

% A closer look at applications of EEA, like ECC and DQI (explained in subsequent sections), reveals that often, the end goal is more general than explicitly computing the Bézout coefficients in memory. For example, in ECC, the end goal is the operation of modular division, which is the computation of $A(z)B(z)^{-1} \pmod{P(z)}$, rather than just modular inversion. The standard computational strategy decomposes this into two steps: first, finding the modular inverse of the denominator, $B(z)^{-1}$, and second, multiplying the result by the numerator, $A(z)$. This `inverse-then-multiply` approach has a significant resource bottleneck: it necessitates the explicit computation and storage of the full inverse polynomial. Our work introduces a more direct and qubit-efficient paradigm. Instead of materializing the full inverse, our Dialog Representation captures the process of the EEA. This allows us to implement the entire division operation as a single, holistic transformation, bypassing the need to store the intermediate inverse and breaking the qubit scaling limitations of prior methods.

\begin{table}[H]
    \centering
    \caption{
    Cost of compiling the Extended Euclidean Algorithm for two degree-$n$ polynomials over $\F_q$ with $b=\lceil\log_2{q}\rceil$.
    Gate cost is specified in terms of the number of calls to the dominant subroutine of quantum-quantum multiplication of elements in the underlying field $\F_q$.
    }
    \begin{tabular}{|l|c|c|c|}
    \hline
    \textbf{Source} & \textbf{EEA Technique} & \textbf{Qubit Cost} & \textbf{Dominant gate cost} \\
    \hline
    \textcite{kaye2004optimizedquantumimplementationelliptic} & Euclids GCD & $3nb + \mathcal{O}(\log_2{n})$ & $12n^2$ multiplications \\
    \textcite{roetteler2017quantumresourceestimatescomputing} & Binary GCD & $4nb + 2n + \mathcal{O}(\log n)$ & $4n^2$ multiplications \\
    \textcite{cryptoeprint:2020/1296} & Bernstein-Yang GCD & $4nb + n+\mathcal{O}(\log n)$ & $4n^2$ multiplications \\
    \textcite{kim2023newspaceefficientquantumalgorithm} & Bernstein-Yang GCD & $4nb + 0.5n + \mathcal{O}(\log n)$ & $4n^2$ multiplications \\
    \hline
    \textbf{This Work (Explicit Bézout)} & Euclids GCD & $2nb + \mathcal{O}(\log_2{n})$ & $6n^2$ multiplications\\
    \textbf{This Work (Implicit Bézout)} & Bernstein-Yang GCD & $2nb + \mathcal{O}(\log_2{n})$ & $2n^2$ multiplications\\
    \hline
    \end{tabular}
    \label{tbl:prior_art_eea}
\end{table}

\subsubsection{Explicit versus Implicit Access to Bézout Coefficients}

A key theme of our work is the optimization of quantum circuits by carefully considering how the results of the EEA are used by the broader algorithm. We distinguish between two computational models: one requiring \emph{explicit} access to the Bézout coefficients, and another where \emph{implicit} access is sufficient. This distinction allows us to select the most resource-efficient EEA implementation for the task at hand.

As outlined at the beginning of this section, any execution of the EEA can be viewed as generating a sequence of $2 \times 2$ transition matrices, $\tau_1, \tau_2, \dots, \tau_m$. The product of these matrices, $\tau_{\text{total}} = \tau_m \cdot  \tau_{m-1} \cdots \tau_1$, contains the final Bézout coefficients as its entries.

\begin{itemize}
    \item \textbf{Explicit Access:} This is the conventional model, where the primary goal of the EEA circuit is to compute and store the full product matrix $\tau_{\text{total}}$. In the context of polynomial inputs, this means computing the coefficients of the final Bézout polynomials (e.g., $v_{m-1}(z)$) and storing them explicitly in a quantum register. This approach is necessary when the full polynomial is required for subsequent algebraic manipulations.

    \item \textbf{Implicit Access.} In this model, we avoid the costly step of multiplying the sequence of transition matrices. Instead, we only compute and store a compact representation of each individual matrix $\tau_i$. This metadata is sufficient to reconstruct and apply each $\tau_i$ on the fly. The full Bézout coefficients are never materialized in memory but exist \emph{implicitly} as the result of applying the sequence of stored transformations. Any operation requiring the Bézout coefficients is simply reformulated as a sequential application of the $\tau_i$ matrices. 

\end{itemize}

We illustrate the power of this implicit approach with two concrete examples:

\begin{enumerate}
    \item \textbf{Modular Polynomial Division:} Consider the task of computing $A(z)B(z)^{-1} \pmod{P(z)}$, where $P(z)$ is an irreducible polynomial.
    \begin{itemize}
        \item The \emph{explicit} ``invert-then-multiply'' approach first runs $\text{EEA}(P, B)$ to explicitly compute and store the inverse polynomial $v_{m-1}(z) = B(z)^{-1} \pmod{P(z)}$. 
        It then performs a separate quantum-quantum multiplication of $A(z)$ by this inverse polynomial.
        \item The \emph{implicit} ``direct division'' approach instead runs $\text{EEA}(P, B)$ to generate the sequence of matrices $\tau_1, \tau_2, \dots, \tau_m$. To compute the division, it applies a sequence of matrix-vector multiplications to a different input vector, $[0, A(z)]^T$. The final state after applying $\tau_{\text{total}}$ is precisely $[A(z)B(z)^{-1}, 0]^T \pmod{P(z)}$, avoiding the need to ever store the full inverse polynomial.
    \end{itemize}
    We compare the costs of these approaches for modular division in \cref{tbl:prior_art_eea_division}.

    \item \textbf{Polynomial Evaluation:} Consider evaluating the Bézout coefficient polynomial $v_{m-1}(z)$ at a specific point $\gamma \in \F_q$.
    \begin{itemize}
        \item With \emph{explicit} access, one would use the stored coefficients of $v_{m-1}(z)$ and perform a standard polynomial evaluation (e.g., via Horner's method), requiring a series of quantum-classical multiplications.
        
        \item With \emph{implicit} access, we first evaluate the simple polynomial entries within each transition matrix $\tau_i$ at the point $\gamma$. 
        This yields a sequence of constant $2 \times 2$ matrices with entries in $\F_q$. 
        We then apply this sequence of constant matrices to an initial vector, such as $[0, 1]^T$. The result of these operations over $\F_q$ is the desired value, $v_{m-1}(\gamma)$. Once again, we did not ever store the full Bézout coefficient polynomial $v_{m - 1}(z)$ in memory.
    \end{itemize}
\end{enumerate}

Our work provides optimized circuits for both models. For explicit access, we improve upon the synchronized algorithm by \textcite{proos2004shorsdiscretelogarithmquantum}, while for implicit access, we use the constant-time division-free variants of EEA \cite{Stein1967, cryptoeprint:2019/266} as basis for the \emph{Dialog} representation detailed in the following subsections.

\begin{table}[H]
    \centering
    \caption{
    Cost of compiling modular division using Extended Euclidean Algorithm for two degree $n$ polynomials over $\F_q$ with $b=\lceil\log_2{q}\rceil$. 
    Gate cost is specified in terms of the number of calls to the dominant subroutine of quantum-quantum multiplication of elements in the underlying field $\F_q$.
    We wish to achieve $\ket{A(z)}_b\ket{B(z)}_b \rightarrow \ket{\text{junk}}\ket{\frac{A(z)}{B(z)} \pmod{p(z)}}_b$. 
    When accounting for qubit costs, we assume multiplication of two elements in $\F_q$ does not consume any ancilla.
    }
    \begin{tabular}{|l|c|c|c|}
    \hline
    \textbf{Source} & \textbf{Approach} & \textbf{Qubit Cost} & \textbf{Gate cost} \\
    \hline
    \textcite{kaye2004optimizedquantumimplementationelliptic} & invert-then-multiply & $4nb + \mathcal{O}(\log_2{n})$ & $13n^2$ multiplications \\
    \textcite{roetteler2017quantumresourceestimatescomputing} & invert-then-multiply & $6nb + 2n + \mathcal{O}(\log n)$ & $5n^2$ multiplications \\
    \textcite{cryptoeprint:2020/1296} & invert-then-multiply & $6nb + n+\mathcal{O}(\log n)$ & $5n^2$ multiplications \\
    \textcite{kim2023newspaceefficientquantumalgorithm} & invert-then-multiply & $6nb + 0.5n + \mathcal{O}(\log n)$ & $5n^2$ multiplications \\
    \hline
    \textbf{This Work (Explicit Bézout)} & invert-then-multiply & $4nb + \mathcal{O}(\log_2{n})$ & $7n^2$ multiplications\\
    \textbf{This Work (Implicit Bézout)} & direct-division & $4nb + \mathcal{O}(\log_2{n})$ & $4n^2$ multiplications\\
    \hline
    \end{tabular}
    \label{tbl:prior_art_eea_division}
\end{table}

\subsubsection{Synchronized Reversible EEA for Explicit Bézout coefficients}\label{sec:zalka_eea_optimized}
We advance the synchronized register-sharing framework of Zalka et al. \cite{proos2004shorsdiscretelogarithmquantum, kaye2004optimizedquantumimplementationelliptic} through two key innovations, resulting in an implementation that is both rigorously space-optimal and time-efficient. 

First, we achieve the $2Nb$ space bound deterministically via \emph{in-register quotient storage}. Prior synchronized implementations utilized an auxiliary register for the quotient $q$. To maintain low space overhead, this register was heuristically bounded to $O(\log N)$ qubits, incurring a fidelity loss for large quotients. 
We eliminate this auxiliary register entirely by utilizing the guaranteed available padding within the shared register (storing the pair $(u_i, r_i)$) to temporarily store $q_{i+1}$. This makes the $2Nb + O(\log N)$ space complexity rigorous, without relying on heuristics. \cref{fig:zalka_eea_improved_flow} shows the evolution of the state of the system for each logical iteration of the synchronized Extended Euclidean Algorithm. 

Second, we introduce a \emph{unified cycle architecture} to reduce time complexity constants. The baseline synchronized approach cycles through four distinct circuit blocks (Division, Normalization, Bézout Update, Swap), leading to redundant arithmetic operations. 
For example, both the Division and Bézout Update step would need $n$ quantum-quantum multiplications and controlled additions each, even though for any step of the EEA these two cycles act on different parts of the shared registers storing the remainders and Bézout polynomials.
In our unified architecture, a single, optimized circuit executes every cycle, with its behavior controlled by the synchronization counter. 
For example, our architecture integrates the arithmetic for Division and Bézout Update steps, such that a total of $n$ quantum-quantum multiplications and controlled additions are performed, effectively halving the dominant gate costs. 
Furthermore, we optimize the access to the in-register quotient by integrating localized CSWAPs within the sequential arithmetic loop, avoiding expensive generalized quantum addressing mechanisms. 

In \cref{sec:improved_zalka_eea_iterations} we show that our synchronized reversible EEA requires $6n$ iterations to terminate in the worst case. \cref{fig:optimized_zalka_eea} in \cref{sec:ref_python_impl} gives a complete Python implementation that uses $2nb + \mathcal{O}(\log_2{n})$ space and $n$ quantum-quantum multiplications per iteration, resulting in the leading order gate cost of $6n^2$ quantum-quantum multiplications.

\begin{figure}
    \centering
    \includegraphics[width=0.9\linewidth]{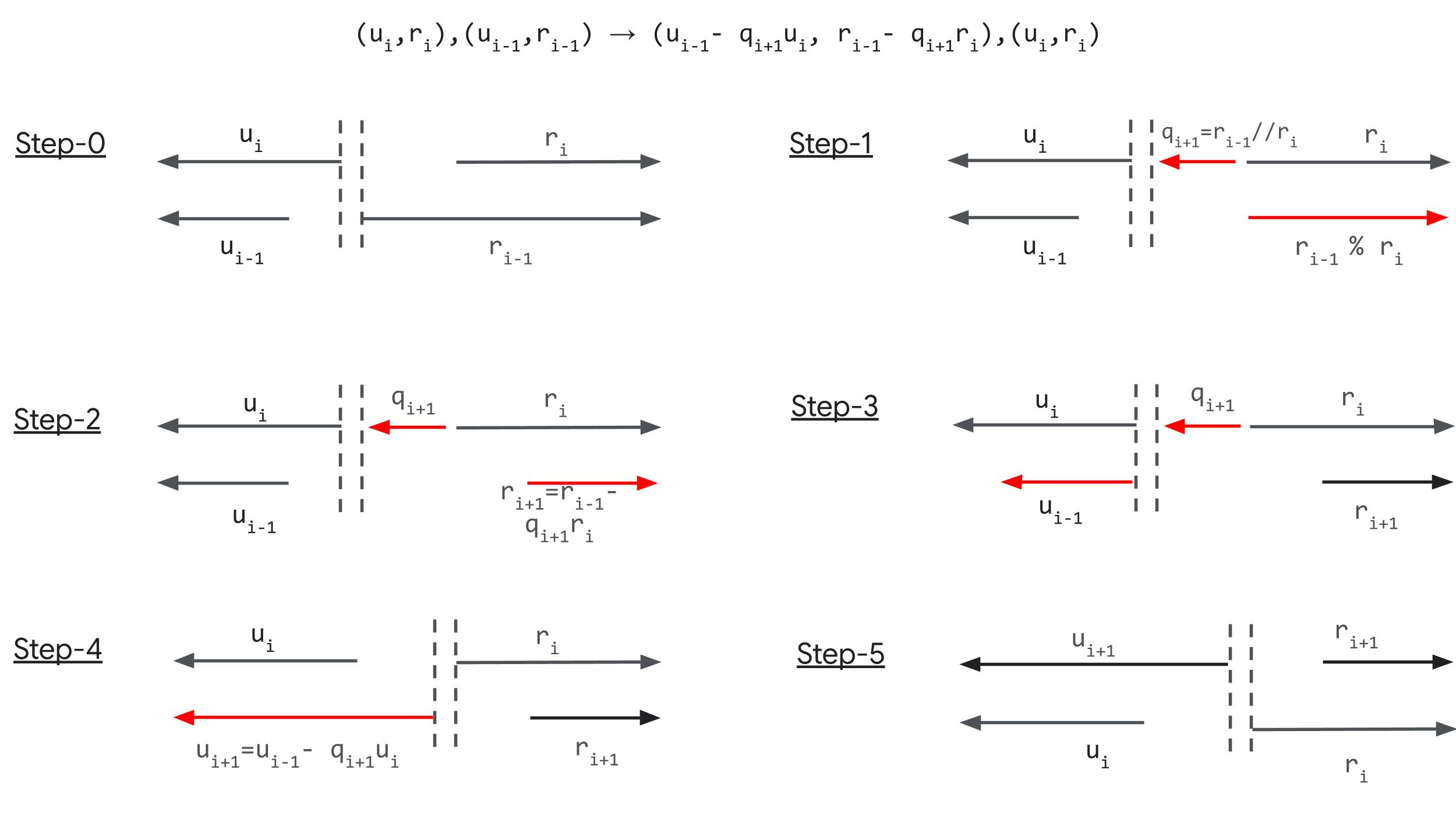}
    \caption{Evolution of the state of the system for our improved construction of Zalka's reversible EEA \cite{kaye2004optimizedquantumimplementationelliptic, proos2004shorsdiscretelogarithmquantum} with in-register quotient storage which deterministically reduces the qubit counts to $2nb + \mathcal{O}(\log{n})$.
    \underline{Step-0} shows the initial configuration of the system. Empty space corresponds to qubits in the $\ket{0}$ state. 
    The direction of the arrow denotes the order in which the coefficients of the polynomials are stored, where the tip of the arrow stores the highest degree coefficients and the tail stores the lowest degree coefficient.  
    At each step, three key invariants are satisfied - (a) $\deg(u_i) + \deg(r_i) \leq n$, (b) $\deg(r_{i}) < \deg(r_{i-1})$ and $\deg(u_{i}) > \deg(u_{i-1})$ and (c) $\deg(u_{i}) + \deg(r_{i-1}) = n$.
    \underline{Step-1} iteratively computes each term of the quotient $q_{i+1} = \lfloor r_{i-1} / r_{i} \rfloor$ where $\deg(q_{i + 1}) = \deg(r_{i - 1}) - \deg(r_{i}) = m - \deg(u_i) - \deg(r_{i})$, and thus the quotient is stored in-place within the shared register storing $u_{i}$ and $r_{i}$. 
    \underline{Step-2} performs right shift until the leader order coefficient of $r_{i+1}$ is non-zero, such that at the end of step 2, the polynomial long division is finished and we have successfully computed both $r_{i+1}$ and $q_{i+1}$
    \underline{Step-3} iteratively right shifts $u_{i-1}$ until $\deg(u_{i-1}) = \deg(u_{i})$.
    \underline{Step-4} iteratively computes $u_{i+1} = u_{i - 1} - q_{i + 1} u_{i}$ by 
    iteratively performing $u_{i - 1} = (u_{i} \times q_{i+1, j} - u_{i - 1})\times x$ for all $j \text { in } [0, \dots, \deg(q_{i + 1})]$. The multiplication by $x$ corresponds to a right shift for $u_{i - 1}$.
    \underline{Step-5} Swaps the two registers and finishes one logical iteration of the EEA. 
    }
    \label{fig:zalka_eea_improved_flow}
\end{figure}

\subsubsection{Dialog Representation for Implicit Bézout coefficients}\label{sec:dialog_representation}

To understand the power of our implicit access model, it is useful to think not just about algorithms, but about different ways to represent mathematical objects.
The familiar binary representation of an integer, for instance, is a list of $0$ and $1$ symbols.
It is a \emph{representation} (known since the square-and-multiply exponentiation algorithm) because the integer can be recreated from the symbols by initializing an accumulator to $0$, and then iterating through the list of symbols and applying operations depending on the symbol (with `0' meaning \emph{multiply accumulator by 2} and `1' meaning \emph{multiply accumulator by 2 and add 1}).
The binary representation has many strengths.
For example, digital logic circuits can implement the addition of binary integers in logarithmic depth.
However, binary representation is not always the optimal choice.
For example, carry-save adders use a slightly larger representation that enables performing addition in constant depth~\cite{carrysaveadderpatent}.

In this paper, we employ a different representation of numbers and polynomials.
We call this representation the \emph{Dialog representation}.
Concretely, the Dialog is the sequence of transition matrices, $\tau_1, \tau_2, \dots, \tau_m$, generated by executing the Extended Euclidean Algorithm on the input pair $A, B$.
For constant-time division-free GCD algorithms \cite{Stein1967, cryptoeprint:2019/266}, this corresponds to a recording of the conditional branch decisions taken in each iteration.
Abstractly, the Dialog is a decomposition of a pair of inputs (numbers or polynomials) into a series of small invertible matrix multiplications.
This sequence is a valid representation because it provides a complete recipe to recover the original inputs from the output. Specifically, the total transformation $\tau_{\text{total}} = \tau_m \cdot
\tau_{m - 1} \cdots \tau_1$ maps the input vector $[A, B]^T$ to the output vector $[\gcd(A, B), 0]^T$. Consequently, applying the inverse transformation, $\tau_{\text{total}}^{-1}$, to the output recovers the original inputs.

$$
\begin{pmatrix}
\gcd(A, B)\\
0
\end{pmatrix}=
\tau_\text{total}
\begin{pmatrix}
A \\
B
\end{pmatrix}
=
\underbrace{
\tau_{m}
\cdot
\tau_{m - 1}
\dots
\tau_{1}
}_\text{Dialog Representation}
\begin{pmatrix}
A \\
B
\end{pmatrix}
$$

\begin{align*}
\begin{pmatrix}
u_{m}\\
u_{m-1}
\end{pmatrix}=
\tau_\text{total}
\begin{pmatrix}
1 \\
0
\end{pmatrix}
&&
\begin{pmatrix}
v_{m}\\
v_{m-1}
\end{pmatrix}=
\tau_\text{total}
\begin{pmatrix}
0 \\
1
\end{pmatrix}
&&
\begin{pmatrix}
A\\
B
\end{pmatrix}=
\tau_\text{total}^{-1}
\begin{pmatrix}
\gcd(A, B)\\
0
\end{pmatrix}
\end{align*}

\begin{align*}
\begin{pmatrix}
u_{m} \cdot C\\
u_{m-1} \cdot C
\end{pmatrix}=
\tau_\text{total}
\begin{pmatrix}
C \\
0
\end{pmatrix}
&&
\begin{pmatrix}
v_{m} \cdot C\\
v_{m-1} \cdot C
\end{pmatrix}=
\tau_\text{total}
\begin{pmatrix}
0 \\
C
\end{pmatrix}
&&
\begin{pmatrix}
A \cdot C\\
B \cdot C
\end{pmatrix}=
\tau_\text{total}^{-1}
\begin{pmatrix}
\gcd(A, B) \cdot C \\
0
\end{pmatrix}
\end{align*}
Every symbol in the dialog representation must correspond to a small invertible matrix.
When a representation has this property, we call it a \emph{linear representation}.
Note that the binary representation \emph{is not} linear, because the `1' symbol needs to increment the accumulator, and this action does not correspond to a matrix multiplication.
The linearity of the dialog representation is a powerful property that allows us to perform complex algebraic manipulations by simply operating on the dialog itself. For example:

\begin{itemize}
    \item \textbf{Direct Modular Division and Multiplication:} 
    Since the transformation from inputs to outputs is linear, we can apply it to different vectors. 
    To compute $C(z)B(z)^{-1} \pmod{P(z)}$ where $\gcd(P(z), B(z)) = 1$, we first compute the Dialog representation of $(P(z), B(z))$. 
    We can then apply the forward transformation $\tau_{\text{total}}$ to the scaled vector $[0, C(z)]^T$. 
    By linearity, this directly yields the result $[C(z)B(z)^{-1}, 0]^T \pmod{P(z)}$ without ever materializing the inverse. Similarly, we can also perform modular multiplication by applying $\tau_{\text{total}}^{-1}$ to the vector $[C(z), 0]^T$ to obtain $[0, B(z)]^T \pmod{P(z)}$. 

    \item \textbf{Evaluating the Bézout Polynomial:} 
    To evaluate the Bézout Polynomial $v_{m}(z)$ at a point $\gamma \in \F_q$, we can distribute the evaluation across the composition. 
    We simply evaluate the polynomial entries of each transformation matrix $\tau_i(z)$ at $\gamma$ to get a sequence of constant matrices, and then multiply these constant matrices together.

\end{itemize}

This linear representation is most powerful when the transformation matrices $\tau_i$ are themselves simple. For this reason, we construct the Dialog not from the classic EEA, but from constant-time, division-free algorithms like Bernstein-Yang \cite{cryptoeprint:2019/266}, whose steps correspond to simple, constant-time matrix operations such as these:

\begin{itemize}
    \item \textbf{Doubling:} The operation $b(z) \gets b(z)/z$ corresponds to:
        
        $$M_{\text{double}} = \begin{pmatrix} 1 & 0 \\ 0 & 1/z \end{pmatrix}$$

    \item \textbf{Inversion:} The operation $(a(z), b(z)) \gets (b(z), a(z))$ corresponds to the matrix:
    
        $$M_{\text{invert}} = \begin{pmatrix} 0 & 1 \\ 1 & 0 \end{pmatrix}$$

    \item \textbf{Addition:} The operation $b(z) \gets b(z) - c \cdot a(z)$, where $c \in \F_q$, corresponds to the matrix:
    
        $$M_{\text{add}}(c) = \begin{pmatrix} 1 & 0 \\ -c & 1 \end{pmatrix}$$

\end{itemize}

In this case, the \textbf{Dialog} is simply the list of coefficients $c_i$ and the swap decisions from each step, such that it defines the ordered list of transformation matrices $\tau_{i}$. The name Dialog itself is a mnemonic for the core arithmetic operations involved: \textbf{D}oubling (or shifting), \textbf{I}nversion (or swap), and \textbf{A}ddition/subtraction, and the representation is similar to a logarithm because it makes multiplication and division easy but addition and subtraction hard. For two input polynomials of degree up to $n-1$, the Dialog consists of $2n$ field elements.

 A na\"ive quantum circuit for constructing the dialog would require $2nb$ qubits to store and manipulate the remainder polynomials $r_{-1}=A(z), r_{0}=B(z), r_{1}, \dots r_{m}$ and another $2nb$ qubits to store the dialog, resulting in overall qubit count of $4nb + \mathcal{O}(\log_2{n})$. In order to reduce the qubit counts to $2nb + \mathcal{O}(\log_2{n})$, we once again use a register sharing technique by observing a key invariant of the constant-time EEA \cite{cryptoeprint:2019/266}:
\textit{In each iteration, as one field element is computed and appended to the Dialog, the total number of coefficients required to represent the active polynomials, $r_{i-1}(z)$ and $r_i(z)$, decreases by one.}
This inverse relationship between the size of the Dialog and the size of the remaining polynomial data allows for an elegant quantum circuit that dynamically reclaims and repurposes qubits. 
Our construction uses a shared quantum register, \emph{poly}, which stores the coefficients of the remainder polynomials $r_{i-1}(z), r_i(z)$ and another quantum register, \emph{dialog}, that stores the growing Dialog. 
At any point in time, we have the invariant: \lstinline{len(poly) + len(dialog) = 2n} such that the total space occupied is $2nb + \mathcal{O}(\log_2{n})$.
The architecture can be described as follows:

\begin{enumerate}
    \item \textbf{Initialization:} The \lstinline{poly} register is initialized to hold the coefficients of the two input polynomials, $a(z)$ and $b(z)$, stored from opposite ends of the register. This creates a layout with a central padded region:
    \begin{center}
    $[a_0, a_1, \dots, a_{n_a}, 0, 0, 0, b_{n_b}, \dots, b_1, b_0]$
    \end{center}

    \item \textbf{Constant-Time Iteration Loop:} The circuit proceeds for a fixed $2n$ iterations, where $n$ is the maximum degree of the input polynomials. This ensures the algorithm completes for any input. Inside each iteration \lstinline{i=}$1,2, \dots, 2n$:
    \begin{itemize}
        \item The control flow (e.g., the decision to swap the logical polynomials) is managed by the integer \lstinline{delta} and the constant term of $b(z)$ (i.e., \lstinline{poly[-1]}). A swap is performed by efficiently reversing the entire \lstinline{poly} register, an operation implementable with a sequence of \lstinline{len(poly)// 2} CSWAP gates.
        \item The field element \lstinline{coeff} for the update step is calculated from the constant terms of the active polynomials (\lstinline{poly[0]} and \lstinline{poly[-1]}).
        \item The polynomial subtraction, $b(z) \gets b(z) - \text{coeff} \cdot a(z)$, is performed in-place. 
        As per \cite[Theorem A.1]{cryptoeprint:2019/266}, after $i$ iterations, we have 
        \begin{align*}
            &2\deg(a_k) \leq 2d - 1 - i + \delta_{n} \\
            &2\deg(b_k) \leq 2d - 1 - i - \delta_{n}
        \end{align*}
        When $\text{coeff} \neq 0$, after the conditional swap, we always have $\delta_{i} \leq 0$ such that $\deg(b(z)) > \deg(a(z))$. 
        Thus, in the shared register representation, we perform 
        \lstinline{poly[n - j - 1] -= coeff * poly[j]} $\forall j \in[0, \frac{\text{len(poly)} + \delta_{i}}{2})$.
        \item As a result of the subtraction and a subsequent logical right-shift of $b(z)$, the rightmost $b$ qubits of the register (previously holding $b_0$) become free.
        \item This newly freed block of $b$ qubits is immediately removed from the \lstinline{poly} register and appended at the end of the \lstinline{dialog} register, thus immediately re-purposing to store the computed \lstinline{coeff}, which is the next element of the Dialog.
    \end{itemize}
\end{enumerate}

Crucially, at any intermediate step \lstinline{i} of the algorithm, the \lstinline{poly} and \lstinline{dialog} registers contain a complete, holistic representation of the EEA's state. The growing \lstinline{dialog} register holds the partially constructed Dialog (from step \lstinline{1 to i}), while the shrinking \lstinline{poly} register holds the coefficients of the current remainder polynomials, $r_{i-1}(z)$ and $r_{i}(z)$. From these two registers of total size $2nb$, one can always reconstruct the full mathematical state: the remainders are available directly, and the corresponding Bézout coefficients, $u_i(z)$ and $v_i(z)$, can be computed by playing back the partial Dialog.
After the full $2n$ iterations, the original polynomials have been completely reduced, and the Dialog has expanded to occupy the entire $2nb$ qubits. The final state of the system is the complete Dialog, from which the gcd and the final Bézout coefficients can be derived implicitly. This in-place architecture, detailed in \cref{fig:in-place-dialog-divstep} of \cref{sec:ref_python_impl}, successfully constructs the full Dialog representation using a leading-order qubit cost of only $2nb$.

\subsection{Optimal Polynomial Intersection (OPI) over binary extension fields}
While the original DQI paper focused on the Optimal Polynomial Intersection (OPI) problem over prime fields $\mathbb{F}_p$, we now study the problem defined over binary extension fields, $\GF(2^b)$. The primary motivation for this is the prospect of substantially more efficient quantum circuits. Arithmetic in fields of characteristic 2 is often less resource-intensive to implement on a quantum computer. For instance, addition is a simple bitwise XOR, implementable with a linear number of CNOT gates.
Quantum-Classical multiplication and Squaring are linear reversible circuits that can also be implemented using only the CNOT gates.
Quantum-quantum multiplication based on Karatsuba algorithm \cite{vanhoof2020spaceefficientquantummultiplicationpolynomials} uses $\mathcal{O}(b^{\log_2{3}})$ Toffoli gates, $\mathcal{O}(b^2)$ CNOT gates, and no ancilla.
Inversion can be reduced to only $\mathcal{O}(\log_2{b})$ quantum-quantum multiplications, resulting in low Toffoli counts \cite{amento2012efficientquantumcircuitsbinary, ITOH198921}.
\cref{sec:improved_gf2_arithmetic} details our optimized implementations of arithmetic circuits for $GF(2^b)$.
This reduction in the cost of the underlying field arithmetic translates directly to lower resource requirements for the overall DQI algorithm.

The OPI problem and the DQI algorithm generalize naturally to this setting, as discussed in \cite[Section 14]{jordan2024optimizationdecodedquantuminterferometry}. The problem can be stated analogously:

\begin{definition}[OPI over $\GF(2^b)$]
Let $q=2^b$. Given integers $n < q-1$, an instance of the Optimal Polynomial Intersection problem over $\GF(2^b)$ is as follows. For each non-zero element $y \in \mathbb{F}_q^*$, let $F_y$ be a subset of the finite field $\mathbb{F}_q$. The problem is to find a polynomial $Q \in \mathbb{F}_q[y]$ of degree at most $n-1$ that maximizes the objective function:
\begin{equation*}
f_{OPI}(Q) = |\{y \in \mathbb{F}_q^* : Q(y) \in F_y\}|
\end{equation*}
\end{definition}

For a general OPI instance, the subsets $F_y$ can be arbitrary. As described in \cref{sec:dqi_quantum_circuit}, implementing the DQI algorithm requires constructing constraint-encoding gates $G_i$ that prepare a superposition related to these sets. 
For an arbitrary set $F_y$, this step would require a generic quantum state preparation, costing roughly $\mathcal{O}(q)$ gates for each of the $m$ constraints.

To reduce the gate cost of the constraint encoding step while maintaining the classical hardness of the problem, we focus our resource analysis on a specific, structured variant of OPI where the sets $F_y$ are chosen to allow for highly efficient implementations of the $G_i$ gates. 
Specifically, we introduce and analyze the \textbf{Twisted Bent Target OPI} (\cref{definition:TBTOPI}). 
In this variant, the sets $F_y$ are constructed from Maiorana-McFarland bent functions. This structure ensures that each gate $G_i$ can be implemented with a gate cost that scales as $\mathcal{O}(\log q)$ rather than $O(q)$. 

\subsection{Decoding Reed Solomon Codes using the Extended Euclidean Algorithm}
Having developed efficient quantum circuits for the EEA under both explicit and implicit access models, we now apply them to the central challenge of our work: constructing a complete, reversible decoder for Reed-Solomon (RS) codes. This decoder is the computational core of the DQI algorithm when applied to the OPI problem. 
Syndrome decoding of Reed-Solomon codes addresses the problem of recovering an error pattern $e = [e_0, e_1, \dots, e_{m-1}]$, where each $e_i \in \F_q$ and the number of non-zero errors $|e|$ is at most $\ell$. The input to the decoder is a list of known syndromes, which are represented as the coefficients of a syndrome polynomial, $S(z) = s_0 + s_1z + \dots + s_{n-1}z^{n-1}$.
The full decoding procedure involves three main stages: first, solving the \emph{key equation} using the EEA; second, finding the error locations using a Chien search; and third, finding the error values using Forney's algorithm.

\paragraph{Solving the Key Equation:} The core of the decoding process is to solve the fundamental key equation, a polynomial congruence that relates the known $S(z)$ to two unknown polynomials: the error-locator polynomial $\sigma(z)$ and the error-evaluator polynomial $\Omega(z)$:

\begin{equation} \label{eq:key_equation}
\sigma(z)S(z) \equiv \Omega(z) \pmod{z^{2\ell}}.
\end{equation}
Here, $\ell = \frac{n}{2}$ is the maximum number of errors that the code can correct, $\sigma(z)$ is called the error-locator polynomial because its roots identify the error positions, and $\Omega(z)$ is called the error-evaluator polynomial and is used to find the error magnitudes \cite{Forney1965, berlekamp2015algebraic}. A unique solution exists if $\deg(\sigma(z)) = |e| \le \ell$ and $\deg(\Omega(z)) < \ell$.

This polynomial congruence can be solved using several algorithms, most notably the Berlekamp-Massey algorithm \cite{berlekamp2015algebraic} or the Extended Euclidean Algorithm (EEA) \cite{SUGIYAMA197587}. 
To solve it using EEA, we apply the EEA to the two specific polynomials $A(z) = z^{2\ell}$ and $B(z) = S(z)$. As established in the previous section, the EEA produces a sequence of remainders $r_i(z)$ and corresponding Bézout coefficients $u_i(z)$ and $v_i(z)$ such that:
\begin{equation}
    A(z)u_i(z) + B(z)v_i(z) = r_i(z).
\end{equation}
Taking this identity modulo $A(z) = z^{2\ell}$, the first term vanishes, leaving:
\begin{equation}
    S(z)v_i(z) \equiv r_i(z) \pmod{z^{2\ell}}.
\end{equation}

This directly matches the form of the key equation \cref{eq:key_equation}, with $\sigma(z) \propto v_i(z)$ and $\Omega(z) \propto r_i(z)$. 
The algorithm is halted at the first iteration $i$ where $\deg(r_i(z)) < \ell$, which guarantees that the degree constraints for a valid solution are met. 

Our quantum circuits for EEA perform this first step using either the synchronized algorithm of \cref{sec:zalka_eea_optimized} for explicit access to both $\sigma(z)$ and $\Omega(z)$, or the Dialog-based method of \cref{sec:dialog_representation} for implicit access to $\sigma(z)$ and explicit access to $\Omega(z)$. 
The efficiency of the subsequent decoding stages depends on which model is used.

\paragraph{Chien Search and Forneys algorithm:} Once the key equation has been solved via the EEA, yielding the error-locator polynomial $\sigma(z)$ and the error-evaluator polynomial $\Omega(z)$, the decoder proceeds in two final stages. First, the Chien search \cite{Chien1964} finds the error locations by identifying the roots of $\sigma(z)$. This is achieved by evaluating $\sigma(z)$ at every non-zero element of the field, $\gamma_j \in \F_q^*$. An error is located at position $j$ if $\sigma(\gamma_j^{-1})=0$. Second, For each error location $j$ found by the Chien search, Forney's algorithm \cite{Forney1965} computes the corresponding error value $e_j$ using the formula:
\begin{equation}
e_j = - \frac{\Omega(\gamma_j^{-1})}{\sigma'(\gamma_j^{-1})},
\end{equation}
where $\sigma'(z)$ is the formal derivative of $\sigma(z)$. This requires evaluating both $\Omega(z)$ and $\sigma'(z)$ at the point $\gamma_j^{-1}$, followed by a modular division.

The implementation of these two steps differs significantly between the access models. 
In the \emph{explicit model}, the EEA provides the coefficients of $\sigma(z)$ and $\Omega(z)$ directly. The Chien search and Forney's algorithm then proceed by evaluating these polynomials (and the easily computed derivative $\sigma'(z)$) at each point $\gamma_j^{-1}$ using $n$ quantum-classical multiplications per evaluation. 
In contrast, the \emph{implicit model} provides the Dialog of length $n = 2\ell$ for $\sigma(z)$, while still providing $\Omega(z)$ explicitly as the final remainder. 
Here, evaluating $\sigma(\gamma_j^{-1})$ and $\sigma'(\gamma_j^{-1})$ is achieved by playing back the Dialog on an augmented state vector. 
Since differentiation is a linear operator, we can track the derivative through the sequence of linear transformations by applying the chain and product rules.
This requires $2n$ quantum-quantum multiplications and $n/2$ quantum-classical multiplications per evaluation. 

The quantum resource costs for these stages depend heavily on whether the explicit or implicit access model was used for the initial EEA step. The trade-off is summarized in \cref{tbl:decoder_costs}. The dominant costs arise from sequences of Galois field multiplications, which are categorized as either quantum-quantum (QQ), where both operands are in quantum registers, or quantum-classical (QC), where one operand is a known classical value.

\begin{table}[H]
    \centering
    \caption{
    Leading-order gate costs for the reversible EEA-based RS decoder, comparing the explicit (\cref{sec:zalka_eea_optimized}) and implicit (\cref{sec:dialog_representation}) access models. 
    Gate Costs are dominated by quantum-quantum (QQ) and quantum-classical (QC) multiplications.
    GF Inverse refers to the number of field inversions.
    The explicit access model also has a much higher constant factor coming from other operations like \lstinline{CSWAPS} and \lstinline{CADDS}, to manage the state of the system as part of the complex synchronization scheme. 
    }
    \label{tbl:decoder_costs}
    \renewcommand{\arraystretch}{1.4}
    \begin{tabular}{|l||c|c|c||c|c|c|}
    \hline
    \multicolumn{1}{|c||}{} & \multicolumn{3}{c||}{\textbf{Explicit Access Model}} & \multicolumn{3}{c|}{\textbf{Implicit Access Model}} \\
    \cline{2-7}
    \multicolumn{1}{|c||}{\textbf{Decoder Stage}} & \textbf{QQ Mults} & \textbf{QC Mults} & \textbf{GF Inverse} & \textbf{QQ Mults} & \textbf{QC Mults} & \textbf{GF Inverse} \\
    \hline
    \hline
    Step 1: EEA (Key Equation) & $3n^2$ & --- & $6n$ & $n^2$ & --- & $n$ \\
    \hline
    Steps 2\&3: Chien Search \& Forney Algorithm & --- & $mn$ & $m$ & $2mn$ & $mn/2$ & $m$ \\
    \hline
    \textbf{Total (Leading Order)} & $\boldsymbol{3n^2}$ & $\boldsymbol{mn}$  & $\boldsymbol{m + 6n}$ & $\boldsymbol{2mn + n^2}$ & $\boldsymbol{mn/2}$ & $\boldsymbol{m + n}$ \\
    \hline
    \end{tabular}

\end{table}

As the table shows, the choice of model creates a clear trade-off. In the typical regime for OPI where $n \ll m$, the complexity of both models is dominated by the $O(mn)$ terms from the Chien search and Forney's algorithm. For the implicit model, the dominant cost is $2mn$ quantum-quantum multiplications needed to evaluate $\sigma(z)$ and its derivative via Dialog playback. For the explicit model, the dominant cost is $mn$ quantum-classical multiplications to evaluate the three required polynomials.

For circuits over binary extension fields, a quantum-classical multiplication in $\GF(2^b)$ can be implemented very efficiently using only Clifford gates (a sequence of CNOTs), whereas a quantum-quantum multiplication requires non-Clifford resources (Toffoli gates). Therefore, for the parameter regimes we study, the \emph{explicit access approach is substantially cheaper}, as its dominant cost has a much lower Toffoli complexity. However, this trade-off should be carefully re-evaluated for other applications, such as those over prime fields, where quantum-classical multiplication is also expensive.

\section{Classical Attacks on OPI}\label{sec:classical_attacks}
% As far as we are aware, there are no known classical algorithms for OPI which exploit the algebraic structure. All the attacks we know about are just generic algorithms for max-LINSAT.
Suppose we have a max-LINSAT problem over $\mathbb{F}_q$ with allowed sets $F_1, \ldots, F_m$ of size $|F_i|=r$.
Prange's algorithm \cite{prange1962use} can be applied to this problem as follows.
First, we subsample any $n$ constraints to enforce via a linear system on the $n$ variables. This system can then be solved with Gaussian elimination. Since we have disregarded the remaining $m-n$ constraints, we expect that roughly $r/q \cdot (m-n)$ will be satisfied by chance since each constraint is satisfied by $r$ of the $q$. Thus, we get, in expectation, a solution that satisfies $n + r/q \cdot (m-n)$ clauses.
Prange's algorithm can be repeated to achieve any desired fraction of satisfied constraints, albeit potentially with an exponential overhead in the number of repetitions. This is the most efficient classical attack on OPI that we are aware of.
In \cref{sec:optimizePrange}, we comment on how Prange's algorithm could be optimized and show how to estimate the runtime required to match DQI's performance on a large supercomputer.
Then in \cref{sec:extendedPrange}, we explain an improvement to Prange's algorithm over extension fields and the impact on the runtime estimates in \cref{sec:results}.

\subsection{Optimizing Prange's Algorithm}\label{sec:optimizePrange}
We first comment on how Gaussian elimination in Prange's algorithm \cite{prange1962use} can be optimized to avoid an $n^3$ cost per attempt.
First, we can bring the generator matrix $G\in \mathbb{F}_q^{n\times m}$ into row echelon form such that the leftmost $n$ columns form the identity matrix. We let $\mathbf{e}^{(1)}, \mathbf{e}^{(2)}, \ldots, \mathbf{e}^{(n)}\in \mathrm{Span}(G)$ denote the rows of this row echelon form.
We can then represent any codeword $\mathbf{c}$ as
\begin{equation}
    \mathbf{c} = \sum_{i=1}^{n}c_i\mathbf{e}^{(i)}.
\end{equation}
To iterate over the codewords that satisfy the first $n$ clauses, we can view the problem as iterating over the strings where the $c_i \in F_i$ above.
Assume each $|F_i|$ has size at least 2. Then we can iterate over strings that satisfy the first $n$ clauses using a Gray code that only updates two of the first $n$ coordinates per iteration. Over $\mathbb{F}_{2^b}$, this could be implemented as two vectorized SIMD operations per update.

The semicircle law of \cite{jordan2024optimizationdecodedquantuminterferometry} states that DQI will satisfy $t$ clauses in expectation, where
\begin{equation}
    t / m = \left(\sqrt{\frac{\ell}{m} \left(1-\frac{r}{q}\right)} + \sqrt{\left(1-\frac{\ell}{m}\right)\frac{r}{q}}\right)^2.
\end{equation}

Suppose we apply the classical attack to find a solution satisfying at least $t$ of the $m$ clauses, where $t > n$. The number of satisfied clauses is distributed binomially, and the probability of sampling such a solution is:
\begin{equation}
p(t):= \sum_{s=t}^{m}\binom{m-n}{s-n}\left(\frac{r}{q}\right)^{s-n}\left(1-\frac{r}{q}\right)^{m-s}
 .\label{eq:binomSum}
\end{equation}
To match this with the classical attack will take, in expectation, $1/p(t)$ trials.
We estimate the classical resources required for one trial, using the Frontier supercomputer as an example \cite{frontier2023epyc}.
Frontier contains 37,632 AMD Instinct MI250X accelerators (i.e., GPUs). Each MI250x has 220 compute units (CUs) and runs at 1.7 GHz. Each CU has 4 SIMD lanes. So, we can imagine that with an optimized implementation, Frontier could potentially run this many trials of the heuristic in one day:
\begin{equation}
    24 \times 3600 \times \frac{1}{2} \times 37,632 \times 220 \times 4 \times 1.7 \times 10^9 \approx 2.43 \times 10^{21}\label{eq:FrontierTime}
\end{equation}
Note that no networking is required for our classical attacks (i.e., they are embarrassingly parallel), so we could easily distribute the workload over the Internet to many independent machines that run at different times, in odd hours, etc.

\subsection{Extended Prange's Algorithm over Extension Fields}\label{sec:extendedPrange}
The eXtended Prange's algorithm (XP) is a folklore algorithm alluded to in \cite{clz21}, although we do not know of a reference that fully formalizes it as a cryptanalytic attack.
For simplicity, suppose $F_1=F_2=\ldots=F_m=F$. Suppose $q = p^b$ for prime $p$. Let
\begin{align}
    \phi: &\mathbb{F}_{p^b} \leftrightarrow \mathbb{F}_p^b \nonumber \\
    \phi(\mathbf{x}) &= \left(\tr\left(\zeta x\right) : \zeta \in \mathcal{B}\right) \label{eq:definePhiEmbedding}
\end{align}
denote the natural bijection from elements in $\mathbb{F}_{p^b}$ to vectors in $\mathbb{F}_p^b$, where $\mathcal{B}$ is any basis for $\mathbb{F}_{p^b}$ over $\mathbb{F}_p$ and $\tr(\alpha) = \sum_{i=0}^{b-1}\alpha^{q^i}$ is the field trace (See Ch. 2 of \cite{mullen2013handbook}). Let $\Aff\subseteq 2^{\mathbb{F}_p^b}$ denote the set of nonempty affine subspaces of $\mathbb{F}_{p}^{b}$.
For each codimension $s\in \{0, 1, \ldots, b\}$, let $A_s\in \Aff$ denote the $(b-s)$-dimensional affine subspace which has the {\it largest intersection} $|A\cap \phi(F)|$ with the set $\phi(F)$, and let $P[s]$ denote its fractional overlap by $P[s] := \frac{|A\cap \phi(F)|}{|A|}$. The general idea is that if we ``pay'' $s$ affine constraints over the base field on one of the coordinates $i\in [m]$, then we get to flip a $P[s]$-biased coin that determines whether the $i$th constraint is satisfied. We have a budget of $n\cdot b$ base field constraints to spend. We call any budget-respecting strategy which assigns a value of $s$ to each coordinate $i$ an ``XP Allocation Strategy''. We consider two such strategies below.
\subsubsection{Expectation-Optimal XP Allocation Strategies via Linear Programming}
We can use a randomized procedure to decide $s$ for each clause. For each clause, this strategy samples $s$ independently according to the discrete probability distribution $p_s$.
This distribution is constrained such that, on average, we use $bn/m$ $\mathbb{F}_p$-linear constraints, meaning we are exactly within-budget. It can be optimized to get the maximum expected number of satisfied constraints by solving the following linear program (LP):
\begin{align}
    \mathrm{maximize}\,& \sum_{s=0}^{b}p_s P[s] \nonumber \\
    \mathrm{subject\, to}\,&\sum_{s=0}^{b}p_s = 1 \nonumber \\ 
    \mathrm{and}\,&\sum_{s= 0}^{b} s p_s \leq \sfrac{b n}{m} \nonumber \\ 
    \mathrm{where}\,\,\,&p_s\in [0,1]\,\,\,\,\,\forall s\in \{0, \ldots, b\}.\label{eq:linearProgramExtPrange}
\end{align}
The optimal value of the above LP gives the optimal expected fraction of satisfied constraints achieved by the XP.
We recover the truncation heuristic when we replace $\Aff$ with the set $\mathcal{T} := \{\{e\} : e \in \mathbb{F}_q\}\cup\{\mathbb{F}_q\}$.
There are exponentially many affine subspaces in $b$, the degree of the field extension (e.g. when $p=2$, $|\Aff|\sim 7.37\cdot 2^{(b+1)^2/4}$). But for $p=2$ and $b\leq 10$, we have $|\Aff|<10^{9}$ and we can explicitly enumerate these to compute the $P[s]$ table and solve eq~\eqref{eq:linearProgramExtPrange}.
Moreover, it seems unlikely we would really need to enumerate all the affine subspaces, since most of them will have small overlap with the set $F$. Perhaps, using better heuristics or Fourier-analytic techniques, we could directly compute the optimal allocation in time $\mathrm{poly}(q)$.
The XP can sometimes achieve a significantly higher expected value than Prange's algorithm. For example, over $\mathbb{F}_4$ with $|F|=2$ and $\sfrac{n}{m} = 1/2$, XP has an expected value of $1$, whereas Prange's algorithm would give an expected value of $\sfrac{3}{4}$.
\subsubsection{Probability-Optimal XP Allocation Strategies via a Knapsack Solver}
If $t$ is larger than the optimal expected fraction of satisfied constraints from eq~\eqref{eq:linearProgramExtPrange}, we may need to run the XP many times to obtain one sample that satisfies $\geq t$ clauses. Rather than maximizing the expectation value, we should instead maximize the probability of obtaining a sample that satisfies $\geq t$ clauses. The optimal-probability allocation is the solution to the following problem:
\begin{align}
    \mathrm{maximize}\,\,\,& \mathbb{P}\left(\sum_{i = 1}^m X_i \geq t\right)     \nonumber \\
    \mathrm{subject\, to}\,\,\,&X_i \sim \mathrm{Ber}(P[s_i]) \nonumber \\ 
    &\sum_{i=1}^m s_i \leq B \nonumber \\
    &0 \leq s_i \leq b \label{eq:probabilityOptimal}
\end{align}
This problem is non-convex, and the solution may not be the same as the allocation strategy from eq~\eqref{eq:linearProgramExtPrange}.
To obtain good approximate solutions to equation~\eqref{eq:probabilityOptimal}, we use a dynamic programming algorithm explained in \cref{sec:dynamic_programming}.

\subsubsection{Search for good target sets for OPI over binary extension fields}
The choice of good target sets seems to be a nontrivial problem. For a fixed set size $|F|$, we want to make the overlaps $P[s]$ as small as possible for each $s\in \{0, \ldots, b\}$. This way, the XP allocation strategies of eq~\eqref{eq:linearProgramExtPrange} and eq~\eqref{eq:probabilityOptimal} will have lower optimal values, giving a larger classical runtime. Identifying the optimal such set $F$ or even the best attainable list of overlaps $P[s]$ appears to be an open problem.
Some recent insight on a good way to choose $F$ came from \cite{briaud2025quantum}, who performed algebraic cryptanalysis of the binary OPI problem in the case that $F$ is chosen as the solution set of a multivariate polynomial equation (i.e., an algebraic variety). 
They concluded that for their application, cubic or higher-degree varieties were required. This is because quadratic varieties contain many dimension $b/2$ affine subspaces. They operated in a high-rate regime (specifically, $n/m>7/8$) so that they could obtain an exact solution with their quantum algorithm. The problem that then arises is that these affine subspaces are so large that the number of linear equations sufficient to enforce them as constraints is small enough that a rate $1/2$ code would already be able to satisfy every clause using XP.
However, in our scenario, we propose to use DQI, which is an {\it approximate optimization} algorithm. This allows us to keep the rate $n/m$  significantly lower than in \cite{briaud2025quantum} (in fact, the space cost scaling is dominated by $n$, so the lower the rate of the code, the better).
Although we do not obtain exact solutions, we obtain better approximate optima than can be obtained in polynomial time by any algorithm known to us, which is sufficient to obtain the large speedups here.

In fact, we find that a particular quadratic variety is an excellent choice to get a quantum advantage.
Specifically, we chose bent functions from the Maiorana-McFarland family.
In more detail, let $n = 2k$. Let
\begin{equation}
    S_k = \left\{(x_1, ..., x_n) \in \mathbb{F}_2^n : \sum_{i=1}^{k} x_i x_{i+k} = 1\mod 2\right\}.\label{eq:quadraticVariety}
\end{equation}
As we show in \cref{sec:bounds_on_a_intersect_s}, for any affine subspace $A\subset\mathbb{F}_2^n$ of dimension $d=\dim A$, we have
\begin{equation}
    \label{eq:upper_bound_on_a_intersection_sk}
    |A\cap S_k| \leq \begin{cases}
    2^d & d < k\\
    2^{d-1} + 2^{k-2} & k \leq d < 2k-1\\
    2^{2k-2} & d=2k-1\\
    2^{2k-1} - 2^{k-1} & d=2k.
    \end{cases}
\end{equation}
This can be fed as input to the knapsack solver described in \cref{sec:dynamic_programming}.

\subsubsection{Upper bounds on affine intersections of Maiorana-McFarland target sets}
\label{sec:bounds_on_a_intersect_s}

We have found a technical proof of the upper bounds in \eqref{eq:upper_bound_on_a_intersection_sk}, although we believe a simpler proof may be possible. All the key concepts involved are present in the proof for affine spaces with $k\leq \dim A < 2k-1$. Therefore, we defer the proofs of upper bounds for $d < k$, $d=2k-1$, and $d=2k$ to \cref{sec:proofs_of_bounds_on_a_intersect_s} and focus here on $k\leq d < 2k-1$.

Let $k$ be a positive integer. We use the standard dot product $\langle x, y\rangle:=\sum_{i=1}^kx_iy_i\in\mathbb{F}_2$ where $x,y\in\mathbb{F}_2^k$ to define
\begin{align}
    R_k := \left\{(x,y) \in \mathbb{F}_2^k\times\mathbb{F}_2^k : \langle x,y\rangle = 0\right\} \\
    S_k := \left\{(x,y) \in \mathbb{F}_2^k\times\mathbb{F}_2^k : \langle x,y\rangle = 1\right\}
\end{align}
so that $R_k\cup S_k=\mathbb{F}_2^{2k}$ and $R_k\cap S_k=\emptyset$. We will use $\sqcup$ to make simultaneous assertions about the union and intersection of two sets. Namely, we define $X \sqcup Y = Z \iff (X\cup Y = Z) \,\land\, (X \cap Y = \emptyset)$ for any three sets $X$, $Y$, and $Z$. Thus, $R_k\sqcup S_k=\mathbb{F}_2^{2k}$. In arguments below, we will make frequent use of the implication $X\sqcup Y=Z \implies |X|+|Y|=|Z|$.

We begin by deriving \textit{recursive structure formulas} that express the sets $A\cap R_k$ and $A\cap S_k$ as the disjoint union $\sqcup$ of the sets of the form $B\cap R_{k-1}$ and $B\cap S_{k-1}$ for some affine subspaces $B\subset\mathbb{F}_2^{2k-2}$. To that end, we use two linear projections $\pi_0:\mathbb{F}_2^{2k} \to \mathbb{F}_2^{2k-2}$ and $\pi_1:\mathbb{F}_2^{2k} \to \mathbb{F}_2^2$ defined by
\begin{align}
    \pi_0(xayb) = xy, \quad \pi_1(xayb) = ab
\end{align}
where $x,y\in\mathbb{F}_2^{k-1}$ and $a,b\in\mathbb{F}_2$ and where juxtaposition denotes concatenation. For a $w\in\mathbb{F}_2^{2k}$, we will refer to $\pi_0(w)$ as the \textit{root} of $w$ and to $\pi_1(w)$ as the \textit{suffix} of $w$. The map $(\pi_0,\pi_1):\mathbb{F}_2^{2k}\to\mathbb{F}_2^{2k-2}\times\mathbb{F}_2^2$ is invertible. Its inverse is $\otimes:\mathbb{F}_2^{2k-2}\times\mathbb{F}_2^2\to\mathbb{F}_2^{2k}$ defined as
\begin{align}
    xy \otimes ab = xayb
\end{align}
where $x,y\in\mathbb{F}_2^{k-1}$ and $a,b\in\mathbb{F}_2$ as before. For $X\subset\mathbb{F}_2^{2k-2}$, $Y\subset\mathbb{F}_2^2$, and $s\in\mathbb{F}_2^2$, we will write $X\otimes s = \{x\otimes s\,|\,x\in X\}$ and $X\otimes Y = \{x\otimes y\,|\,x\in X,y\in Y\}$. Note that $|X\otimes Y|=|X|\cdot|Y|$.

The recursive structure of $A\cap R_k$ and $A\cap S_k$ arises from
\begin{align}
    \label{eq:recursion_on_dot_product}
    \langle x,y\rangle = \langle \pi_0(x),\pi_0(y)\rangle + \langle \pi_1(x),\pi_1(y)\rangle
\end{align}
where we slightly abused notation by using $\langle.,.\rangle$ to refer to the dot product in three different vector spaces, $\mathbb{F}_2^{2k}$, $\mathbb{F}_2^{2k-2}$, and $\mathbb{F}_2^2$. We now state and prove the recursive structure formulas.

\begin{lemma}[Recursive Structure Formulas]
    \label{lm:recursive_structure_formulas}
    \,\newline
    For integers $d\geq k>1$, let $A$ be a $d$-dimensional affine subspace of $\mathbb{F}_2^{2k}$. Then
    \begin{align}
        \label{eq:structure_of_a_cap_r}
        A \cap R_k = \bigsqcup_{\sigma\in F} \begin{cases}
            (A_\sigma'\cap S_{k-1})\otimes\sigma & \text{if}\quad\sigma = 11 \\
            (A_\sigma'\cap R_{k-1})\otimes\sigma & \text{if}\quad\sigma \ne 11
        \end{cases} \\
        \label{eq:structure_of_a_cap_s}
        A \cap S_k = \bigsqcup_{\sigma\in F} \begin{cases}
            (A_\sigma'\cap R_{k-1})\otimes\sigma & \text{if}\quad\sigma = 11 \\
            (A_\sigma'\cap S_{k-1})\otimes\sigma & \text{if}\quad\sigma \ne 11
        \end{cases}
    \end{align}
    where $F=\pi_1(A)$ is an affine subspace of $\mathbb{F}_2^2$ and the sets $A_\sigma'\subset\pi_0(A)$ are affine subspaces of $\mathbb{F}_2^{2k-2}$ that arise as translations of the same linear subspace $W'\subset\mathbb{F}_2^{2k-2}$ with $\dim W'=\dim A - \dim F$.
\end{lemma}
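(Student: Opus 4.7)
The plan is to partition the affine subspace $A$ by the suffix map $\pi_1$ and then apply the dot-product decomposition from \cref{eq:recursion_on_dot_product} fiber-by-fiber. Because the inner product on $\mathbb{F}_2^2$ vanishes on three of the four possible suffixes $\sigma \in \mathbb{F}_2^2$ and equals $1$ only when $\sigma=11$, membership in $R_k$ versus $S_k$ descends to membership in $R_{k-1}$ versus $S_{k-1}$ with a single swap at $\sigma=11$, which is exactly the asymmetry encoded in \cref{eq:structure_of_a_cap_r,eq:structure_of_a_cap_s}.

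First I would write $A = v_0 + W$ for a linear subspace $W\subset\mathbb{F}_2^{2k}$ of dimension $d$ and observe that $F := \pi_1(A) = \pi_1(v_0) + \pi_1(W)$ is an affine subspace of $\mathbb{F}_2^2$. For each $\sigma\in F$, let $A_\sigma := A \cap \pi_1^{-1}(\sigma)$; these fibers partition $A$ and, by rank--nullity applied to $\pi_1|_W$, are all affine translates of the common linear subspace $W\cap\ker\pi_1$, of dimension $d - \dim\pi_1(W) = \dim A - \dim F$. I would then define $A_\sigma' := \pi_0(A_\sigma) \subset \pi_0(A)$ and $W' := \pi_0(W\cap\ker\pi_1)$. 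Since $(\pi_0,\pi_1)$ is a linear bijection with inverse $\otimes$, the projection $\pi_0$ restricts to an injection on $\ker\pi_1$, so each $A_\sigma'$ is an affine translate of the same $W'$, and $\dim W' = \dim(W\cap\ker\pi_1) = \dim A - \dim F$. Using $\otimes$ to reassemble roots and suffixes, this rewrites $A$ as the disjoint union $A = \bigsqcup_{\sigma\in F} A_\sigma' \otimes \sigma$, which already verifies the auxiliary dimension claims of the lemma.

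The final step is to intersect this decomposition with $R_k$ and $S_k$. By \cref{eq:recursion_on_dot_product}, for $w = r \otimes \sigma$ one has $w \in R_k$ iff $r$ lies in $R_{k-1}$ (when $\sigma \ne 11$) or in $S_{k-1}$ (when $\sigma = 11$); the analogous case split, with $R$ and $S$ interchanged at $\sigma=11$, gives the rule for $S_k$. Because distinct suffixes give rise to distinct points under $\otimes$, the resulting unions over $\sigma \in F$ are automatically disjoint, yielding \cref{eq:structure_of_a_cap_r,eq:structure_of_a_cap_s} directly. I expect the main bookkeeping obstacle to be the middle step: one must verify that $W'$ is genuinely independent of $\sigma$ (equivalently, that $\pi_0$ is injective on every fiber) and that $\dim W' = \dim A - \dim F$, because the companion upper-bound arguments in \cref{sec:proofs_of_bounds_on_a_intersect_s} will induct on $k$ using precisely this dimension drop. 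Once the injectivity of $\pi_0|_{\ker\pi_1}$ via the bijection $(\pi_0,\pi_1)$ is pinned down, the rest reduces to a four-way case split on the possible suffixes.
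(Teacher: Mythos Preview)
Your proposal is correct and follows essentially the same approach as the paper: partition $A$ by the suffix map $\pi_1$, identify each fiber as a translate of a common linear subspace, reassemble via $\otimes$, and then intersect with $R_k$, $S_k$ using \cref{eq:recursion_on_dot_product}. The only notable difference is that the paper establishes the common linear subspace $W'$ and its dimension by explicitly reducing a basis of the linear part of $A$ so that at most two basis vectors have nonzero suffix, whereas you obtain the same conclusion more directly via rank--nullity for $\pi_1|_W$ together with the injectivity of $\pi_0$ on $\ker\pi_1$; both routes are equivalent since the paper's $\mathrm{span}(\mathcal{B}_{00})$ is exactly your $W\cap\ker\pi_1$.
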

\begin{proof}
We can write $A = a + V$ where $a\in\mathbb{F}_2^{2k}$ is a fixed bit string and $V$ is a $d$-dimensional linear subspace of $\mathbb{F}_2^{2k}$. Let $\mathcal{B}=\{b_1,\ldots,b_d\}$ be an arbitrary basis of $V$. Suppose $b_1,b_2\in \mathcal{B}$ are two distinct bit strings with suffix $01$. Then $b_1+b_2$ has suffix $00$, so $\{b_1+b_2\}\cup \mathcal{B}\setminus\{b_2\}$ is another basis of $V$ with the number of elements with suffix $01$ reduced by one. By iterating this procedure, we can reduce the number of elements with suffix $01$ to at most one.

Similarly, we can reduce the number of basis elements with suffixes $10$ and $11$ to at most one. Moreover, if $\mathcal{B}$ contains three elements with distinct nonzero suffixes, then we can replace one of them with an element with suffix $00$ by adding the other two elements to it. Thus, $V$ has a basis $\mathcal{B}$ that takes one of the following three forms
\begin{align}\label{eq:basis_structure}
    \mathcal{B} = \begin{cases}
        \mathcal{B}_{00} \\
        \mathcal{B}_{00} \sqcup \{b\} \\
        \mathcal{B}_{00} \sqcup \{b, c\}
    \end{cases}
\end{align}
where $\mathcal{B}_{00}$ consists of bit strings with suffix $00$ and $b,c$ are two bit strings with distinct nonzero suffixes.

Define $F:=\pi_1(A)$ as the set of suffixes of the elements of $A=a+V$. The projector $\pi_1$ is a linear map, so $F$ is an affine subspace of $\mathbb{F}_2^2$ and therefore it has either one, two or four elements. Indeed,
\begin{align}
    \mathcal{B}=\mathcal{B}_{00} & \iff F = \{\sigma\} \\
    \mathcal{B}=\mathcal{B}_{00} \sqcup \{b\} & \iff F = \{\sigma,\origtau\} \\
    \mathcal{B}=\mathcal{B}_{00} \sqcup \{b,c\} & \iff F = \{00,01,10,11\}
\end{align}
where $\sigma=\pi_1(a)$ and $\origtau=\pi_1(a+b)$.

For every suffix $\sigma\in F$, let $A_\sigma := A \cap \pi_1^{-1}(\{\sigma\})$ be the set of elements in $A$ with suffix $\sigma$ and $A_\sigma' := \pi_0(A_\sigma)$ the set of roots of those elements of $A$. A singleton set is an affine space, the intersection of affine subspaces is an affine subspace, and $\pi_0$, $\pi_1$ are linear maps, so $A_\sigma$ and $A_\sigma'$ are affine subspaces of $\mathbb{F}_2^{2k}$ and $\mathbb{F}_2^{2k-2}$, respectively. We can express $A$ in terms of $A_\sigma'$ for $\sigma\in F$ as
\begin{align}
    \label{eq:structure_of_a}
    A = \bigsqcup_{\sigma\in F} A_\sigma = \bigsqcup_{\sigma\in F} \pi_0(A_\sigma) \otimes\sigma = \bigsqcup_{\sigma\in F} A_\sigma'\otimes\sigma.
\end{align}
Define $W:=\mathrm{span}(\mathcal{B}_{00})$, so that $A_\sigma = u_\sigma + W$ where $u_\sigma\in\mathbb{F}_2^{2k}$ is any bit string in $A$ with suffix $\pi_1(u_\sigma)=\sigma$. Then, $A_\sigma' = \pi_0(u_\sigma) + W'$ where $W':=\pi_0(W)$. But $\pi_0$ is a bijection on $W$, so $\dim A_\sigma'=\dim W' = \dim W = |\mathcal{B}_{00}|$.

Finally, intersecting both sides of \eqref{eq:structure_of_a} with $R_k$ (respectively, $S_k$), we obtain
\begin{align}
    A \cap R_k = \bigsqcup_{\sigma\in F} \begin{cases}
        (A_\sigma'\cap S_{k-1})\otimes\sigma & \text{if}\quad\sigma = 11 \\
        (A_\sigma'\cap R_{k-1})\otimes\sigma & \text{if}\quad\sigma \ne 11
    \end{cases} \\
    A \cap S_k = \bigsqcup_{\sigma\in F} \begin{cases}
        (A_\sigma'\cap R_{k-1})\otimes\sigma & \text{if}\quad\sigma = 11 \\
        (A_\sigma'\cap S_{k-1})\otimes\sigma & \text{if}\quad\sigma \ne 11
    \end{cases}
\end{align}
where we switch between $R_k$ and $S_k$ when $\sigma=11$ as needed in light of \eqref{eq:recursion_on_dot_product}.
\end{proof}

\begin{remark}
\label{rm:five_cases}
The recursive structure formulas take five forms depending on $0\leq\dim F\leq 2$ and on whether $11\in F$.
\begin{enumerate}
    \item If $F=\{\sigma\}$ with $\sigma\ne 11$, then
    \begin{align}
        A\cap R_k &= (A_\sigma'\cap R_{k-1})\otimes\sigma \\
        A\cap S_k &= (A_\sigma'\cap S_{k-1})\otimes\sigma.
    \end{align}
    \item If $F=\{11\}$, then
    \begin{align}
        A\cap R_k &= (A_{11}'\cap S_{k-1})\otimes 11 \\
        A\cap S_k &= (A_{11}'\cap R_{k-1})\otimes 11.
    \end{align}
    \item If $F=\{\sigma,\origtau\}$ with $11\notin F$ and $\sigma\ne\origtau$, then
    \begin{align}
        A\cap R_k &= \left((A_\sigma' \cap R_{k-1})\otimes\sigma\right) \sqcup \left((A_{\origtau}' \cap R_{k-1})\otimes\origtau \right)\\
        A\cap S_k &= \left((A_\sigma'\cap S_{k-1})\otimes\sigma\right) \sqcup \left((A_\origtau' \cap S_{k-1})\otimes\origtau \right).
    \end{align}
    \item If $F=\{\sigma,11\}$ with $\sigma\ne 11$, then
    \begin{align}
        A\cap R_k &= \left((A_\sigma'\cap R_{k-1})\otimes\sigma\right) \sqcup \left((A_{11}'\cap S_{k-1})\otimes 11 \right)\\
        A\cap S_k &= \left((A_\sigma'\cap S_{k-1})\otimes\sigma\right) \sqcup \left((A_{11}'\cap R_{k-1})\otimes 11 \right).
    \end{align}
    \item If $F=\{00,01,10,11\}$, then
    \begin{align}
        A\cap R_k = ((A_{11}'\cap S_{k-1})\otimes 11) \sqcup \bigsqcup_{\sigma\in\{00,01,10\}} (A_\sigma'\cap R_{k-1})\otimes \sigma \\
        A\cap S_k = ((A_{11}'\cap R_{k-1})\otimes 11) \sqcup \bigsqcup_{\sigma\in\{00,01,10\}} (A_\sigma'\cap S_{k-1})\otimes \sigma.
    \end{align}
\end{enumerate}
\end{remark}

Lemma \ref{lm:recursive_structure_formulas} enables inductive proof of the upper bound on $|A\cap R_k|$ and $|A\cap S_k|$ for affine spaces with dimension $\dim A=2k-1$, which we defer to Lemma \ref{lm:case_3}. Below, we prove that $|A\cap R_k| \leq 2^{d-1}+2^{k-1}$ and $|A\cap S_k| \leq 2^{d-1}+2^{k-2}$ for affine $A$ with $k \leq \dim A < 2k-1$ which requires more sophisticated tools in addition to Lemma \ref{lm:recursive_structure_formulas}. The right-hand sides $2^{d-1}+2^{k-1}$ and $2^{d-1}+2^{k-2}$ indicate that the cardinality of $A\cap S_k$ (respectively, $A\cap R_k$) can exceed $\frac12|A|$ by at most $2^{k-2}$ (respectively, $2^{k-1}$). Thus, $|A\cap R_k|$ can exceed $\frac12|A|$ by twice as much as $|A\cap S_k|$, so if a recursive structure formula for $A\cap S_k$ contains both an $A\cap R_{k-1}$ term and an $A\cap S_{k-1}$ term, then a naive inductive argument fails. We refer to such formulas as \textit{mixed} recursive structure formulas. They have either two or four terms
\begin{align}
    \label{eq:hard_recursive_structure_formula_sr}
    A\cap S_k =& \left((A_\sigma'\cap S_{k-1})\otimes 00 \right) \sqcup \left((A_{11}'\cap R_{k-1})\otimes 11 \right) \\
    \label{eq:hard_recursive_structure_formula_sssr}
    A\cap S_k =& \left((A_{00}'\cap S_{k-1})\otimes 00 \right) \sqcup \left((A_{01}'\cap S_{k-1})\otimes 01 \right) \sqcup \left((A_{10}'\cap S_{k-1})\otimes 10 \right) \sqcup \left((A_{11}'\cap R_{k-1})\otimes 11 \right).
\end{align}

These recursive structure formulas do not cause issues in the proof of Lemma \ref{lm:case_3} where we work around the challenge by exploiting the fact that when $\dim A=2k-1$ not all affine subspaces on the right hand side are distinct. To face a more general situation below, we will exploit hidden affine structures in certain combinations of sets of the form $A\cap S_{k-1}$ and $A\cap R_{k-1}$ that arise from the cancellation of quadratic terms in the indicator functions of these sets as they are combined together.

Let $W$ be a linear subspace of $\mathbb{F}_2^{2k}$. If a function $f$ from $W$ to the underlying field of scalars $\mathbb{F}_2$ is linear, then we will refer to $f$ as a \textit{linear functional}. Let $a\in\mathbb{F}_2^{2k}$ and let $A:=a+W$ be an affine subspace of $\mathbb{F}_2^{2k}$. If a function $g$ from $A$ to the underlying field of scalars $\mathbb{F}_2$ can be written as $g(a+w)=c+f(w)$ for some $c\in\mathbb{F}_2$ and some linear functional $f$, then we will refer to $g$ as an \textit{affine functional}. The significance of affine functionals to our arguments derives from the way they partition their domain. If $g$ is a non-constant affine functional, then $f$ is a non-constant linear functional and $f^{-1}(1)$ is a coset of $\ker f=f^{-1}(0)$. Therefore, $|f^{-1}(0)|=|f^{-1}(1)|$. But then also $|g^{-1}(0)|=|g^{-1}(1)|$. Thus, if $g:A\to\mathbb{F}_2$ is an affine functional, then each $g^{-1}(0)$ and $g^{-1}(1)$ is either empty or affine. Moreover, if both $g^{-1}(0)$ and $g^{-1}(1)$ are non-empty, then $\dim g^{-1}(0) = \dim g^{-1}(1) = \dim A - 1$.

In order to express the value of the indicator function $\mathbb{1}_{S_k}:\mathbb{F}_2^{2k}\to\mathbb{F}_2$ of $S_k$
\begin{align}
    \mathbb{1}_{S_k}(x) := \sum_{i=1}^kx_ix_{k+i} = \langle\pi_L(x),\pi_R(x)\rangle
\end{align}
on bit strings in an affine space $A:=a+W$, we introduce the (skew-)symmetric bilinear form $\omega:\mathbb{F}_2^{2k}\to\mathbb{F}_2$
\begin{align}
    \omega(x,y) := \sum_{i=1}^k \left(x_iy_{k+i} + x_{k+i}y_i\right) = \langle\pi_L(x),\pi_R(y)\rangle + \langle\pi_L(y),\pi_R(x)\rangle
\end{align}
where $\pi_L,\pi_R:\mathbb{F}_2^{2k}\to\mathbb{F}_2^k$ are linear projectors defined via
\begin{align}
    \pi_L(ab)=a,\quad\pi_R(ab)=b
\end{align}
for any $a,b\in\mathbb{F}_2^k$. Indeed, we can express $\mathbb{1}_{S_k}(a+x)$ for any $x\in W$ in terms of $\omega$ as
\begin{align}
    \label{eq:q(a+w)}
    \mathbb{1}_{S_k}(a+x) &= \langle a_L + x_L, a_R + x_R\rangle = \mathbb{1}_{S_k}(a) + \omega(a,x) + \mathbb{1}_{S_k}(x)
\end{align}
where $a_L:=\pi_L(a)$, $a_R:=\pi_R(a)$, $x_L:=\pi_L(x)$, and $x_R:=\pi_R(x)$. This way of writing $\mathbb{1}_{S_k}$ allows us to construct affine subspaces from sets of the form $A\cap R_k$ and $A\cap S_k$ by arranging for the cancellation of the quadratic terms $\mathbb{1}_{S_k}(x)$.

We can use these objects and their properties to prove Lemma \ref{lm:hard_rsf_2terms} and \ref{lm:hard_rsf_4terms} that enable us to express the right hand side of \eqref{eq:hard_recursive_structure_formula_sr} and \eqref{eq:hard_recursive_structure_formula_sssr} in terms of cardinalities of sets involving $S_{k-1}$, but no $R_{k-1}$.

\begin{lemma}[Proxy expression for mixed recursive structure formulas with two terms]
    \label{lm:hard_rsf_2terms}
    \,\newline
    Let $k$ be a positive integer and $a_1,a_2\in\mathbb{F}_2^{2k}$. For any two affine spaces $A_1:=a_1+W$ and $A_2:=a_2+W$ arising as translations of the same linear subspace $W\subset\mathbb{F}_2^{2k}$, there exist two sets $B_1$ and $B_2$ each of which is either empty or affine and such that
    \begin{align}
        |A_1\cap R_k| + |A_2\cap S_k| = |B_1| + 2\cdot|B_2\cap S_k|
    \end{align}
    and $|B_1|+|B_2|=|W|$.
\end{lemma}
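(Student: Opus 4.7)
The plan is to reduce the counting problem to the behavior of two affine functionals on $W$, using the bilinear-form identity \eqref{eq:q(a+w)} twice: first to absorb the quadratic obstruction $\mathbb{1}_{S_k}(x)$ against the basepoints $a_1,a_2$, and then in reverse to recast the surviving coincidence count as an $S_k$-membership count on an affine translate of $W$.

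For $x\in W$, \eqref{eq:q(a+w)} gives $\mathbb{1}_{S_k}(a_i+x)=\mathbb{1}_{S_k}(a_i)+\omega(a_i,x)+\mathbb{1}_{S_k}(x)$, so if we define affine functionals $\alpha,\beta:W\to\mathbb{F}_2$ by $\alpha(x):=\mathbb{1}_{S_k}(a_1)+\omega(a_1,x)$ and $\beta(x):=1+\mathbb{1}_{S_k}(a_2)+\omega(a_2,x)$, then $a_1+x\in R_k \iff \mathbb{1}_{S_k}(x)=\alpha(x)$ and $a_2+x\in S_k \iff \mathbb{1}_{S_k}(x)=\beta(x)$. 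Since $\alpha-\beta$ is itself an affine functional on $W$, the partition $W=T\sqcup B$ with $T:=\{x\in W:\alpha(x)\ne\beta(x)\}$ and $B:=\{x\in W:\alpha(x)=\beta(x)\}$ expresses $W$ as a disjoint union of two sets each of which is either empty or affine, with $|T|+|B|=|W|$.

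Next I would tally contributions: on $T$ the two defining equations have different right-hand sides, so at most one can hold and each $x\in T$ contributes exactly $1$ to $|A_1\cap R_k|+|A_2\cap S_k|$; on $B$ the conditions coincide, so each $x\in B$ contributes $0$ or $2$. This gives
\[
|A_1\cap R_k|+|A_2\cap S_k| = |T| + 2N, \qquad N := \bigl|\{x\in B : \mathbb{1}_{S_k}(x)=\beta(x)\}\bigr|.
\]
The crux is re-expressing $N$ as $|B_2\cap S_k|$ for some affine (or empty) $B_2$ with $|B_2|=|B|$. Running \eqref{eq:q(a+w)} in reverse, the equation $\mathbb{1}_{S_k}(x)=\beta(x)=1+\mathbb{1}_{S_k}(a_2)+\omega(a_2,x)$ rearranges exactly to $\mathbb{1}_{S_k}(a_2+x)=1$, i.e.\ $a_2+x\in S_k$. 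Hence setting $B_1:=T$ and $B_2:=a_2+B$ (affine translates of affine or empty sets, hence themselves affine or empty) gives $|B_1|+|B_2|=|W|$ and $N=|B_2\cap S_k|$, proving the identity.

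The three natural sub-cases collapse cleanly into this single formula: when $\omega(a_1+a_2,\cdot)|_W\not\equiv 0$, we have $|T|=|B|=|W|/2$ and both $B_1,B_2$ are nontrivial affine subspaces; when $\omega(a_1+a_2,\cdot)|_W\equiv 0$ and $\mathbb{1}_{S_k}(a_1)\ne\mathbb{1}_{S_k}(a_2)$ (so $\alpha\equiv\beta$), $T$ is empty and $B_2=A_2$; and when $\omega(a_1+a_2,\cdot)|_W\equiv 0$ and $\mathbb{1}_{S_k}(a_1)=\mathbb{1}_{S_k}(a_2)$ (so $\alpha\equiv\beta+1$), $B$ is empty and $B_1=W$. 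The main conceptual obstacle is recognizing that the nonlinear count $N$ on $W$ can be converted into a linear $S_k$-membership count at all: the quadratic form $\mathbb{1}_{S_k}|_W$ is generically nontrivial, but the \emph{joint} appearance of $A_1\cap R_k$ and $A_2\cap S_k$ aligns the two quadratic obstructions on $B$, and \eqref{eq:q(a+w)} then converts that alignment into simple membership in the translated set $a_2+B$.
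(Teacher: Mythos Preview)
Your proof is correct and takes essentially the same approach as the paper. Both arguments hinge on the observation that $\mathbb{1}_{S_k}(a_1+x)+\mathbb{1}_{S_k}(a_2+x)=\omega(a_1+a_2,x)+\mathbb{1}_{S_k}(a_1)+\mathbb{1}_{S_k}(a_2)$ is affine on $W$ (your $\alpha+\beta+1$), use this to split $W$ into two affine-or-empty pieces, and then identify the residual count with $|(a_2+W_1)\cap S_k|$; your sets $T,B$ coincide with the paper's $W_0,W_1$, and your $B_2$ is literally the paper's $B_2=a_2+W_1$. The only difference is packaging: the paper first introduces the four-way partition $W_{rr},W_{rs},W_{sr},W_{ss}$ and then regroups, whereas you go straight to the two-way split via the functionals $\alpha,\beta$, which is slightly more economical.
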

\begin{proof}
We begin by partitioning $W$ into four disjoint sets
\begin{align}
    \label{eq:w_rr}
    W_{rr} &:= \{x\in W\,|\,(a_1+x\in R_k)\,\land\,(a_2+x\in R_k)\} \\
    \label{eq:w_rs}
    W_{rs} &:= \{x\in W\,|\,(a_1+x\in R_k)\,\land\,(a_2+x\in S_k)\} \\
    \label{eq:w_sr}
    W_{sr} &:= \{x\in W\,|\,(a_1+x\in S_k)\,\land\,(a_2+x\in R_k)\} \\
    \label{eq:w_ss}
    W_{ss} &:= \{x\in W\,|\,(a_1+x\in S_k)\,\land\,(a_2+x\in S_k)\}
\end{align}
so that $W = W_{rr}\sqcup W_{rs} \sqcup W_{sr} \sqcup W_{ss}$. Consider the sets
\begin{align}
    \label{eq:rs_sets}
    W_0 := W_{rr} \sqcup W_{ss} \quad
    W_1 := W_{rs} \sqcup W_{sr}.
\end{align}
The indicator function of $W_1$ restricted to $W$ can be written as
\begin{align}
    \mathbb{1}_{W_1}(x) &= \mathbb{1}_{R_k}(a_1+x)\cdot\mathbb{1}_{S_k}(a_2+x) + \mathbb{1}_{S_k}(a_1+x)\cdot\mathbb{1}_{R_k}(a_2+x) \\
    &= \left(1+\mathbb{1}_{S_k}(a_1+x)\right)\cdot\mathbb{1}_{S_k}(a_2+x) + \mathbb{1}_{S_k}(a_1+x)\cdot\left(1+\mathbb{1}_{S_k}(a_2+x)\right) \\
    &= \mathbb{1}_{S_k}(a_1+x) + \mathbb{1}_{S_k}(a_2+x) \\
    &= \omega(a_1+a_2,x) + \mathbb{1}_{S_k}(a_1) + \mathbb{1}_{S_k}(a_2)
\end{align}
where we dropped the terms $\mathbb{1}_{S_k}(a_1+x)\cdot\mathbb{1}_{S_k}(a_2+x)$ and $\mathbb{1}_{S_k}(x)$, because $\mathbb{F}_2$ has characteristic $2$. But $\omega$ is bilinear, so $\mathbb{1}_{W_1}$ restricted to $W$ is an affine functional on $W$. Therefore, $W_1$ is either empty or an affine subspace of $W$. Similarly, the indicator function of $W_0$ restricted to $W$ is $\mathbb{1}_{W_0}(x) = \mathbb{1}_{W_1}(x) + 1$ and hence also an affine functional on $W$. Therefore, $W_0$ is either empty or an affine subspace of $W$. Moreover, $W=W_0\sqcup W_1$, so if both $W_0$ and $W_1$ are non-empty, then $\dim W_0=\dim W_1=\dim W - 1$. 
Translation is bijective, so definitions \eqref{eq:w_rr}-\eqref{eq:w_ss} imply that
\begin{align}
    \label{eq:aras_in_terms_of_ws}
    |A_1\cap R_k| + |A_2\cap S_k| = |W_{rr}| + 2\cdot|W_{rs}| + |W_{ss}|.
\end{align}
Finally, define $B_1:=a_1+W_0$ and $B_2:=a_2+W_1$, so that $|B_1|+|B_2|=|W_0|+|W_1|=|W|$ and rewrite \eqref{eq:aras_in_terms_of_ws} as
\begin{align}
    |A_1\cap R_k| + |A_2\cap S_k| = |B_1| + 2\cdot|B_2\cap S_k|
\end{align}
where we used $|B_1|=|W_{rr}|+|W_{ss}|$ and $|W_{rs}|=|B_2\cap S_k|$.
\end{proof}

In order to facilitate the use of Lemma \ref{lm:hard_rsf_2terms} in an inductive proof, we note the following conditional upper bound it implies.

\begin{corollary}[Bound for mixed recursive structure formulas with two terms]
    \label{cr:hard_rsf_2terms}
    \,\newline
    Let $e$ and $k>1$ be positive integers. If $|B\cap S_{k-1}| \leq 2^{\dim B-1} + 2^{k-3}$ for every affine subspace $B$ of $\mathbb{F}_2^{2k-2}$ with $\dim B\in\{e-1,e\}$, then
    \begin{align}
        |A_1\cap R_{k-1}| + |A_2\cap S_{k-1}| \leq 2^e + 2^{k-2}
    \end{align}
    for every pair of $e$-dimensional affine subspaces $A_1:=a_1+W$ and $A_2:=a_2+W$ of $\mathbb{F}_2^{2k-2}$ that arise as translations of the same linear subspace $W$ of $\mathbb{F}_2^{2k-2}$.
\end{corollary}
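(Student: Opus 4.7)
The plan is to apply \cref{lm:hard_rsf_2terms} directly, substituting $k-1$ for $k$. Since $A_1$ and $A_2$ are $e$-dimensional affine subspaces of $\mathbb{F}_2^{2k-2}$ arising as translations of a common linear subspace $W$ of dimension $\dim W = e$, the lemma produces two sets $B_1$ and $B_2$, each either empty or affine, with
\begin{equation*}
|A_1\cap R_{k-1}| + |A_2\cap S_{k-1}| = |B_1| + 2\cdot|B_2\cap S_{k-1}|,
\end{equation*}
and $|B_1|+|B_2| = |W| = 2^e$. Recall from the proof of \cref{lm:hard_rsf_2terms} that $B_1 = a_1 + W_0$ and $B_2 = a_2 + W_1$, where $W_0, W_1$ partition $W$ and are each either empty or an affine subspace of codimension $1$ in $W$.

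The next step is a short case analysis based on whether $W_0$ and $W_1$ are empty. If both are non-empty, then each has dimension $e-1$, so $|B_1|=2^{e-1}$ and $\dim B_2 = e-1$. Invoking the hypothesis on $B_2$ gives $|B_2\cap S_{k-1}| \leq 2^{e-2} + 2^{k-3}$, hence
\begin{equation*}
|A_1\cap R_{k-1}| + |A_2\cap S_{k-1}| \leq 2^{e-1} + 2\bigl(2^{e-2} + 2^{k-3}\bigr) = 2^e + 2^{k-2}.
\end{equation*}
If $W_0=\emptyset$, then $B_1=\emptyset$ and $B_2 = a_2 + W$ has dimension $e$; applying the hypothesis to $B_2$ yields $|B_2\cap S_{k-1}|\leq 2^{e-1}+2^{k-3}$, giving the same bound. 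If $W_1=\emptyset$, then $|B_2\cap S_{k-1}| = 0$ and $|B_1| = 2^e$, which is already $\leq 2^e + 2^{k-2}$.

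I expect no substantive obstacle here: the real work has already been done in \cref{lm:hard_rsf_2terms}, which converts the mixed $R/S$ combination into the single-type expression $|B_1|+2|B_2\cap S_{k-1}|$ where the inductive hypothesis on $S_{k-1}$ can be applied uniformly. The only subtlety to watch is that we must allow both $\dim B = e$ and $\dim B = e-1$ in the hypothesis, since the codimension-drop in $W$ depends on whether $W_0$ and $W_1$ are both non-empty; this is precisely why the corollary is stated for $\dim B \in \{e-1, e\}$. The factor $2^{k-2}$ on the right-hand side arises cleanly from $2\cdot 2^{k-3}$ in the dominant non-degenerate case, which confirms that the hypothesis is tight for the induction to propagate.
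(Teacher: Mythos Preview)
Your proposal is correct and follows essentially the same approach as the paper: invoke \cref{lm:hard_rsf_2terms} (with $k-1$ in place of $k$), then do a three-case split on whether $B_1$ or $B_2$ is empty, applying the hypothesis to $B_2$ in the non-empty cases. The only minor slip is the phrase ``each either empty or an affine subspace of codimension $1$ in $W$''---in the degenerate cases one of $W_0,W_1$ is all of $W$, not codimension $1$---but your subsequent case analysis handles this correctly, so the argument goes through.
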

\begin{proof}
By Lemma \ref{lm:hard_rsf_2terms}, there exist two sets $B_1$ and $B_2$, each of which is either empty or affine and such that
\begin{align}
    \label{eq:ar_as=b+2bs}
    |A_1\cap R_{k-1}| + |A_2\cap S_{k-1}| = |B_1| + 2\cdot|B_2\cap S_{k-1}|
\end{align}
and $|B_1|+|B_2|=|W|$. If $B_1=\emptyset$, then $B_2$ is an affine subspace of $\mathbb{F}_2^{2k-2}$ of dimension $e$. Therefore,
\begin{align}
    |A_1\cap R_{k-1}| + |A_2\cap S_{k-1}| \leq 0 + 2\cdot(2^{e-1} + 2^{k-3}) = 2^e + 2^{k-2}.
\end{align}
If $B_2=\emptyset$, then $B_1$ is an affine subspace of $\mathbb{F}_2^{2k-2}$ of dimension $e$. Therefore,
\begin{align}
    |A_1\cap R_{k-1}| + |A_2\cap S_{k-1}| \leq 2^e + 2\cdot0 < 2^e + 2^{k-2}.
\end{align}
Finally, if neither $B_1$ nor $B_2$ is empty, then both are affine and $\dim B_1=\dim B_2=e-1$. Therefore,
\begin{align}
    |A_1\cap R_{k-1}| + |A_2\cap S_{k-1}| \leq 2^{e-1} + 2\cdot(2^{e-2} + 2^{k-3}) = 2^e + 2^{k-2}
\end{align}
completing the proof of the Corollary.
\end{proof}

A similar, although technically more complicated, proof establishes an analogous result for mixed recursive formulas with four terms, see Lemma \ref{lm:hard_rsf_4terms} and Corollary \ref{cr:hard_rsf_4terms}. Equipped with Corollary \ref{cr:hard_rsf_2terms} above for dealing with two-term mixed recursive structure formulas and Corollary \ref{cr:hard_rsf_4terms} for dealing with four-term mixed recursive structure formulas, we are now ready to prove the upper bounds on $|A\cap R_k|$ and $|A\cap S_k|$ for affine spaces with $k\leq \dim A < 2k-1$.

\begin{lemma}[Upper bound on $|A\cap R_k|$ and $|A\cap S_k|$ when $k\leq \dim A < 2k-1$]
    \label{lm:case_2}
    \,\newline
    Let $k,d$ be two positive integers and let $A \subset \mathbb{F}_2^{2k}$ be a $d$-dimensional affine subspace of $\mathbb{F}_2^{2k}$. If $k \leq d < 2k-1$, then
    \begin{align}
        |A\cap R_k| & \leq 2^{d-1} + 2^{k-1} \\
        |A\cap S_k| & \leq 2^{d-1} + 2^{k-2}.
    \end{align}
\end{lemma}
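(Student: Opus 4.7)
The plan is to prove the lemma by induction on $k$, taking as the joint inductive hypothesis the full family of bounds in \eqref{eq:upper_bound_on_a_intersection_sk} together with their counterparts for $R_{k-1}$, applied to all affine subspaces $B\subset\mathbb{F}_2^{2(k-1)}$ (so including the boundary regimes $\dim B<k-1$, $\dim B=2k-3$, and $\dim B=2k-2$ that are handled by the deferred lemmas such as Lemma \ref{lm:case_3}). The base case $k=2$, $d=2$ can be settled by direct enumeration of 2-dimensional affine subspaces of $\mathbb{F}_2^4$. For the inductive step, I would apply Lemma \ref{lm:recursive_structure_formulas} with $F:=\pi_1(A)$ to decompose $A\cap R_k$ and $A\cap S_k$, and then split into the five sub-cases of Remark \ref{rm:five_cases} according to $|F|$ and whether $11\in F$.

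The three ``pure'' cases (Cases 1, 2, and 3 of Remark \ref{rm:five_cases}) are immediate: each of $|A\cap R_k|$ and $|A\cap S_k|$ is either a single term $|A_\sigma'\cap R_{k-1}|$ or $|A_\sigma'\cap S_{k-1}|$, or a sum of two such terms with matching labels, where $\dim A_\sigma'\in\{d,d-1\}$. Direct substitution of the inductive bound, using $2\cdot(2^{(d-1)-1}+2^{k-3})=2^{d-1}+2^{k-2}$ for the $S$-sum and the analogous identity $2\cdot(2^{(d-1)-1}+2^{k-2})=2^{d-1}+2^{k-1}$ for the $R$-sum, finishes these cases; in Case 2, the swap $R_k\leftrightarrow S_k$ only tightens the estimate.

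The two ``mixed'' cases (Case 4 and Case 5 of Remark \ref{rm:five_cases}) are where a naive term-by-term argument fails, since $|A\cap R_k|$ can exceed $\tfrac12|A|$ by twice as much as $|A\cap S_k|$. For these, I would invoke Corollary \ref{cr:hard_rsf_2terms} on the two-term formula of Case 4 and Corollary \ref{cr:hard_rsf_4terms} on the four-term formula of Case 5, noting that the hypothesis that the relevant sets $A_\sigma',A_{11}'$ are translations of a single linear subspace $W'\subset\mathbb{F}_2^{2k-2}$ is supplied directly by Lemma \ref{lm:recursive_structure_formulas}. Each corollary collapses the mixed sum into a combination involving only $S_{k-1}$, giving exactly the target bound $2^{d-1}+2^{k-2}$ on both $|A\cap R_k|$ and $|A\cap S_k|$ in these cases.

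The main obstacle I anticipate is managing the boundary behavior of the inductive hypothesis at $\dim A_\sigma'\in\{2k-3,2k-2\}$, where the bounds on $|B\cap S_{k-1}|$ assume the exceptional forms $2^{2k-4}$ and $2^{2k-3}-2^{k-2}$ rather than the generic $2^{\dim B-1}+2^{k-3}$. One has to check that plugging these exceptional values into both the pure-case arithmetic and the hypotheses of the two corollaries still delivers the target bounds $2^{d-1}+2^{k-1}$ and $2^{d-1}+2^{k-2}$ for every $d\in\{k,\ldots,2k-2\}$. The case $d=2k-2$, where $\dim A_\sigma'$ lands precisely on the deferred boundary, requires the most careful bookkeeping.
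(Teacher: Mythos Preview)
Your plan is essentially the paper's own proof: induction on $k$, the recursive structure formulas of Lemma~\ref{lm:recursive_structure_formulas}, the five-case split of Remark~\ref{rm:five_cases}, with the pure cases handled by direct substitution and the mixed cases by Corollaries~\ref{cr:hard_rsf_2terms} and~\ref{cr:hard_rsf_4terms}. Your identification of the boundary issue at $\dim A_\sigma'\in\{2k-3,2k-2\}$ is also the right concern, and the paper resolves it exactly as you suggest, by pulling in Lemma~\ref{lm:case_3} and Corollary~\ref{cr:case_4} (whose bounds are each at least as strong as the generic $2^{e-1}+2^{k-3}$ form).

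Two places in Case~5 need more care than your outline indicates. First, Corollary~\ref{cr:hard_rsf_4terms} is stated only for the pattern $S{+}S{+}S{+}R$, which is the decomposition of $|A\cap S_k|$; the decomposition of $|A\cap R_k|$ has the pattern $R{+}R{+}R{+}S$, to which the corollary does not literally apply, and your claim that both collapse to the bound $2^{d-1}+2^{k-2}$ is off (the target for $|A\cap R_k|$ is $2^{d-1}+2^{k-1}$). The paper handles $|A\cap R_k|$ in this case by merging $A_{00}'$ and $A_{01}'$ into a single $(d-1)$-dimensional affine space, bounding that term by the inductive hypothesis, and then applying Corollary~\ref{cr:hard_rsf_2terms} to the remaining pair $|A_{10}'\cap R_{k-1}|+|A_{11}'\cap S_{k-1}|$. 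Second, Lemma~\ref{lm:hard_rsf_4terms} (and hence Corollary~\ref{cr:hard_rsf_4terms}) requires the four translates $A_\sigma'$ to be disjoint; the paper separates out the sub-case where they are not all distinct and disposes of it with the identity $|A_1'|+2|A_2'\cap S_{k-1}|$ obtained from $R_{k-1}\sqcup S_{k-1}=\mathbb{F}_2^{2k-2}$. Neither point changes your overall architecture, but both would block a straight write-up of your outline as stated.
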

\begin{proof}
We prove the bounds by induction on $k$. If $k=1$, then there are no affine subspaces of dimension $1\leq \dim A < 1$, so the upper bounds are vacuously satisfied. Fix $k>1$ and suppose that for any affine subspace $B\subset\mathbb{F}_2^{2k-2}$ whose dimension $e:=\dim B$ satisfies $k-1 \leq e < 2k-3$, it is the case that
\begin{align}
    \label{eq:inductive_hypothesis_rk}
    |B\cap R_{k-1}| & \leq 2^{e-1} + 2^{k-2} \\
    \label{eq:inductive_hypothesis_sk}
    |B\cap S_{k-1}| & \leq 2^{e-1} + 2^{k-3}.
\end{align}
We will also use \eqref{eq:inductive_hypothesis_rk} and \eqref{eq:inductive_hypothesis_sk} when $e=2k-3$ and $e=2k-2$ which follow from Lemma \ref{lm:case_3} and Corollary \ref{cr:case_4}, respectively. We will show that for every affine subspace $A \subset \mathbb{F}_2^{2k}$ of dimension $d=\dim A$ with $k \leq d < 2k-1$, we have
\begin{align}
    |A\cap R_k| & \leq 2^{d-1} + 2^{k-1} \\
    |A\cap S_k| & \leq 2^{d-1} + 2^{k-2}.
\end{align}
By Remark \ref{rm:five_cases}, we have five cases to consider
\begin{align}
    \dim F = 0 &\quad\land\quad 11\notin F\\
    \dim F = 0 &\quad\land\quad 11\in F\\
    \dim F = 1 &\quad\land\quad 11\notin F\\
    \dim F = 1 &\quad\land\quad 11\in F\\
    \dim F = 2 &.
\end{align}
In the first case, where $F=\{\sigma\}$ with $\sigma\ne 11$, the recursive structure formulas \eqref{eq:structure_of_a_cap_r} and \eqref{eq:structure_of_a_cap_s} become
\begin{align}
    A \cap R_k &= (A_\sigma' \cap R_{k-1}) \otimes \sigma \\
    A \cap S_k &= (A_\sigma' \cap S_{k-1}) \otimes \sigma
\end{align}
where the affine subspace $A_\sigma'$ has dimension $\dim A_\sigma'=d-\dim F=d$. Using the inductive hypothesis, we obtain
\begin{align}
    |A \cap R_k| &= |A_\sigma' \cap R_{k-1}| \leq 2^{d-1} + 2^{k-2} < 2^{d-1} + 2^{k-1} \\
    |A \cap S_k| &= |A_\sigma' \cap S_{k-1}| \leq 2^{d-1} + 2^{k-3} < 2^{d-1} + 2^{k-2}.
\end{align}
In the second case, where $F=\{11\}$, the recursive structure formulas \eqref{eq:structure_of_a_cap_r} and \eqref{eq:structure_of_a_cap_s} become
\begin{align}
    A \cap R_k &= (A_{11}' \cap S_{k-1}) \otimes \sigma \\
    A \cap S_k &= (A_{11}' \cap R_{k-1}) \otimes \sigma
\end{align}
where the affine subspace $A_{11}'$ again has dimension $\dim A_{11}'=d-\dim F=d$. Using the inductive hypothesis, we find
\begin{align}
    |A \cap R_k| &= |A_{11}' \cap S_{k-1}| \leq 2^{d-1} + 2^{k-3} < 2^{d-1} + 2^{k-1} \\
    |A \cap S_k| &= |A_{11}' \cap R_{k-1}| \leq 2^{d-1} + 2^{k-2} = 2^{d-1} + 2^{k-2}.
\end{align}
In the third case, where $F=\{\sigma,\origtau\}$ with $\sigma\ne 11$, $\origtau\ne 11$ and $\sigma\ne\origtau$, the recursive structure formulas read
\begin{align}
    A \cap R_k &= ((A_\sigma' \cap R_{k-1}) \otimes \sigma) \sqcup ((A_\origtau' \cap R_{k-1}) \otimes \origtau) \\
    A \cap S_k &= ((A_\sigma' \cap S_{k-1}) \otimes \sigma) \sqcup ((A_\origtau' \cap S_{k-1}) \otimes \origtau).
\end{align}
Using the inductive hypothesis and $\dim A_\sigma'=\dim A_\origtau'=d-\dim F = d - 1$, we find
\begin{align}
    |A \cap R_k| & \leq 2\cdot(2^{d-2} + 2^{k-2}) = 2^{d-1} + 2^{k-1} \\
    |A \cap S_k| & \leq 2\cdot(2^{d-2} + 2^{k-3}) = 2^{d-1} + 2^{k-2}.
\end{align}
In the fourth case, where $F=\{\sigma,11\}$ with $\sigma\ne 11$, the recursive structure formulas \eqref{eq:structure_of_a_cap_r} and \eqref{eq:structure_of_a_cap_s} become
\begin{align}
    A \cap R_k &= ((A_{11}' \cap S_{k-1}) \otimes 11) \sqcup ((A_\sigma' \cap R_{k-1}) \otimes \sigma) \\
    A \cap S_k &= ((A_{11}' \cap R_{k-1}) \otimes 11) \sqcup ((A_\sigma' \cap S_{k-1}) \otimes \sigma)
\end{align}
where $\dim A_\sigma'=\dim A_{11}'=d-\dim F=d-1$. Consequently,
\begin{align}
    |A \cap R_k| &= |A_{11}' \cap S_{k-1}| + |A_\sigma' \cap R_{k-1}| \\
    |A \cap S_k| &= |A_{11}' \cap R_{k-1}| + |A_\sigma' \cap S_{k-1}|.
\end{align}
We obtain the upper bound on $|A \cap R_k|$ from the inductive hypothesis 
\begin{align}
    |A \cap R_k| & \leq 2^{d-2} + 2^{k-3} + 2^{d-2} + 2^{k-2} < 2^{d-1} + 2^{k-1}
\end{align}
and the upper bound on $|A \cap S_k|$ using Corollary \ref{cr:hard_rsf_2terms} with $e=d-1$
\begin{align}
    |A \cap S_k| &= |A_{11}'\cap R_{k-1}| + |A_\sigma'\cap S_{k-1}| \leq 2^{d-1} + 2^{k-2}
\end{align}
where the Corollary's assumption follows from the inductive hypothesis and Lemma \ref{lm:case_1}.

Finally, in the fifth case, where $F=\{00,01,10,11\}$, the recursive structure formulas \eqref{eq:structure_of_a_cap_r} and \eqref{eq:structure_of_a_cap_s} become
\begin{align}
    \label{eq:recursive_structure_formulas_case_5_rk}
    A\cap R_k =& \left((A_{00}'\cap R_{k-1})\otimes 00 \right) \sqcup \left((A_{01}'\cap R_{k-1})\otimes 01 \right) \sqcup \left((A_{10}'\cap R_{k-1})\otimes 10 \right) \sqcup \left((A_{11}'\cap S_{k-1})\otimes 11 \right) \\
    \label{eq:recursive_structure_formulas_case_5_sk}
    A\cap S_k =& \left((A_{00}'\cap S_{k-1})\otimes 00 \right) \sqcup \left((A_{01}'\cap S_{k-1})\otimes 01 \right) \sqcup \left((A_{10}'\cap S_{k-1})\otimes 10 \right) \sqcup \left((A_{11}'\cap R_{k-1})\otimes 11 \right).
\end{align}
There are two possibilities. Either all four affine subspaces $A_\sigma'$ of $\mathbb{F}_2^{2k-2}$ are distinct or they are not.

Suppose first that the affine subspaces $A_\sigma'$ are not all distinct, so that there are $A_1'$ and $A_2'$ such that
\begin{align}
    A_1' := A_{11}'=A_\rho',\quad A_2':=A_\sigma'=A_\origtau'.
\end{align}
In this case, we use the fact that $R_{k-1}\sqcup S_{k-1}=\mathbb{F}_2^{2k-2}$ to write
\begin{align}
    |A \cap R_k| &= |A_1'| + 2\cdot |A_2'\cap R_{k-1}| \\
    |A \cap S_k| &= |A_1'| + 2\cdot |A_2'\cap S_{k-1}|
\end{align}
which, using our inductive hypothesis and $\dim A_\sigma'=d-\dim F=d-2$ for all $\sigma\in\mathbb{F}_2^2$, implies
\begin{align}
    |A \cap R_k| &\leq 2^{d-2} + 2\cdot(2^{d-3} + 2^{k-2}) = 2^{d-1} + 2^{k-1} \\
    |A \cap S_k| &\leq 2^{d-2} + 2\cdot(2^{d-3} + 2^{k-3}) = 2^{d-1} + 2^{k-2}.
\end{align}

Suppose now that the four affine subspaces $A_\sigma'$ of $\mathbb{F}_2^{2k-2}$ are all distinct. By Lemma \ref{lm:recursive_structure_formulas}, the subspaces arise as translations of the same linear subspace $W' \subset \mathbb{F}_2^{2k-2}$ of dimension $\dim W'=d-\dim F=d-2$.

By combining $A_{00}'$ and $A_{01}'$ into $(d-1)$-dimensional affine space $A_{00,01}'$, we can rewrite \eqref{eq:recursive_structure_formulas_case_5_rk} as
\begin{align}
    A\cap R_k =& \left((A_{00,01}'\cap R_{k-1})\otimes 00 \right) \sqcup \left((A_{10}'\cap R_{k-1})\otimes 10 \right) \sqcup \left((A_{11}'\cap S_{k-1})\otimes 11 \right)
\end{align}
which leads to
\begin{align}
    |A\cap R_k| = |A_{00,01}'\cap R_{k-1}| + |A_{10}'\cap R_{k-1}| + |A_{11}'\cap S_{k-1}|.
\end{align}
If $d=k$, then $|A \cap R_k| \leq 2^{d-1} + 2^{k-1} = 2^d$. For affine spaces with $d>k$, we prove the upper bound on $|A \cap R_k|$ using Corollary \ref{cr:hard_rsf_2terms} with $e=d-2$
\begin{align}
    |A\cap R_k| & \leq 2^{d-2} + 2^{k-2} + 2^{d-2} + 2^{k-2} = 2^{d-1} + 2^{k-1}
\end{align}
where the Corollary's assumption follows from the inductive hypothesis and Lemma \ref{lm:case_1}. The upper bound on $|A\cap S_k|$ follows from Corollary \ref{cr:hard_rsf_4terms} with $e=d-2$
\begin{align}
    |A\cap S_k| &= |A_{00}'\cap S_{k-1}| + |A_{01}'\cap S_{k-1}| + |A_{10}'\cap S_{k-1}| + |A_{11}'\cap R_{k-1}| \leq 2^{d-1} + 2^{k-2}
\end{align}
where the Corollary's assumption follows from the inductive hypothesis and Lemma \ref{lm:case_1}.
\end{proof}

\begin{theorem}
For any positive integer $k$ and any affine subspace $A\subset \mathbb{F}_2^{2k}$ of dimension $d=\dim A$, we have
\begin{equation}
    \label{eq:bounds_on_intersection}
    |A\cap S_k| \leq \begin{cases}
    2^d & d < k\\
    2^{d-1} + 2^{k-2} & k \leq d < 2k-1\\
    2^{2k-2} & d=2k-1\\
    2^{2k-1} - 2^{k-1} & d=2k
    \end{cases}
\end{equation}
where
\begin{equation}
    S_k := \left\{(x_1, ..., x_{2k}) \in \mathbb{F}_2^{2k} : \sum_{i=1}^{k} x_i x_{i+k} = 1\mod 2\right\}.\label{eq:setSkquadricDef}
\end{equation}
\end{theorem}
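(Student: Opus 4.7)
The plan is a direct case analysis on $d = \dim A$, matching each branch of the piecewise bound to a result already assembled earlier in the excerpt. For $d < k$, the bound $|A \cap S_k| \leq 2^d$ is trivial since $A \cap S_k \subseteq A$ and $|A| = 2^d$; this corresponds to Lemma \ref{lm:case_1}. For the central range $k \leq d < 2k-1$, I would invoke Lemma \ref{lm:case_2}, whose proof occupies most of the preceding discussion. For $d = 2k-1$, I would invoke Lemma \ref{lm:case_3}. Finally, for $d = 2k$ (where $A = \mathbb{F}_2^{2k}$), the bound reduces to the classical count $|S_k| = 2^{2k-1} - 2^{k-1}$ for the non-trivial level set of the hyperbolic quadratic form $\sum_{i=1}^k x_i x_{i+k}$, which is the content of Corollary \ref{cr:case_4}. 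Assembling these four pieces into a single \texttt{proof} environment is a one-line case split.

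If instead I were proving the theorem from scratch, I would follow the same outline but anticipate that the case $k \leq d < 2k-1$ is by far the main obstacle, because the bound $2^{d-1} + 2^{k-2}$ is sharp rather than trivial. The natural induction on $k$ proceeds via the recursive structure formulas of Lemma \ref{lm:recursive_structure_formulas}, which decompose $A \cap S_k$ into a disjoint union of terms $(A'_\sigma \cap T_\sigma) \otimes \sigma$ where $T_\sigma \in \{R_{k-1}, S_{k-1}\}$. The difficulty is that when $11 \in F = \pi_1(A)$ the decomposition mixes $R_{k-1}$-intersections with $S_{k-1}$-intersections, and the inductive bound on $|B \cap R_{k-1}|$ exceeds $\tfrac{1}{2}|B|$ by twice the slack that the bound on $|B \cap S_{k-1}|$ does, so a term-by-term estimate loses the extra factor and breaks.

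To overcome this, I would follow the strategy already laid out via Lemma \ref{lm:hard_rsf_2terms} and its four-term analogue: introduce the skew-symmetric bilinear form $\omega$ associated to the quadratic $\mathbb{1}_{S_k}$, use the identity $\mathbb{1}_{S_k}(a+x) = \mathbb{1}_{S_k}(a) + \omega(a,x) + \mathbb{1}_{S_k}(x)$ to observe that summing two indicator functions cancels the quadratic term $\mathbb{1}_{S_k}(x)$, and conclude that the set of $x$ for which $a_1 + x$ and $a_2 + x$ lie on opposite sides of $R_k / S_k$ is itself cut out by an affine functional on $W$. This lets me rewrite mixed sums $|A_1 \cap R_{k-1}| + |A_2 \cap S_{k-1}|$ as $|B_1| + 2 |B_2 \cap S_{k-1}|$ for affine $B_1, B_2$ of controlled dimensions, after which Corollary \ref{cr:hard_rsf_2terms} (and its four-term cousin Corollary \ref{cr:hard_rsf_4terms}) closes the induction. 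The $d = 2k-1$ case I would handle separately, as in Lemma \ref{lm:case_3}, by exploiting that not all four $A'_\sigma$ in the $|F| = 4$ decomposition can be distinct at that dimension, so some of the mixed terms collapse and the pigeonhole-style bound goes through without the $\omega$-machinery.
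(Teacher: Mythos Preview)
Your proposal is correct and matches the paper's proof exactly: the paper's entire \texttt{proof} environment for this theorem is the single sentence ``The four upper bounds in \eqref{eq:bounds_on_intersection} follow from Lemma \ref{lm:case_1}, Lemma \ref{lm:case_2}, Lemma \ref{lm:case_3}, and Corollary \ref{cr:case_4}, respectively.'' Your from-scratch outline also tracks the paper's actual strategy for those lemmas (recursive structure formulas, the $\omega$-cancellation trick for mixed terms, and the dimension collapse at $d=2k-1$) with no gaps.
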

\begin{proof}
The four upper bounds in \eqref{eq:bounds_on_intersection} follow from Lemma \ref{lm:case_1}, Lemma \ref{lm:case_2}, Lemma \ref{lm:case_3}, and Corollary \ref{cr:case_4}, respectively.
\end{proof}

\subsection{Asymptotic Classical Hardness of Twisted Bent Target OPI}
\begin{definition}\label{definition:TBTOPI}
    Let $R, \mu \in (0, 1)$ be constants. For each positive integer $k$, we define the Twisted Bent Target OPI (TBT-OPI) problem as follows. let $S_k$ be the set of eq~\eqref{eq:setSkquadricDef}.
    For each of the $m=2^{2k}-1$ evaluation points $\alpha\in \mathbb{F}_{2^{2k}}^*$, let $A\sim \GL_{2k}(\mathbb{F}_2)$ be a random independent invertible matrix and set
\begin{equation}
    F_{\alpha} = \phi^{-1}(A\phi(x)),
\end{equation}
where $\phi$ is the bijection of eq~\eqref{eq:definePhiEmbedding}.
The $(R, \mu)$-TBT-OPI problem is to find a polynomial $f(X)$ of degree at most $\lceil R\cdot m\rceil$ such that $f(\alpha)\in F_\alpha$ for at least $\lfloor \mu \cdot m \rfloor$ distinct $\alpha$.
\end{definition}
\begin{lemma}
    The TBT-OPI sets $F_{\alpha}$ are efficiently constructible per requirement (2) of \ref{thm:quantumEasiness}.
\begin{proof}
    Note that if $A = I$, then applying a phase to $S_k$ can be achieved with a single layer of transversal CZ gates. We can then change the basis to account for $A$ by conjugating this with a circuit of $O(k^2)=\bigOtilde(1)$ CNOT gates.
\end{proof}
\end{lemma}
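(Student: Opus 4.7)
The plan is to exhibit an explicit short Clifford circuit that realizes the constraint oracle for $F_\alpha$; this is what ``efficient constructibility'' requires for the DQI constraint-encoding step described in \cref{sec:dqi_quantum_circuit}. I would proceed in two stages: first settle the ``untwisted'' case $A = I$, where $F_\alpha = \phi^{-1}(S_k)$ is the Maiorana--McFarland variety itself, and then conjugate by a cheap change-of-basis circuit to handle an arbitrary $A \in \GL_{2k}(\mathbb{F}_2)$.

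For $A = I$, membership in $S_k$ is decided by the quadratic form $\sum_{i=1}^{k} x_i x_{i+k} \in \mathbb{F}_2$, so the required phase $(-1)^{[x \in S_k]}$ factorizes (up to a fixed global sign) into $k$ independent factors $(-1)^{x_i x_{i+k}}$. Each factor is implemented by a single $\mathrm{CZ}$ gate between qubits $i$ and $i+k$, and since these pairs of qubits are disjoint the whole phase is realized by one transversal $\mathrm{CZ}$ layer consisting of exactly $k$ gates. No ancillas or arithmetic are needed because the defining form of $S_k$ is already in ``paired'' canonical shape.

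For general $A$, I would observe that $x \in F_\alpha$ iff $A^{-1}\phi(x) \in S_k$, so the oracle for $F_\alpha$ is obtained from the oracle for $S_k$ by conjugation with an in-place reversible $\mathbb{F}_2$-linear map implementing $x \mapsto A^{-1} x$. Any invertible $2k \times 2k$ matrix over $\mathbb{F}_2$ admits such an ancilla-free $\mathrm{CNOT}$ synthesis via PLU / Gaussian-elimination decomposition, at cost $O(k^2)$ elementary row operations and hence $O(k^2)$ $\mathrm{CNOT}$ gates. Combining the two stages, the full constraint oracle is a Clifford circuit of cost $k + O(k^2) = O(k^2)$. Since $q = 2^{2k}$ so that $k = \Theta(\log q)$, this is $\bigOtilde(1)$ in the field size, matching requirement (2) of \cref{thm:quantumEasiness}.

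The only mildly non-routine ingredient is the explicit in-place $\mathrm{CNOT}$ realization of an arbitrary $A \in \GL_{2k}(\mathbb{F}_2)$, but this is standard linear algebra over $\mathbb{F}_2$ and does not pose a genuine obstacle; the conceptual content of the proof is simply the recognition that the Maiorana--McFarland form has a disjoint-pair structure that makes its phase oracle natively Clifford, and that twisting by an invertible $\mathbb{F}_2$-linear map stays Clifford.
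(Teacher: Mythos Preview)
Your proposal is correct and follows exactly the same approach as the paper: a transversal layer of $\mathrm{CZ}$ gates implements the phase oracle for $S_k$ when $A=I$, and conjugation by an $O(k^2)$ $\mathrm{CNOT}$ circuit handles the general $A$. You have simply fleshed out the details the paper leaves implicit.
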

\begin{theorem}\label{thm:classicalHardness}
There exists a choice of constant $R \approx 1/10$ so that, letting $\mu = \mu_{DQI}:=\frac{1}{2}+\sqrt{\frac{R}{2}\left(1-\frac{R}{2}\right)}$, the number of trials of XP required to solve $(R,\mu)$-TBT-OPI (definition~\ref{definition:TBTOPI}) is at least $\exp(0.02 m)$.

\begin{proof}
The number of trials that XP needs to beat DQI is $\frac{1}{\gamma}$ where $\gamma$ is the objective function of \eqref{eq:dynamicprogrammingobjective}. This objective function is upper bounded through a direct application the Hoeffding inequality \cite{Hoeffding} by \eqref{eq:hoeffding}:
\begin{equation} \label{eq:hoeffding}
    \mathbb{P}\left(\sum_{i=1}^m X_i \geq t\right) \leq \exp\left(-2\frac{(t - \mathbb{E}[\sum_{i=1}^m X_i])^2}{m}\right)
\end{equation}
In \ref{subsection:objective_upper_bound} we prove that $\mathbb{E}[\sum_{i=1}^m X_i] < (\frac{1}{2} + \frac{n}{m})m = (\frac{1}{2} + R)m$. Asymptotically $\lim_{b\to \infty} t = (\frac{1}{2} + \sqrt{\frac{R}{2} (1 - \frac{R}{2}})m$ leading to the asymptotic scaling  \label{asymptitic_scaling}
\begin{equation} \label{eq:asymptitic_scaling}
    \lim_{b\to\infty} \mathbb{P}\left(\sum_{i=1}^m X_i \geq t\right) \leq \exp\left(-2\left(\sqrt{\frac{R}{2}\left(1 - \frac{R}{2}\right)} - R\right)^2 m\right)
\end{equation}

In the regime where $R = \frac{n}{m} \approx 0.10557$, we get an exponential lower bound on the number of trials of 

\begin{equation}
    \#trials \geq e^{0.02786 m}
\end{equation}
\end{proof}
\end{theorem}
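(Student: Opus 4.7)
The plan is to combine the probability-optimal XP allocation formulation from equation~\eqref{eq:probabilityOptimal}, the affine-intersection bounds on $|A \cap S_k|$ from equation~\eqref{eq:bounds_on_intersection}, and Hoeffding's inequality, and then to optimize the free parameter $R$. The number of XP trials needed is $1/\gamma$, where $\gamma$ is the maximum over allocation strategies of $\mathbb{P}(\sum_i X_i \geq t)$ with $X_i \sim \mathrm{Ber}(P[s_i])$ and $\sum_i s_i \leq bn$. So the goal is to prove $\gamma \leq e^{-0.02 m}$ for the stated choice of $R$ and $\mu$.

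First, I would apply Hoeffding's inequality directly to $\sum_{i=1}^m X_i$: since each $X_i \in \{0,1\}$ is independent, for any $t > \mathbb{E}[\sum_i X_i]$,
\begin{equation}
\mathbb{P}\left(\sum_{i=1}^m X_i \geq t\right) \leq \exp\left(-\frac{2(t - \mathbb{E}[\sum_i X_i])^2}{m}\right).
\end{equation}
This reduces the task to (a) an upper bound on $\mathbb{E}[\sum_i X_i]$ as a function of the budget $bn = Rmb$, valid uniformly over all feasible allocations $\{s_i\}$, and (b) the DQI performance $t = \mu m$ supplied by the semicircle law.

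Next, I would establish the expectation bound $\mathbb{E}[\sum_{i=1}^m X_i] \leq (\tfrac12 + R)m$, which is where the affine-intersection work pays off. For the Maiorana–McFarland target $S_k$, the bound $|A \cap S_k| \leq 2^{d-1} + 2^{k-2}$ for $k \leq d < 2k-1$ gives $P[s] \leq \tfrac12 + 2^{-s-1}$ (after dividing by $|A| = 2^{2k-s}$), with analogous bounds at the endpoints $s=0$ and $s=2k$. For any feasible allocation, $\sum_i P[s_i] \leq \tfrac{m}{2} + \sum_i 2^{-s_i-1}$. Because the random matrices $A_\alpha$ in \cref{definition:TBTOPI} only twist the set, these bounds transfer to every $F_\alpha$. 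The constraint $\sum_i s_i \leq Rmb$ together with convexity of $2^{-s-1}$ in $s$ lets me push the remaining sum below $Rm$ (up to lower-order corrections as $b \to \infty$), yielding $\mathbb{E}[\sum_i X_i] \leq (\tfrac12 + R)m$. This is the content deferred to \ref{subsection:objective_upper_bound}, and I expect it to be the main obstacle: the argument must hold for every feasible integer allocation, which is why the convexity/Jensen-type step over the knapsack constraint matters.

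With the expectation bound in hand, substituting $t = \mu m$ with $\mu = \tfrac12 + \sqrt{\tfrac{R}{2}(1-\tfrac{R}{2})}$ into Hoeffding yields
\begin{equation}
\mathbb{P}\left(\sum_{i=1}^m X_i \geq t\right) \leq \exp\left(-2\left(\sqrt{\tfrac{R}{2}(1-\tfrac{R}{2})} - R\right)^2 m\right)
\end{equation}
in the $b \to \infty$ limit. The final step is to optimize $R$. Setting the inner difference $\sqrt{\tfrac{R}{2}(1-\tfrac{R}{2})} - R$ against $R$ and computing numerically at $R \approx 0.10557$ produces an exponent of roughly $0.02786\,m$, which exceeds $0.02\,m$, so picking $R = 0.10557$ (or any nearby value) yields the claimed $\#\text{trials} \geq e^{0.02 m}$. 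This last step is a one-variable calculus check and is routine compared with the expectation bound.
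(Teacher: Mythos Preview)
Your overall structure---Hoeffding's inequality, then a uniform upper bound on $\mathbb{E}[\sum_i X_i]$ over all XP allocations, then substituting the DQI target $t=\mu m$, then optimizing $R$---matches the paper's proof exactly. The gap is in your sketch of the expectation bound. Dividing $|A\cap S_k|\leq 2^{d-1}+2^{k-2}$ by $|A|=2^{2k-s}$ with $d=2k-s$ gives $P[s]\leq \tfrac12+2^{\,s-k-2}$, not $\tfrac12+2^{-s-1}$: the overlap fraction $P[s]$ is \emph{increasing} in the number of constraints spent (see the explicit table in \eqref{eq:p_s_func}), not decreasing. With $P[s]$ increasing and convex on $[0,k]$ and capped at $1$ for $s>k$, a Jensen/convexity step under the budget $\sum_i s_i\leq B$ yields only a \emph{lower} bound on $\sum_i P[s_i]$; the maximum is attained at an extreme point of the feasible polytope, not at the barycenter.

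The paper's actual argument in \cref{subsection:objective_upper_bound} is a gain-per-cost LP: the efficiency $(P[s]-P[0])/s$ is maximized at $s=k+1$, so the optimal real-valued allocation puts all of the budget $B=bn=2kRm$ on $s=k+1$ and the rest at $s=0$, giving $\mathbb{E}[\sum_i X_i]\leq P[0]\,(m-\tfrac{B}{k+1})+\tfrac{B}{k+1}<\tfrac{m}{2}+\tfrac{B}{2k}=(\tfrac12+R)m$. Once you replace your convexity step with this greedy/LP argument, the rest of your proof goes through verbatim.
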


\section{Asymptotically Optimal Quantum Speedup}\label{sec:optimal_speedup}
To achieve the speedup of Theorem~\ref{thm:optimalSpeedup}, we have to use methods for the circuit implementation that are not necessarily optimal at the smaller problem sizes we envision for near-term applications. We describe these alternative implementations in \cref{sec:asymptoticNiceCircuits}. Then in \cref{sec:optimalityPuttingItTogether} we observe \cref{thm:optimalSpeedup} as a corollary of this implementation (on the quantum side) along with the lower bounds on runtime of XP that was shown in \cref{thm:classicalHardness}.

\subsection{Asymptotically Optimal Implementation}\label{sec:asymptoticNiceCircuits}
While \cref{sec:methods} focused on optimizing constant factors for finite-size implementations, we now analyze the asymptotic complexity of DQI applied to the OPI problem. We demonstrate that under certain conditions, the DQI algorithm can be implemented using a quantum circuit with a number of gates that is nearly linear in the problem size $m$. We use the notation $\bigOtilde(m)$ to hide factors polylogarithmic in $m$.

We first establish the complexity of the key algebraic subroutines required. These complexities are well-established in classical computer algebra and apply over arbitrary finite fields $\mathbb{F}_q$.

\begin{lemma}[Fast Algebraic Subroutines]\label{lem:fast_algebra}
Let $m$ be the problem size parameter. The following tasks can be computed reversibly using $\bigOtilde(m)$ field operations in $\mathbb{F}_q$:
\begin{enumerate}
    \item Multipoint Polynomial Evaluation (MPE): Evaluating a polynomial of degree $<m$ at $m$ points.
    \item Reed-Solomon (RS) Decoding: Decoding an RS code of length $m$ up to half its minimum distance.
\end{enumerate}
\begin{proof}
$ $\newline \vspace{-1.5em}
\begin{enumerate}
    \item Fast algorithms for MPE utilize a divide-and-conquer strategy. The complexity is $O(M(m)\log m)$ field operations, where $M(m)$ is the cost to multiply two degree $m$ polynomials \cite[Chapter 10]{vonzurGathen2013moderncomputer}. Since $M(m) = \bigOtilde(m)$ using fast multiplication algorithms \cite{Schnhage1971, cantor1991fast, harvey2014fasterpolynomialmultiplicationfinite}, the total cost is $\bigOtilde(m)$ field operations.
    \item The key equation for RS decoding can be solved using the Fast Extended Euclidean Algorithm (Fast EEA), which runs in $O(M(m)\log m) = \bigOtilde(m)$ field operations \cite[Chapter 11]{vonzurGathen2013moderncomputer}. Subsequent steps (root finding and error evaluation) also take $\bigOtilde(m)$ field operations using fast MPE \cite{justesen2006complexity}.
\end{enumerate}

These algorithms are algebraic and can be implemented reversibly on a quantum computer.
\end{proof}
\end{lemma}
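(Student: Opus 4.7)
The plan is to assemble the lemma from three classical building blocks---fast polynomial multiplication, the subproduct-tree construction for multipoint evaluation, and the half-GCD form of the Extended Euclidean Algorithm---and then observe that each is a deterministic field computation that can be made reversible at polylogarithmic cost. Since the problem statement only claims a reversible gate count measured in $\F_q$-operations, the main work is citing the right classical algorithm for each stage and arguing that reversibilization preserves the $\bigOtilde(m)$ bound.

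For part (1), I would use a divide-and-conquer algorithm built around a subproduct tree. First I build a balanced binary tree whose leaves are the linear factors $(z-x_i)$ for the $m$ evaluation points, with each internal node holding the product of the polynomials at its two children; constructing this tree costs $O(M(m)\log m)$ field operations, where $M(m)$ is the cost of multiplying two degree-$m$ polynomials. To evaluate the input polynomial $f$ at all points, I push $f$ downward through the tree, replacing it at each node with its remainder modulo that node's product polynomial; the leaf values are precisely $f(x_i)$. Using Sch\"onhage--Strassen or Cantor--Kaltofen multiplication over $\F_q$ \cite{Schnhage1971, cantor1991fast, harvey2014fasterpolynomialmultiplicationfinite}, $M(m)=\bigOtilde(m)$, which yields the claimed $\bigOtilde(m)$ bound.

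For part (2), I would decompose Reed--Solomon decoding into the standard pipeline and bound each stage. (a) Compute the syndrome polynomial from the received word via fast multipoint evaluation: $\bigOtilde(m)$. (b) Solve the key equation $\sigma(z)S(z)\equiv\Omega(z)\pmod{z^{2\ell}}$ using the Fast (half-GCD) Extended Euclidean Algorithm, which terminates in $O(M(m)\log m)=\bigOtilde(m)$ operations \cite{vonzurGathen2013moderncomputer}. (c) Find the roots of $\sigma$ by fast multipoint evaluation on the relevant subset of $\F_q^*$, again $\bigOtilde(m)$. (d) Recover error values via Forney's formula, which uses fast MPE on $\sigma'$ and $\Omega$ at each error location for another $\bigOtilde(m)$ operations \cite{justesen2006complexity}. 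Summing these stages gives $\bigOtilde(m)$ field operations overall.

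The main obstacle is not the combinatorics but reversibility: the classical algorithms above are inherently out-of-place and overwrite intermediate remainders and auxiliary arrays whose contents depend on the input. I would address this by standard Bennett-style compute--copy--uncompute at the granularity of each level of the subproduct tree and each step of the Fast EEA. Because each such step is a polynomial-size deterministic circuit over $\F_q$, the reversibilization imposes only a constant overhead in operation count and a logarithmic overhead in space per level, both absorbed into the $\bigOtilde(\cdot)$. This preserves the asymptotic count and establishes the lemma.
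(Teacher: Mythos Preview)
Your proposal is correct and follows essentially the same approach as the paper: divide-and-conquer MPE via a subproduct tree, Fast EEA (half-GCD) for the key equation, and fast MPE for the remaining root-finding and Forney steps, all reduced to $O(M(m)\log m)=\bigOtilde(m)$ via fast polynomial multiplication. You give more explicit detail than the paper does---naming the subproduct tree, breaking RS decoding into four stages, and invoking Bennett-style compute--copy--uncompute rather than simply asserting reversibility---but the structure and citations match.
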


We now analyze the overall complexity of DQI. The result depends critically on the efficiency of the objective function implementation and the size of the field relative to $m$.

\begin{theorem}\label{thm:quantumEasiness}
Consider the Optimal Polynomial Intersection (OPI) problem over $\mathbb{F}_q$ with $m$ constraints and $n=Rm$ variables (where $R \in (0,1)$ is constant). The DQI algorithm can be implemented by a quantum circuit using $\bigOtilde(m)$ elementary quantum gates, provided that:
\begin{enumerate}
    \item The field size $q$ satisfies $\log q = O(\mathrm{polylog}(m))$ (i.e., the bit length $b = \bigOtilde(1)$).
    \item The subsets $F_i$ are structured such that an oracle $U_i$ for objective function $f_i$, defined as $U_i\ket{x} = f_i(x)\ket{x}$, can be implemented by a quantum circuit using $\bigOtilde(1)$ gates.
\end{enumerate}
\end{theorem}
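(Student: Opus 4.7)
The plan is to walk through the four stages of the DQI circuit from \cref{sec:dqi_quantum_circuit} and bound each by $\bigOtilde(m)$ field operations in $\mathbb{F}_q$. Since hypothesis (1) forces $b=\lceil\log_2 q\rceil=\bigOtilde(1)$, every $\mathbb{F}_q$-operation (add, multiply, invert, square) reduces to $\bigOtilde(1)$ elementary quantum gates via quasilinear polynomial arithmetic, so a bound of $\bigOtilde(m)$ field operations gives the claimed $\bigOtilde(m)$ gate count.

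Stages~1 and~2 must be implemented with methods stronger than the constant-factor optimized ones of \cref{sec:dqi_quantum_circuit}. I would prepare the weight register $\sum_k w_k\ket{k}$ in $\bigOtilde(\ell)$ gates using the quasilinear state-preparation routines of \cite{gosset2024quantumstatepreparationoptimal,Low2024,berry2025rapid} and lift it to the sparse Dicke register through a quasilinear symmetric-subset preparation, staying within the $\bigOtilde(m)$ budget. Given the sparse encoding of error locations and assumption~(2) that each $G_i$ costs $\bigOtilde(1)$ gates, the remaining task is to compute the syndrome $B^{T}e$. The naive sequential MBU update of \cref{sec:dqi_quantum_circuit} is $\Theta(mn)$, so I would instead batch the $\ell = O(m)$ error contributions using the Vandermonde structure: $B^{T}e$ is the length-$n$ coefficient vector of the rational function $\sum_{k}e_{j_k}/(1-\alpha_{j_k}z)$, and these coefficients can be assembled by a fast subproduct-tree / partial-fraction computation in $\bigOtilde(m)$ field operations, replacing the quadratic sequential update with a quasilinear reversible circuit.

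Stage~3 is the direct payoff of \cref{lem:fast_algebra}: solve the key equation with the Fast EEA in $\bigOtilde(m)$ field operations, then execute the Chien search and Forney step via fast multipoint evaluation of $\sigma$, $\sigma'$, and $\Omega$ at the $m$ field points, followed by $m$ field inversions. The $m$ single-qubit phase fixups using the measurement outcomes $c_i$ cost only $O(m)$ Cliffords. Stage~4 is the QFT over $\mathbb{F}_q^n$, which factorizes as $n$ independent QFTs on the additive group of $\mathbb{F}_q$; each acts on $b=\bigOtilde(1)$ qubits and costs $\bigOtilde(1)$ gates, for $\bigOtilde(n)=\bigOtilde(m)$ total. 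Summing the four stage budgets gives $\bigOtilde(m)$ field operations, hence $\bigOtilde(m)$ elementary gates.

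The main obstacle is reversibility. The classical divide-and-conquer algorithms behind Fast EEA, fast multipoint evaluation, and the partial-fraction syndrome routine produce substantial intermediate garbage that must be uncomputed coherently. A Bennett-style simulation inflates time by at most a polylog factor and space by a constant, both absorbed by $\bigOtilde(\cdot)$, so the proof reduces to verifying that each such subroutine admits a reversible implementation with a clean ancilla interface and that the Stage~1 state-preparation primitives indeed fit within $\bigOtilde(m)$ under hypotheses (1) and (2). Given these checks, summing the stage costs completes the theorem.
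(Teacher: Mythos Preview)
Your Stages 3 and 4 and the overall $b=\bigOtilde(1)$ accounting match the paper, and your recognition that a naive sequential syndrome update costs $\Theta(mn)$ and needs a fast $\bigOtilde(m)$ replacement is on point.

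The gap is in Stages 1--2, where you keep the sparse Dicke encoding. The paper does not: for the asymptotic argument it explicitly switches to the \emph{dense} $m$-qubit Dicke state (``using dense Dicke States instead of sparse Dicke States''), prepared in $\bigOtilde(m)$ via a new divide-and-conquer unranking scheme accelerated with binary splitting (\cref{sec:fast_dicke_state}). The switch is essential for Stage 2. With the dense representation, whether position $i$ carries an error is read directly off qubit $i$, so one iterates $i=1,\dots,m$ and applies $G_i$ with a \emph{classical} index at $\bigOtilde(1)$ per step by hypothesis (2). With your sparse encoding the locations $j_1,\dots,j_\ell$ are quantum, and hypothesis (2) says nothing about the cost of $G_{j_k}$ for a superposed index; the obvious implementations (iterate over all $m$ classical values per register, or the unary-iteration comparisons of step (2a) in \cref{sec:dqi_quantum_circuit}) cost $\Theta(m\ell)=\Theta(m^2)$. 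You gloss over this by declaring ``the remaining task is to compute the syndrome,'' but producing the error values $e_{j_k}$ from quantum locations is exactly the bottleneck your partial-fraction routine does not touch. Your Stage~1 ``quasilinear symmetric-subset preparation'' is also left as a placeholder, whereas the paper's construction is concrete and is what carries the $\bigOtilde(m)$ claim there.
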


\begin{proof}
We analyze the complexity of the DQI algorithm stages (as described in \cref{sec:dqi_quantum_circuit}). The conditions ensure that basic arithmetic operations (addition, multiplication) in $\mathbb{F}_q$ require $\bigOtilde(b) = \bigOtilde(1)$ gates. The degree of the enhancing polynomial is $l = O(n) = O(m)$.

\begin{itemize}
    \item \textbf{Stage 1: Dicke State Preparation.}
Standard dense Dicke state preparation requires $O(lm) = O(m^2)$ gates~\cite{Bartschi2022}. 
To achieve near linear time, we describe a new Dicke State Preparation method in \cref{sec:fast_dicke_state}, which employs a reversible implementation of a Divide and Conquer unranking strategy, accelerated by Binary Splitting. As analyzed in the appendix, the total gate complexity for this step is $\bigOtilde(m)$.

    \item \textbf{Stage 2: Syndrome Computation.}
We analyze the complexity of the sequential approach described in \cref{sec:dqi_quantum_circuit}, but using dense Dicke States instead of sparse Dicke States. 
This stage iterates $m$ times. Below, we show that the cost of each iteration is $\bigOtilde(1)$ resulting in a total cost of $\bigOtilde(m)$ :

\begin{itemize}
    \item \textit{Constraint Encoding and Error Generation (Apply $G_i$):} The gate $G_i$ can be implemented using QFTs over $\mathbb{F}_q$ and one call to the oracle $U_i$ \cite[Section 14.4]{jordan2024optimizationdecodedquantuminterferometry}. The cost of the QFT is $\bigOtilde(b) = \bigOtilde(1)$. By assumption, the cost of $U_i$ is $\bigOtilde(1)$. Thus, the cost of $G_i$ is $\bigOtilde(1)$.
    
    \item \textit{Syndrome Update:} We update the syndrome register with $B^T_i \cdot e_i$. This involves one field multiplication and addition, costing $\bigOtilde(1)$ gates.
\end{itemize}
 
\item 
\textbf{Stage 3: Reversible Decoding.}
We reversibly decode the Reed-Solomon code $C^\perp$. By Lemma~\ref{lem:fast_algebra}, this requires $\bigOtilde(m)$ field operations, resulting in $\bigOtilde(m \cdot b) = \bigOtilde(m)$ gates.

\item \textbf{Stage 4: Final Transformation (IQFT).}
Applying the inverse QFT over $\mathbb{F}_q^n$. This takes $\bigOtilde(n \cdot b) = \bigOtilde(m)$ gates.

\end{itemize}

\textbf{Total Complexity:}
Since each stage requires $\bigOtilde(m)$ gates, the total gate complexity of the DQI algorithm under the stated conditions is $\bigOtilde(m)$.
\end{proof}

The conditions required for \cref{thm:quantumEasiness} are met, for instance, when $q$ is small (e.g., the binary extension fields analyzed in this paper, provided $b$ is polylogarithmic in $m$) and when the OPI instance has some special structure that allows for efficient constraint encoding. For example, in the Polynomial Approximation problem, the subsets $F_i$ are contiguous intervals \cite{garcia2014interpolation}, allowing $U_i$ to be implemented efficiently via comparators in $\bigOtilde(b)$ gates.

It is important to contrast this with the general OPI case studied in Section 5 of \cite{jordan2024optimizationdecodedquantuminterferometry}, where $m \approx q$ and the subsets $F_i$ are arbitrary.

\begin{corollary}
Consider the general OPI problem over $\mathbb{F}_q$ where $m=q-1$ and the subsets $F_i$ are arbitrary (e.g., the balanced case where $|F_i| \approx q/2$). The DQI algorithm requires $\bigOtilde(m^2)$ elementary quantum gates.
\end{corollary}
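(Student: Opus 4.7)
The plan is to reduce this corollary to a careful re-accounting of the costs in the proof of \cref{thm:quantumEasiness} for the general OPI setting. With $m=q-1$, the field bit-length is $b = \lceil \log_2 q \rceil = O(\log m)$, so condition (1) of \cref{thm:quantumEasiness} is still satisfied. The only hypothesis that fails for general OPI is condition (2): when $F_i \subset \mathbb{F}_q$ is an arbitrary subset of roughly half the field, the indicator oracle $U_i$ generically has no succinct circuit description. My strategy is therefore to re-derive the per-stage costs, only inflating the bound at the stages where the arbitrary structure of $F_i$ enters, and to leave the algebraic subroutines (Dicke preparation, decoding, IQFT) untouched.

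First I would observe that Stages 1, 3, and 4 of the algorithm are oblivious to the sets $F_i$: sparse or dense Dicke preparation, reversible Reed--Solomon decoding via Fast EEA, and the inverse QFT over $\mathbb{F}_q^n$ each cost $\bigOtilde(m)$ gates by the arguments already given in \cref{thm:quantumEasiness} and \cref{lem:fast_algebra}. The only stage that needs re-examination is Stage 2, and inside that stage the only substep that depends on the $F_i$ is the error-generation gate $G_i$, whose cost is dominated by one call to $U_i$. It therefore suffices to show that each $U_i$ can be realized with $\bigOtilde(m)$ gates when $F_i$ is an arbitrary subset of $\mathbb{F}_q$, and to sum over the $m$ iterations of Stage 2.

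For the per-constraint oracle I would use a generic table-lookup / state-preparation argument. The indicator $f_i:\mathbb{F}_q\to\{\pm 1\}$ is specified by $q$ classical bits, and using a standard QROM-style data lookup (or the Shende--Bullock--Markov style arbitrary state preparation) one can implement $U_i\ket{x} = f_i(x)\ket{x}$ on a $b$-qubit register in $\bigOtilde(q) = \bigOtilde(m)$ gates. Combined with the $\bigOtilde(b) = \bigOtilde(1)$ cost of the QFT over $\mathbb{F}_q$, this yields a per-iteration cost of $\bigOtilde(m)$ for $G_i$, so Stage 2 as a whole costs $m \cdot \bigOtilde(m) = \bigOtilde(m^2)$ gates. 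Adding this to the $\bigOtilde(m)$ cost of the remaining stages gives the claimed $\bigOtilde(m^2)$ bound.

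The only subtle point, and the one I would be most careful about, is justifying that a generic oracle really does suffice and that the $\bigOtilde(m)$ per-call cost is achievable reversibly and without extra amortized overhead across the $m$ iterations. In particular, one should note that the $m$ classical tables defining $F_1,\dots,F_m$ are each distinct, so one cannot simply amortize a single table load across the loop; each iteration pays its own $\bigOtilde(m)$ oracle cost, which is exactly what produces the quadratic scaling. Once this is stated, no further work is required, and the corollary follows immediately.
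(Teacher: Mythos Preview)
Your proposal is correct and follows essentially the same approach as the paper: both observe that $b=O(\log m)=\bigOtilde(1)$ so condition (1) of \cref{thm:quantumEasiness} still holds, identify that implementing $U_i$ (and hence $G_i$) for an arbitrary subset of $\mathbb{F}_q$ via QROM or generic state preparation costs $\bigOtilde(q)=\bigOtilde(m)$ per call, and conclude that Stage 2's $m$ iterations yield $\bigOtilde(m^2)$, dominating the $\bigOtilde(m)$ cost of the remaining stages. Your additional remarks about the other stages being unaffected and the non-amortizability across distinct tables are sound elaborations but not departures from the paper's argument.
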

\begin{proof}
In this case, $b = O(\log m) = \bigOtilde(1)$. However, implementing the oracle $U_i$ or the gate $G_i$ for an arbitrary function defined by a subset of size $O(q)$ generally requires $\bigOtilde(q) = \bigOtilde(m)$ gates (e.g., using QROM or generic state preparation \cite{gosset2024quantumstatepreparationoptimal}). The total cost of Stage 2 (Constraint Encoding) becomes $m \cdot \bigOtilde(m) = \bigOtilde(m^2)$, dominating the overall complexity.
\end{proof}

\subsection{Asymptotic Optimality}\label{sec:optimalityPuttingItTogether}
Combining \cref{thm:quantumEasiness} and \cref{thm:classicalHardness} we have \cref{thm:optimalSpeedup}:
\THMoptimalSpeedup*
Since any quantum circuit of $n$ gates can be simulated in time $O(2^n)$, our speedup provides the largest separation possible between classical and quantum runtimes.
This is somewhat extraordinary since OPI is essentially a contrived problem and seems (to us) no more ``quantum" or ``natural" than e.g. factoring or period-finding.
We speculate informally on the outlook for this situation below.
\begin{itemize}
    \item {\bf Faster Classical Algorithms for OPI:} Although the best attacks we are currently aware of for OPI take exponential time in the number of evaluation points, it is certainly possible that better algorithms exist. It is interesting to note that none of the algorithms we know about leverage the low-degree algebraic structure and instead treat it as an arbitrary max-LINSAT problem.
    \item {\bf Better Shor circuits:} Assuming the speedup for OPI stands, we wonder if further ideas along the line of thinking of \cite{regev2025efficient,kahanamoku2025jacobi} could culminate in an asymptotically-optimal speedup based on Shor's algorithm.
    \item {\bf Peeling off log factors:} Although we omit the polylogarithmic factors here, there is room to optimize them and we wonder whether such effort would reveal a more practically-usable speedup or merely complicate matters without much gain at useful problem scales.
\end{itemize}

\section{Resource Estimates}\label{sec:results}
\subsection{Logical Costs}
To provide a concrete assessment of the resources required to solve an instance of OPI using DQI, we performed a detailed logical cost analysis of the complete, end-to-end algorithm. 
We implemented the key subroutines—including the RS Decoders using the Synchronized and Dialog based EEA methods (as detailed in \cref{sec:zalka_eea_optimized} and \cref{sec:dialog_representation}), and the Sparse Dicke State Preparation from \cref{sec:sparse_dicke_state}; as a \lstinline{Bloq} in the Qualtran quantum compilation framework \cite{harrigan2024expressinganalyzingquantumalgorithms} and used it to analyze the precise qubit and gate costs.

The results of this analysis are summarized in \cref{tbl:resource_estimates} for several representative instances of the Optimal Polynomial Intersection (OPI) problem. The table presents the total Toffoli and Clifford gate counts, the number of logical qubits required for the reversible decoder, and the estimated classical intractability of each instance. 
For $GF(2^b)$, since multiplication by a constant field element requires only Clifford gates, for small $n / m$ it turns out to be better to spend a higher gate cost to run Zalka's EEA and have explicit access to the Bézout coefficients such that the Chien search subroutine that evaluates the Bézout coefficient polynomial $\sigma(x)$ at $m$ different constants, uses only Clifford gates. 
Using Dialogs, the EEA part is significantly cheaper than Zalka's EEA, but because we only have implicit access to the polynomial $\sigma(x)$, each evaluation of $\sigma(x)$ and $\sigma^\prime(x)$ now requires applying the Dialog and uses quantum-quantum multiplication \cite{vanhoof2020spaceefficientquantummultiplicationpolynomials} of field elements instead of quantum-classical multiplication. 
For both the approaches, the qubit counts scale as $2nb + \mathcal{O}(\log_2(n))$

These resource estimates become particularly noteworthy when contextualized against other well-studied, classically intractable and verifiable optimization problems. Integer factorization using Shor's algorithm serves as the canonical benchmark for large-scale quantum computation. As shown in the state-of-the-art estimates in \cite{gidney2025factor2048bitrsa}, factoring a 2048-bit RSA integer is estimated to require approximately $6.5 \times 10^9$ Toffoli gates and $1399$ logical qubits, while factoring a 1024-bit RSA integer requiers $1.1 \times 10^9$ Toffoli gates and $742$ logical qubits. In comparison, several of our OPI instances that are well into the classically intractable regime (requiring $>10^{23}$ trials) and exhibit Toffoli counts that are about two to three orders of magnitude smaller, with logical qubit counts that are about two to three times higher.

\begin{table}[H]
    \centering
    \begin{tabular}{|c|c|c|c|c|c|}
    \hline
    \hline
    ($m$, $n$, $b$, $r$) & Toffoli & Clifford & Qubits & $\#$ Prange Trials & $\#$ XP Trials\\
    \hline
    \hline
    $(1023, 60, 10, 496)$ & \num{2764580} & \num{38945109} & $1364$ & \num{5.4935525387784946e+19} & \num{1915882803738476.8} \\
    \hline
    $(1023, 70, 10, 496)$ & \num{3593114} & \num{49078873} & $1569$ & \num{1.256406251307753e+22} & \num{4.641439887538182e+16} \\
    \hline
    $(1023, 80, 10, 496)$ & \num{4521586} & \num{60425691} & $1769$ & \num{4.2964767808546385e+24} & \num{1.224179182654277e+18} \\
    \hline
    $(1023, 90, 10, 496)$ & \num{5561906} & \num{72998037} & \num[exponent-mode=fixed]{1970} & \num{1.0704385285673214e+27} & \num{2.078358397648132e+19} \\
    \hline
    $(1023, 100, 10, 496)$ & \num{6710658} & \num{86786415} & \num[exponent-mode=fixed]{2170} & \num{1.74941809707523e+29} & \num{2.562701796685802e+20} \\
    \hline
    \hline
    $(4095, 60, 12, 2016)$ & \num{4640644} & \num{127334931} & 1640 & \num{2.019633906949013e+23} & \num{4.019800669718791e+20} \\
    \hline
    $(4095, 70, 12, 2016)$ & \num{5717729} & \num{151649309} & $1885$ & \num{4.7509334068170893e+26} & \num{1.965720586103349e+23} \\
    \hline
    $(4095, 80, 12, 2016)$ & \num{6922158} & \num{178643837} & 2125 & \num{9.479001846779738e+29} & \num{7.994544407999735e+25} \\
    \hline
    $(4095, 90, 12, 2016)$ & \num{8265800} & \num{208309955} & $2366$ & \num{1.413037121295554e+33} & \num{2.265453777773324e+28} \\
    \hline
    $(4095, 100, 12, 2016)$ & \num{9745253} & \num{240637673} & $2606$ & \num{2.101371145129246e+36} & \num{5.912123905073406e+30} \\
    \hline
   \hline
\end{tabular}
\caption{Resource estimates for solving the Optimal Polynomial Intersection (OPI) problem using Decoded Quantum Interferometry (DQI). 
$m$ is the number of constraints, $n$ is the number of variables, the problem is defined over binary extension field $\F_q$ where $q = 2^b$, $r$ is the size of the target set $F_y, \forall y \in \F^*_q$.
We also report the expected number of classical trials needed to sample a DQI grade solution using Extended Prange's algorithm \cite{prange1962use}. 
We believe problem instances requiring more than $10^{23}$ trials can be classically intractable. 
Here, we use the quadric set in eq~\eqref{eq:quadraticVariety} throughout. 
We have observed that some increase in the $\#$ XP trials is possible by optimizing the choice of set. 
Therefore we expect the classical hardness can be increased, though we likely cannot match $\#$ Prange Trials even with an optimal choice of set. 
We do not know the ultimate limits nor do we understand how to choose the set in a principled way.
}
\label{tbl:resource_estimates}
\end{table}

\subsection{Physical Costs}
For estimating physical costs, we assume one large rectangular grid of physical qubits with nearest neighbor connections, a uniform gate error rate of 0.1\%, a surface code cycle time of 1 microsecond, and a control system reaction time of 10 microseconds.
With these assumptions, the imagined physical layout of the algorithm will consist of a compute region using hot storage and a memory region using cold storage. 
The compute region will store logical qubits ``normally", as distance $d$ surface code patches using $2(d + 1)^2$ physical qubits. 
The cold storage memory region will store qubits more densely, by using 2D yoked surface codes \cite{gidney2025yoked}. 
Since our Toffoli gate counts are low enough, magic state cultivation \cite{gidney2024magicstatecultivationgrowing} will suffice to prepare $\ket{T}$ states with low enough error rates, and thus we will not any dedicated space for magic state factories. 

As an example, let us look at compiling the instance $(m=4095, n=70, b=12)$. 
The dominant subroutines in the decomposition of the circuit are primitives to do arithmetic over Galois Field, like quantum-quantum multiplication (\lstinline{GF2Mul}) and synthesizing a linear reversible circuit (\lstinline{SynthesizeLR}) for quantum-classical multiplication and squaring.
For these two primitives, we provide hand optimized lattice surgery layouts in \cref{fig:gf2_mul} and \cref{fig:gf2_mul_k} with magic state cultivation in-place. 

Referring to \cite[Figure 6]{gidney2025factor2048bitrsa}, note that a distance 21 patch is sufficient for normal surface code patches to reach a per-patch per-round logical error rate of $10^{-13}$. 
So our hot patches will use $2 \cdot (21 + 1)^2 = 968$ physical qubits per logical qubit. 
For the cold storage, again referring to \cite[Figure 6]{gidney2025factor2048bitrsa}, yoking with a 2D parity check code reaches a logical error rate of $10^{-13}$ when using $350$ physical qubits per logical qubit. 
So cold logical qubits will be roughly triple the density of hot logical qubits. For $(m=4095, n=70, b=12)$, the algorithm uses $1885$ logical qubits. 
We propose to allocate a $40 \times 48$ cold storage region with $1920$ logical qubits using 2D Yoked Surface codes. 
We also propose a thin $3\times 40 = 120$ qubit hot storage region. 
Our mockup for GF multiplication for $b=12$ fits in a $3 \times (2.5b + 9) = 3\times 39$ region, including ancilla patches and space needed for cultivation. 
Thus, our layout uses $120 \times 968 + 1920 \times 350 \approx 800k$ physical qubits. A mockup of the high-level spacial layout is shown in \cref{fig:qubit_storage_layout}.

Now let's come to time. In the decomposition of logical circuit for $(m=4095, n=70, b=12)$, there are $58215$ calls to \lstinline{GF2Mul} and $687564$ calls to \lstinline{SynthesizeLR}. 
We define the Parity Control Toffoli gate as a generalization of the Toffoli gate that computes the Boolean AND of the XOR of a subset of controls and updates a subset of targets \cite[Figure 1]{gidney2025classicalquantumadderconstantworkspace}; we employ such gates since they can be implemented efficiently using lattice surgery \cref{fig:parity_control_toffoli_layout}. 
Our \lstinline{GF2Mul} compilation from \cref{fig:gf2_mul} requires $1.5d$ rounds per Parity Control Toffoli gate (PCTOF). 
For $b=12$, using optimized GF2 arithmetic circuits from \cref{sec:improved_gf2_arithmetic}, there are $51$ PCTOF gates. 
Therefore, it takes $1.5 \cdot 51 \approx 77d$ rounds for one quantum-quantum multiplication of field elements.  
For \lstinline{SynthesizeLR}, we present a hand-optimized layout in \cref{fig:gf2_mul_k} that takes $10d$ rounds and $3\times 23$ surface code patches for $b=12$. 
Therefore, with $d=21$, these two combined give us $58215 \cdot 77 \cdot 21 +  687564 \cdot 10 \cdot 21 = 2.4 \times 10^8$. 
To provide a conservative estimate of the total runtime, we multiply this number by a factor of four:
doubling it once to account for overhead from other smaller operations and routing, and doubling it a second time to account for the time needed to move qubits between cold and hot storage. Thus, we estimate a total of $1 \times 10^9$ rounds.
We need to protect $1920 + 120 = 2040$ logical qubits for a total of $1\times 10^9$ rounds. 
With an LER of $10^{-13}$, this gives us a no-logical-error shot rate of $p_\text{lattice surgery} = (1 - 10^{-13})^{2040 \cdot 1\cdot 10^9} = (1 - 10^{-13})^{2\times 10^{12}} \approx 81.5\%$ and a per shot runtime of ~$16$ minutes. 
To account for error due to cultivation, we cultivate $\ket{T}$ states with a fidelity of $2\times 10^{-9}$. 
For a total Toffoli count of $5.72 \times 10^6$, this gives us a $p_\text{cultivation} = (1 - 2\times 10^{-9})^{4 \times 5.72 \times 10^6} \approx 95.5\%$. 
Combining the two success probabilities above, we have a per-shot success rate of $p_\text{cultivation} \times p_\text{lattice surgery} \approx 77.9\%$, which means about 4 retries are sufficient for boosting the probability of a successful shot close to 1.  Therefore, the total runtime would be $\approx 1$ hour.

To summarize, we estimate that solving a classically intractable instance of the OPI problem using DQI requires $\approx 800k$ physical qubits and $\approx 1$ hour of runtime. This estimate assumes a quantum computer with a surface code cycle time of 1 microsecond, a control system reaction time of 10 microseconds, a square grid of qubits with nearest neighbor connectivity, and a uniform depolarizing noise model with a noise strength of 1 error per 1000 gates.

\begin{figure}
    \centering
    \subfloat[
    ZX graph for synthesizing a linear reversible circuit (\lstinline{SynthesizeLR}) on $10$ qubits using LU-decomposition. 
    Depending on the entries of the LU-decomposition of the matrix, only a subset of CNOT gates will be present in the circuit.
    For the task of estimating the worse case spacetime overhead, we assume all possible pairs of CNOTs can be present.
    ]{\includegraphics[width=\linewidth]{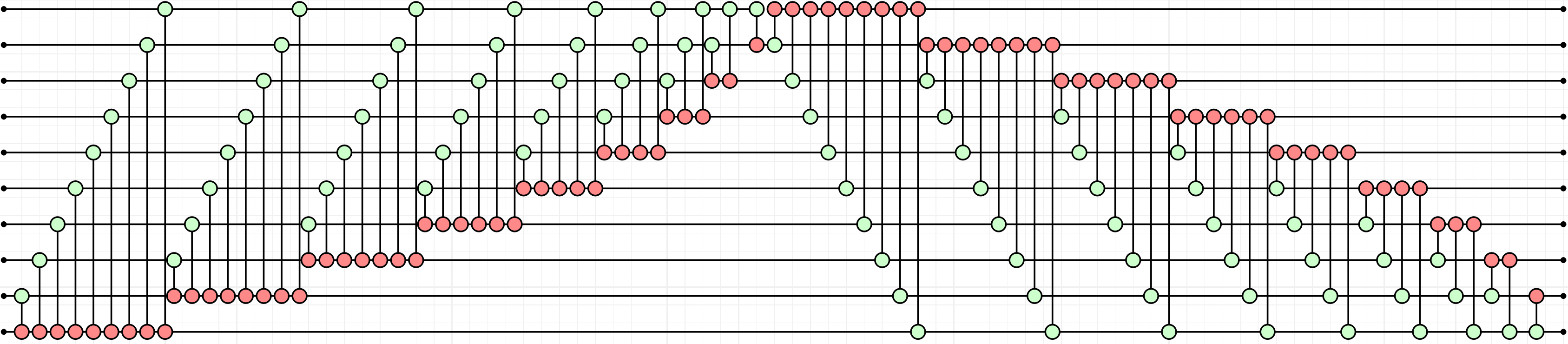}
    }\label{fig:gf2_mul_k_zx_graph}

    \subfloat[
    Lattice surgery layouts (front and back view) for synthesizing a linear reversible circuit on $10$ qubits, compiled using ZX graph shown in \cref{fig:gf2_mul_k_zx_graph}. 
    The layout fits in a $3 \times 19$ rectangle of surface code patches and requires $8d$ rounds.    
    ]{
        \includegraphics[width=0.5\linewidth]{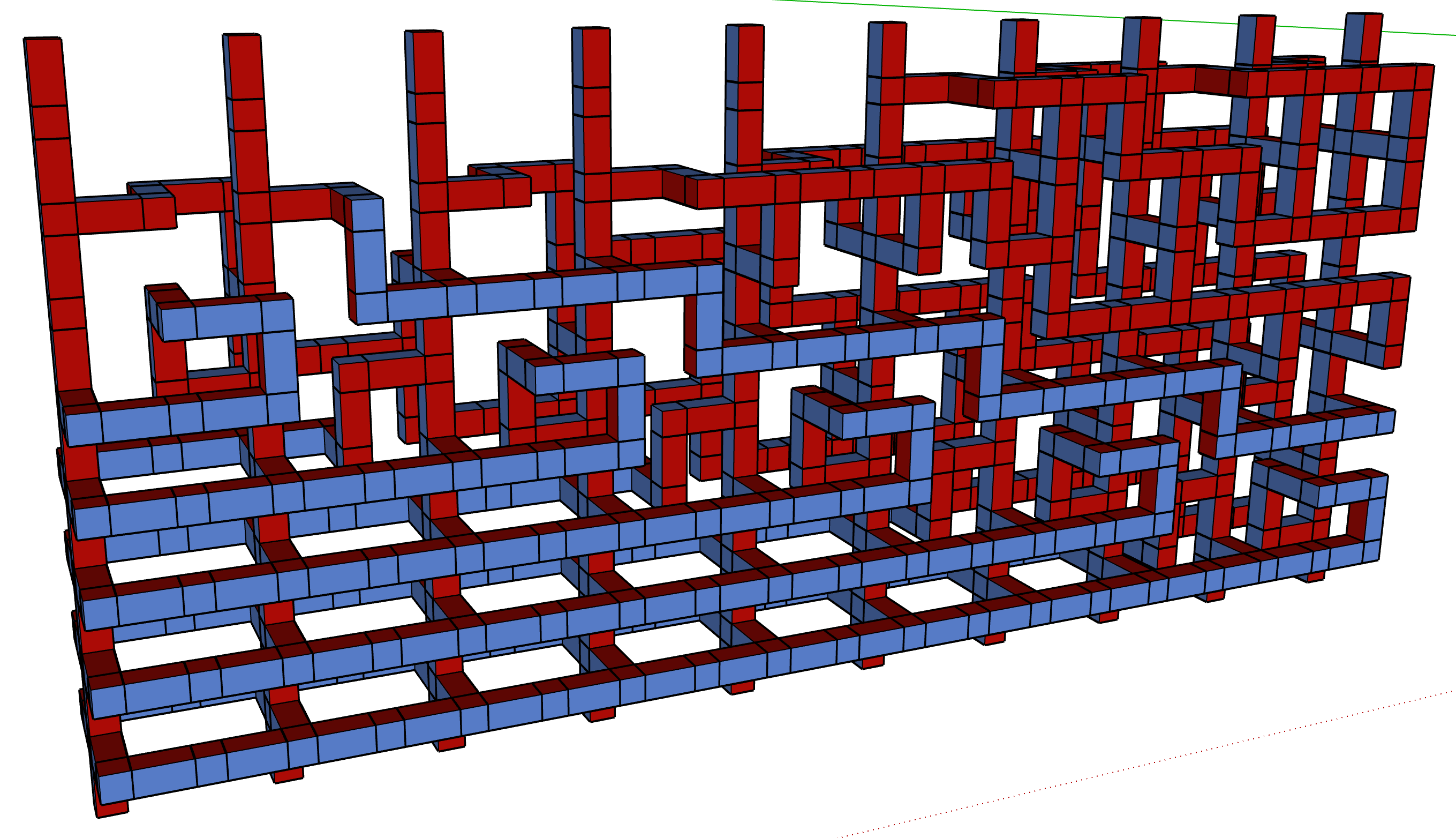}
        \includegraphics[width=0.5\linewidth]{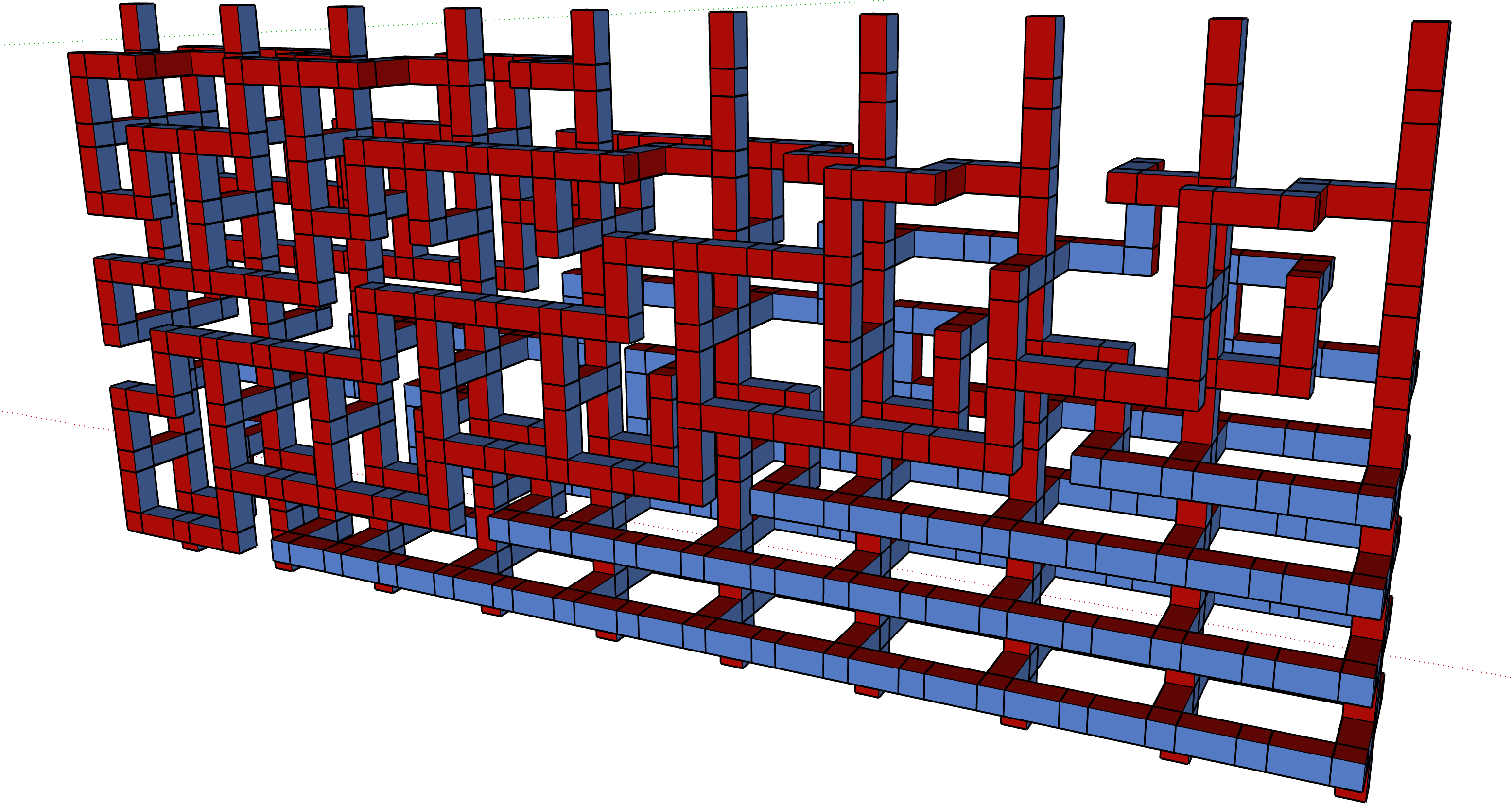}    
    }

    \subfloat[
    Lattice surgery layouts (front and back view) for synthesizing a linear reversible circuit on $12$ qubits, compiled using ZX graph shown in \cref{fig:gf2_mul_k_zx_graph}. 
    The layout fits in a $3 \times 23$ rectangle of surface code patches and requires $10d$ rounds.    
    ]{
    \includegraphics[width=0.5\linewidth]{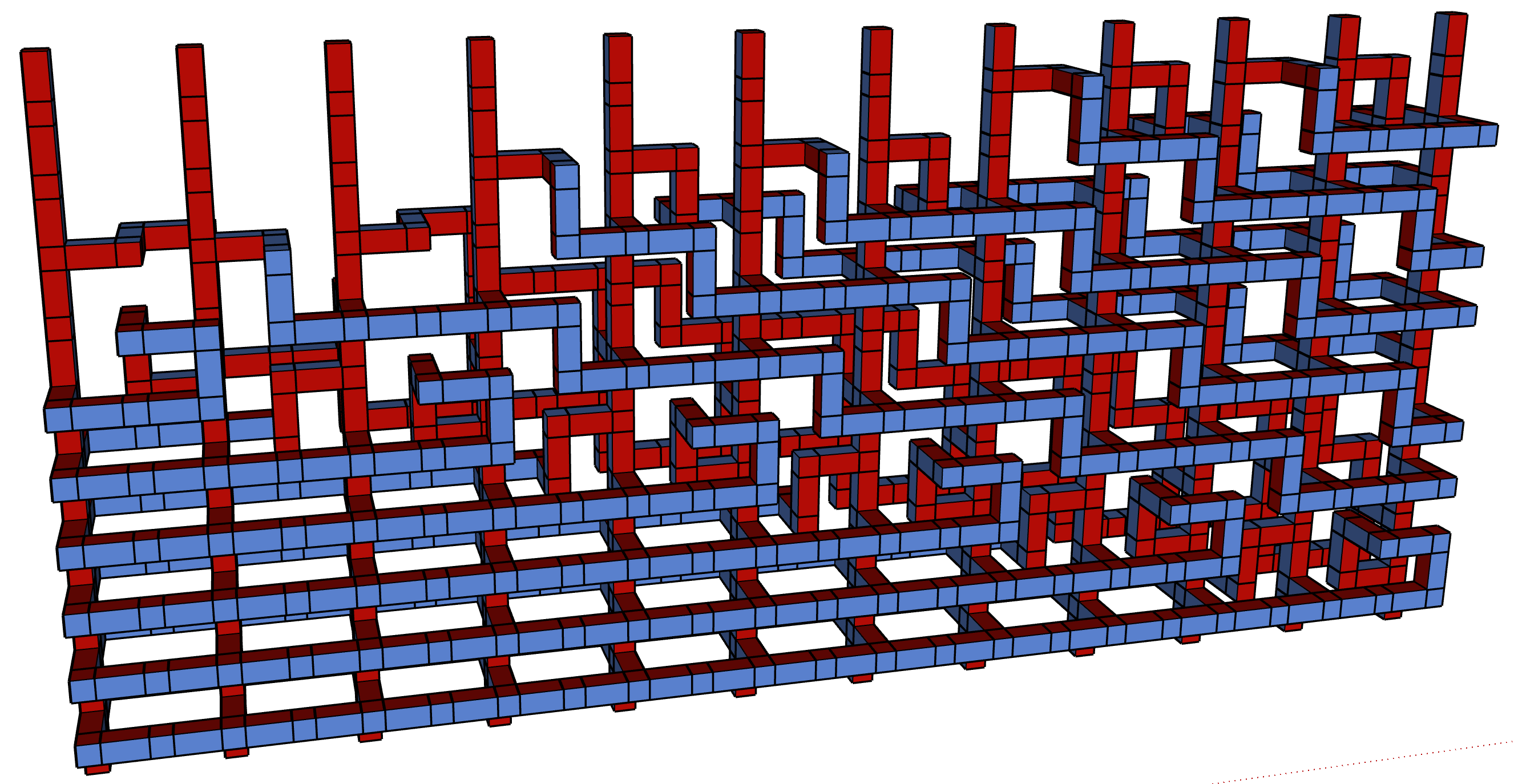}
    \includegraphics[width=0.5\linewidth]{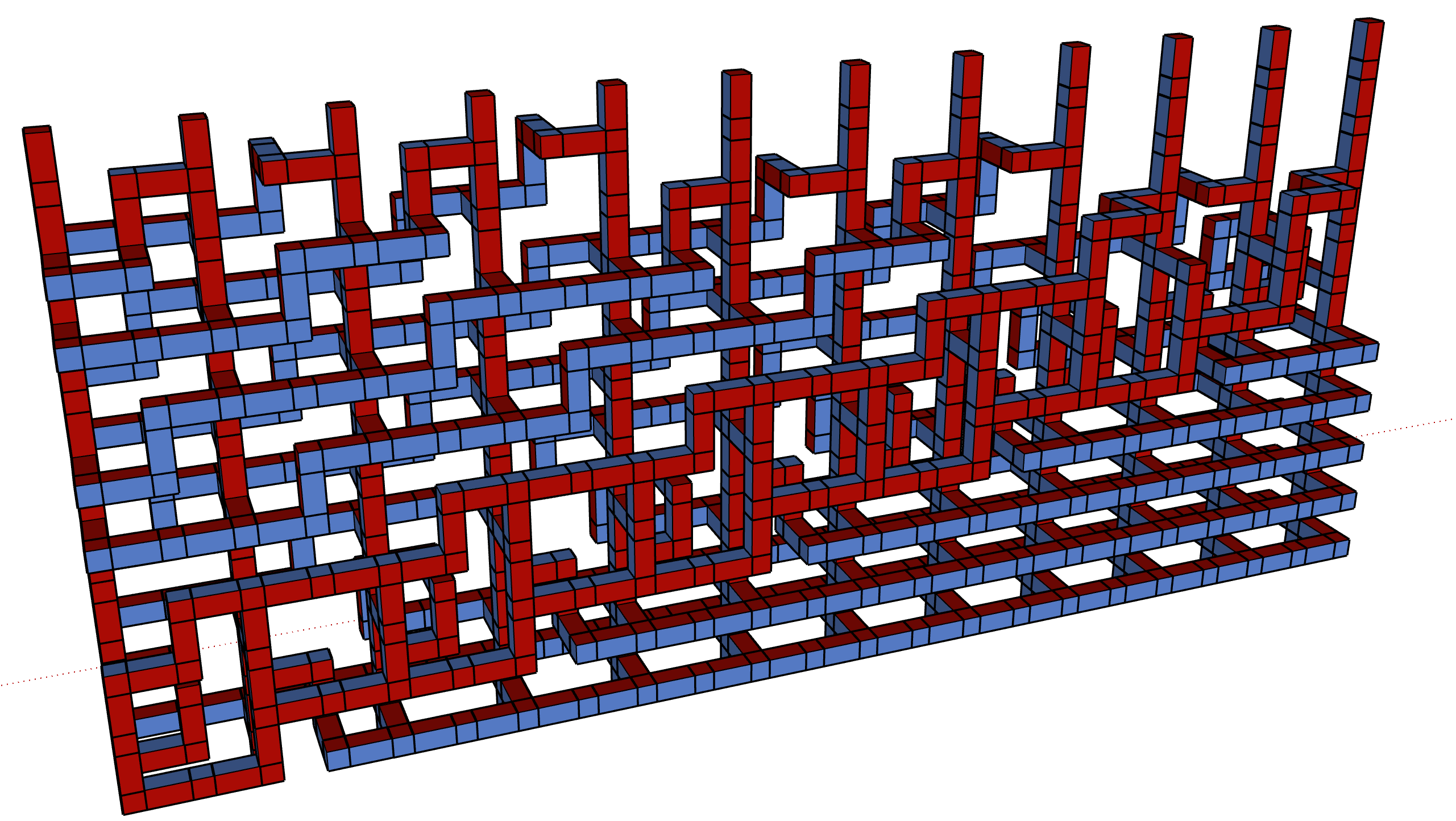}    
    }

    \caption{
        Spacetime layout for \lstinline{SynthesizeLR} - a primitive for synthesizing linear reversible circuits. For GF2 arithmetic, field operations like multiplication by a constant polynomial and squaring reduce to \lstinline{SynthesizeLR} \cite{maslov2025asymptotic}.
    }
    \label{fig:gf2_mul_k}
\end{figure}

\begin{figure}
    \centering
    \subfloat[
        ZX graph for a Parity Control Toffoli gate, which computes the AND of the parity of two sets of control qubits and flips one more target qubits. 
        For the specific example, the Parity Control Toffoli flips 4 target qubits (on the right) based on $(x_1 \vee x_2 \vee x_3 \vee x_4) \wedge (y_1 \vee y_2 \vee y_3 \vee y_4)$. 
        Here is \href{https://algassert.com/quirk\#circuit=\%7B\%22cols\%22\%3A\%5B\%5B1\%2C1\%2C1\%2C\%22Y\%5Et\%22\%2C\%22Y\%5E-t\%22\%2C\%22Y\%5Et\%22\%2C\%22Y\%5E-t\%22\%2C\%22Y\%5Et\%22\%2C\%22Y\%5E-t\%22\%2C\%22Y\%5Et\%22\%2C\%22Y\%5E-t\%22\%2C\%22Y\%5Et\%22\%2C\%22Z\%5E\%C2\%BC\%22\%5D\%2C\%5B1\%2C1\%2C1\%2C\%22QFT9\%22\%5D\%2C\%5B1\%2C1\%2C1\%2C\%22Chance9\%22\%5D\%2C\%5B1\%2C\%22X\%22\%2C1\%2C\%22zpar\%22\%2C\%22zpar\%22\%2C\%22zpar\%22\%5D\%2C\%5B1\%2C1\%2C\%22X\%22\%2C1\%2C1\%2C1\%2C\%22zpar\%22\%2C\%22zpar\%22\%2C\%22zpar\%22\%5D\%2C\%5B1\%2C\%22\%E2\%80\%A2\%22\%2C\%22\%E2\%80\%A2\%22\%2C1\%2C1\%2C1\%2C1\%2C1\%2C1\%2C\%22X\%22\%2C\%22X\%22\%2C\%22X\%22\%5D\%2C\%5B1\%2C1\%2C\%22X\%22\%2C1\%2C1\%2C1\%2C\%22zpar\%22\%2C\%22zpar\%22\%2C\%22zpar\%22\%5D\%2C\%5B1\%2C\%22X\%22\%2C1\%2C\%22zpar\%22\%2C\%22zpar\%22\%2C\%22zpar\%22\%5D\%2C\%5B\%22\%E2\%80\%A6\%22\%2C\%22\%E2\%80\%A6\%22\%2C\%22\%E2\%80\%A6\%22\%2C\%22\%E2\%80\%A6\%22\%2C\%22\%E2\%80\%A6\%22\%2C\%22\%E2\%80\%A6\%22\%2C\%22\%E2\%80\%A6\%22\%2C\%22\%E2\%80\%A6\%22\%2C\%22\%E2\%80\%A6\%22\%2C\%22\%E2\%80\%A6\%22\%2C\%22\%E2\%80\%A6\%22\%2C\%22\%E2\%80\%A6\%22\%5D\%2C\%5B\%22X\%22\%2C1\%2C1\%2C\%22zpar\%22\%2C\%22zpar\%22\%2C\%22zpar\%22\%2C1\%2C1\%2C1\%2C1\%2C1\%2C1\%2C\%22zpar\%22\%5D\%2C\%5B\%22Z\%5E-\%C2\%BC\%22\%5D\%2C\%5B\%22X\%22\%2C1\%2C1\%2C\%22zpar\%22\%2C\%22zpar\%22\%2C\%22zpar\%22\%2C1\%2C1\%2C1\%2C1\%2C1\%2C1\%2C\%22zpar\%22\%5D\%2C\%5B\%22X\%22\%2C1\%2C1\%2C1\%2C1\%2C1\%2C\%22zpar\%22\%2C\%22zpar\%22\%2C\%22zpar\%22\%2C1\%2C1\%2C1\%2C\%22zpar\%22\%5D\%2C\%5B\%22Z\%5E-\%C2\%BC\%22\%5D\%2C\%5B\%22X\%22\%2C1\%2C1\%2C1\%2C1\%2C1\%2C\%22zpar\%22\%2C\%22zpar\%22\%2C\%22zpar\%22\%2C1\%2C1\%2C1\%2C\%22zpar\%22\%5D\%2C\%5B\%22X\%22\%2C1\%2C1\%2C\%22zpar\%22\%2C\%22zpar\%22\%2C\%22zpar\%22\%2C\%22zpar\%22\%2C\%22zpar\%22\%2C\%22zpar\%22\%2C1\%2C1\%2C1\%2C\%22zpar\%22\%5D\%2C\%5B\%22Z\%5E\%C2\%BC\%22\%5D\%2C\%5B\%22X\%22\%2C1\%2C1\%2C\%22zpar\%22\%2C\%22zpar\%22\%2C\%22zpar\%22\%2C\%22zpar\%22\%2C\%22zpar\%22\%2C\%22zpar\%22\%2C1\%2C1\%2C1\%2C\%22zpar\%22\%5D\%2C\%5B1\%2C1\%2C1\%2C1\%2C1\%2C1\%2C1\%2C1\%2C1\%2C1\%2C1\%2C1\%2C\%22H\%22\%5D\%2C\%5B1\%2C1\%2C1\%2C1\%2C1\%2C1\%2C1\%2C1\%2C1\%2C1\%2C1\%2C1\%2C\%22Z\%5E\%C2\%BD\%22\%5D\%2C\%5B1\%2C1\%2C1\%2C1\%2C1\%2C1\%2C1\%2C1\%2C1\%2C\%22X\%22\%2C\%22X\%22\%2C\%22X\%22\%2C\%22zpar\%22\%5D\%2C\%5B1\%2C1\%2C1\%2C1\%2C1\%2C1\%2C1\%2C1\%2C1\%2C1\%2C1\%2C1\%2C\%22H\%22\%5D\%2C\%5B1\%2C1\%2C1\%2C1\%2C1\%2C1\%2C1\%2C1\%2C1\%2C1\%2C1\%2C1\%2C\%22Measure\%22\%5D\%2C\%5B1\%2C1\%2C1\%2C\%22\%E2\%80\%A2\%22\%2C1\%2C1\%2C\%22Z\%22\%2C\%22Z\%22\%2C\%22Z\%22\%2C1\%2C1\%2C1\%2C\%22\%E2\%80\%A2\%22\%5D\%2C\%5B1\%2C1\%2C1\%2C1\%2C\%22\%E2\%80\%A2\%22\%2C1\%2C\%22Z\%22\%2C\%22Z\%22\%2C\%22Z\%22\%2C1\%2C1\%2C1\%2C\%22\%E2\%80\%A2\%22\%5D\%2C\%5B1\%2C1\%2C1\%2C1\%2C1\%2C\%22\%E2\%80\%A2\%22\%2C\%22Z\%22\%2C\%22Z\%22\%2C\%22Z\%22\%2C1\%2C1\%2C1\%2C\%22\%E2\%80\%A2\%22\%5D\%2C\%5B1\%2C1\%2C1\%2C\%22QFT\%E2\%80\%A09\%22\%5D\%2C\%5B1\%2C1\%2C1\%2C\%22Y\%5E-t\%22\%2C\%22Y\%5Et\%22\%2C\%22Y\%5E-t\%22\%2C\%22Y\%5Et\%22\%2C\%22Y\%5E-t\%22\%2C\%22Y\%5Et\%22\%2C\%22Y\%5E-t\%22\%2C\%22Y\%5Et\%22\%2C\%22Y\%5E-t\%22\%5D\%5D\%2C\%22init\%22\%3A\%5B0\%2C0\%2C0\%2C0\%2C0\%2C0\%2C0\%2C0\%2C0\%2C0\%2C0\%2C0\%2C\%22\%2B\%22\%5D\%7D}{quirk link}
        that shows how a Parity Control Toffoli can be implemented by consuming 4 $\ket{T}$ states.
    ]{
    \includegraphics[width=1.0\linewidth]{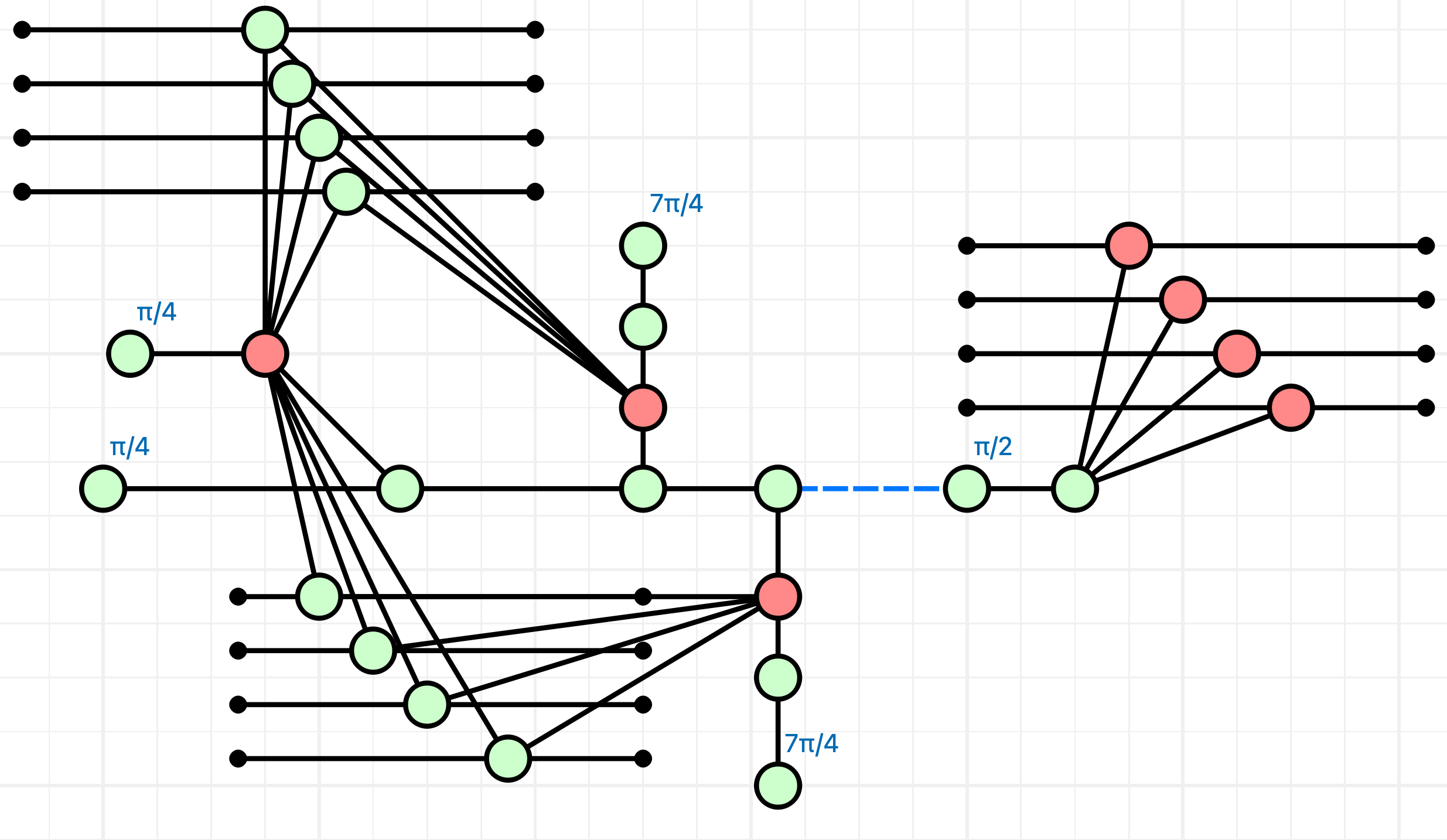}    
    }\label{fig:generalized_toffoli_zx}

    \subfloat[
        Lattice surgery diagram for compiling two Parity Control Toffoli gates, each acting on two sets of $10$-qubit control registers (middle and right) and one $10$-qubit target register (on the left).
        Green boxes correspond to Y basis initialization or measurement \cite{Gidney_2024}. 
        Gray boxes correspond to the reaction time for decoder to process the measurement results.
        Pink boxes correspond to space available for magic state cultivation \cite{gidney2024magicstatecultivationgrowing}. 
        Every $\ket{T}$ state has $3\times2\times d^3$ spacetime available for cultivation. 
        With $d=21$, this is enough to cultivate $\ket{T}$ states with a logical error rate of $2\times 10^{-9}$.
        The space cost is $3 \times (2.5b + 9)$ and amortized time cost per Parity Control Toffoli is $1.5d$ rounds.
    ]{
    \includegraphics[width=\linewidth]{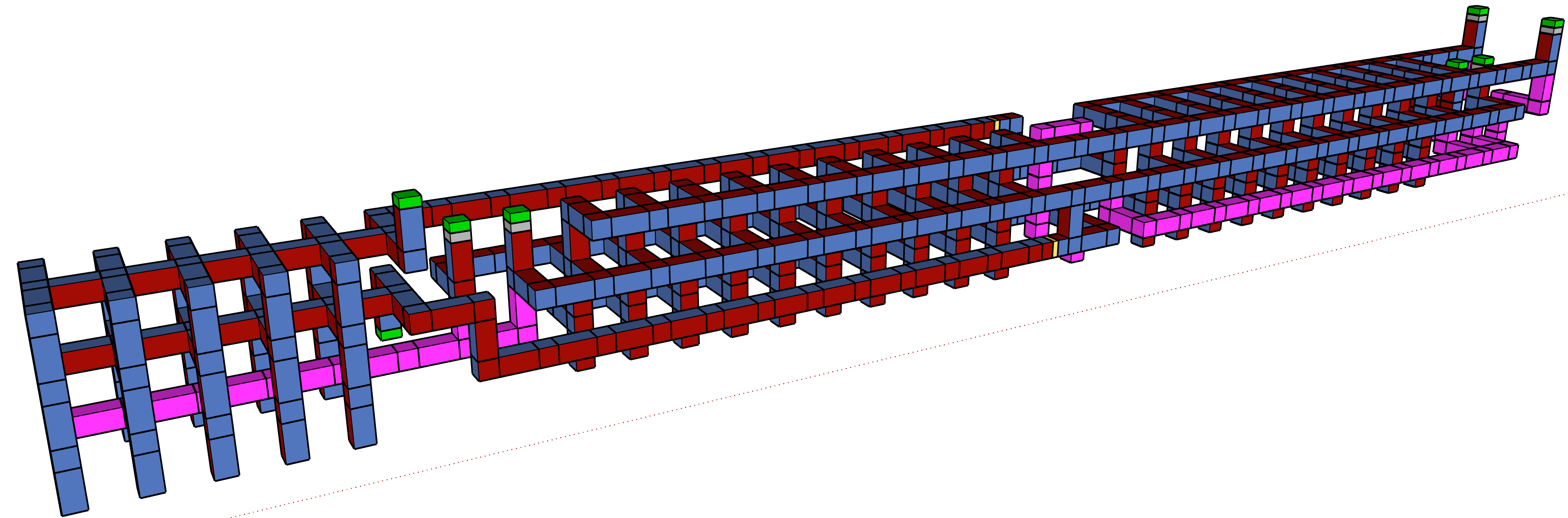}        
    }\label{fig:two_generalized_toffoli_pipes}
    \caption{
    Lattice surgery compilation for two Parity Control Toffoli gates using magic state cultivation. 
    See \cite[Figure 1]{gidney2025classicalquantumadderconstantworkspace} and \cref{sec:improved_gf2_arithmetic} for more discussion on Parity Control Toffoli gates.
    Acts as a building block for compiling quantum-quantum multiplication (\lstinline{GF2Mul}) circuits for GF($2^b$).
    }
    \label{fig:parity_control_toffoli_layout}
\end{figure}

\begin{figure}
    \centering
    \includegraphics[width=0.75\linewidth]{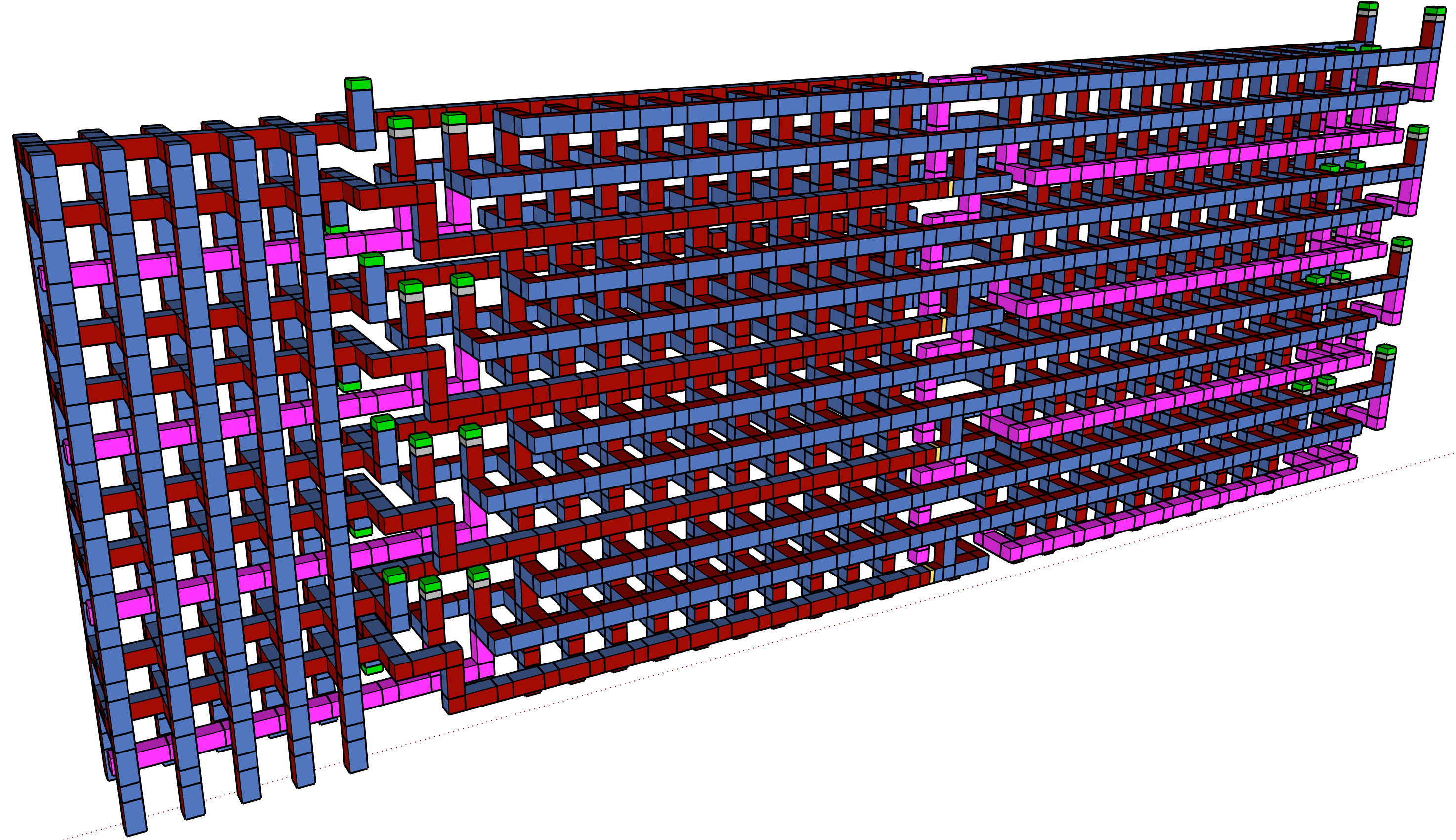}
    \caption{
    Lattice Surgery diagram for compiling GF2 Multiplication (\lstinline{GF2Mul(10)}) using Karatsuba algorithm by stacking the Parity Control Toffoli primitive from \cref{fig:two_generalized_toffoli_pipes}.   
    In \cref{sec:improved_gf2_arithmetic}, we show how the Karatsuba algorithm for GF2 multiplication can be viewed entirely as a sequence of Parity Control Toffoli gates. 
    For $b=10$, the we use the modified Karatsuba quantum circuit \cite{maslov2025asymptotic}, which uses exactly $45$ (parity) Toffolis.
    }
    \label{fig:gf2_mul}
\end{figure}

\begin{figure}
    \centering
    \includegraphics[width=0.65\linewidth]{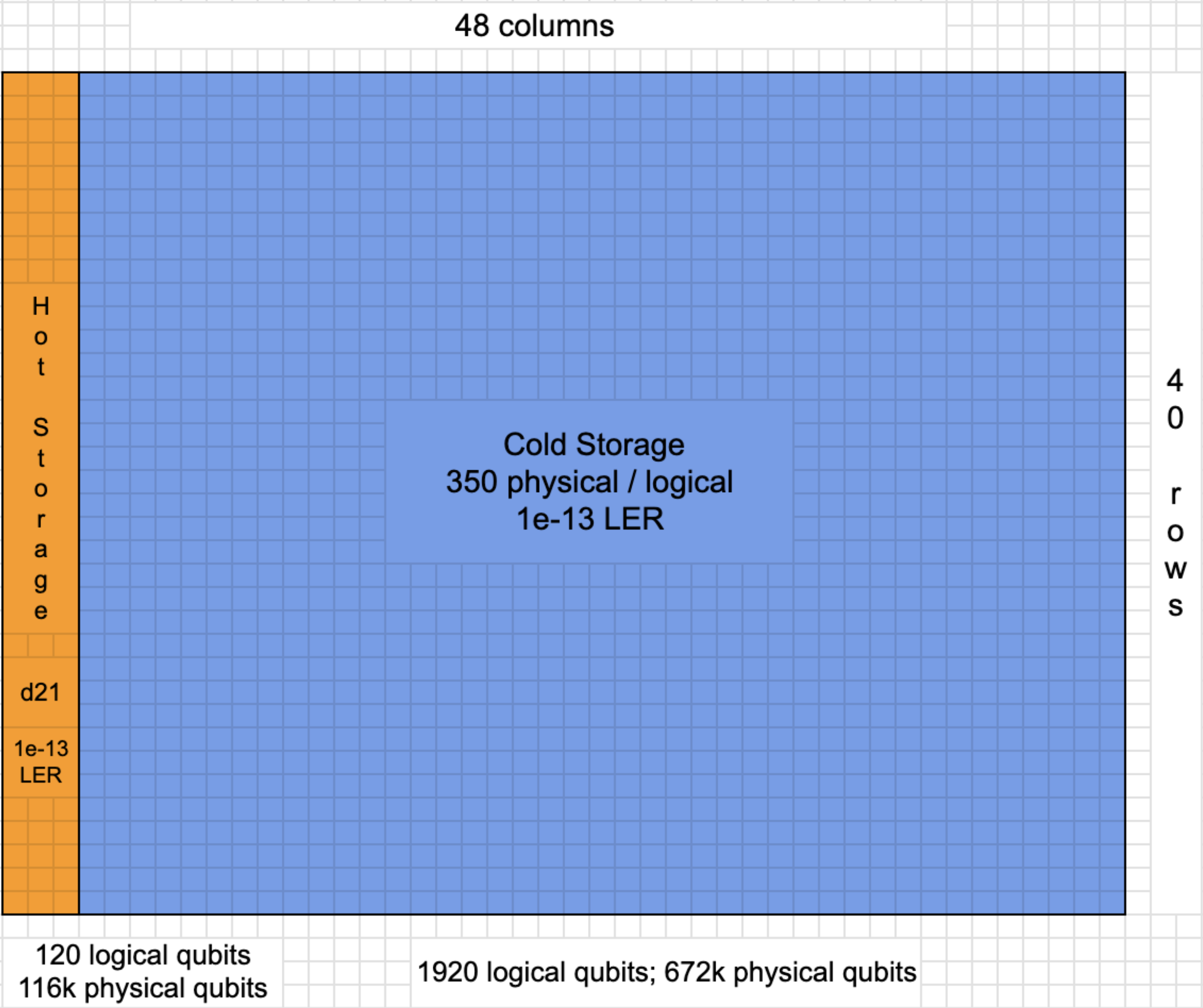}
    \caption{
    Mockup of a physical layout for $(m=4095, n=70, b=12, r=2016)$ OPI instance requiring $1885$ logical qubits and $5.72 \times 10^6$ Toffoli gates (see \cref{tbl:resource_estimates}). 
    The blue region stores logical qubits in \emph{cold storage}, where logical qubits are stored as densely as possible but are not operated upon.
    The orange region stores logical qubits in \emph{hot storage}, where each logical qubit is stored ``normally" using a surface code patch taking $2\times(d+1)^2$ physical qubits and is actively operated upon.
    For a logical error rate (LER) of $10^{-13}$, a $d=21$ surface code with $968$ physical qubits per logical qubit is used for hot storage and a 2D Yoked Surface Code \cite{gidney2025yoked} with $350$ physical qubits per logical qubit is used for cold storage \cite[Figure 6]{gidney2025factor2048bitrsa}.
    }
    \label{fig:qubit_storage_layout}
\end{figure}

\section{Conclusion}\label{sec:conclusion}
This work establishes Decoded Quantum Interferometry (DQI) applied to the Optimal Polynomial Intersection (OPI) problem as a landmark candidate for practical quantum advantage. From a complexity-theoretic perspective, we have shown that DQI+OPI is the first known proposal for verifiable superpolynomial speedup that achieves optimal asymptotic efficiency, requiring only $\bigOtilde(N)$ quantum gates to solve instances with classical hardness $O(2^N)$. This matches the theoretical lower bound and significantly outperforms the asymptotic scaling of Shor's algorithm for cryptography.

The realization of this speedup relies on the concrete efficiency of the underlying reversible Reed-Solomon (RS) decoder. We introduced a suite of novel algorithmic and compilation techniques targeting the Extended Euclidean Algorithm (EEA), the critical bottleneck in RS decoding. Our innovations, including the \emph{Dialog} representation for implicit Bézout coefficient access and optimized \emph{in-place register sharing} architectures, rigorously reduce the space complexity to the theoretical minimum of $2nb$ qubits while substantially lowering gate counts. These techniques are broadly applicable and promise significant improvements for other quantum algorithms reliant on the EEA, particularly those for Elliptic Curve Cryptography.

We provided a comprehensive end-to-end compilation and resource analysis for DQI, focusing on OPI over binary extension fields $GF(2^b)$. We analyzed the classical hardness against tailored attacks, such as the Extended Prange algorithm, and identified resilient instances based on bent functions. Our concrete resource estimates demonstrate that DQI can solve instances requiring $>10^{23}$ classical trials using approximately 5.72 million Toffoli gates and 1885 logical qubits. This represents a reduction of three orders of magnitude in the gate count compared to breaking RSA-2048. 
Furthermore, our physical resource analysis, supported by hand-optimized lattice surgery layouts for key primitives, estimates that such instances could be solved on a fault-tolerant architecture with $800,000$ physical qubits in $\approx 1$ hour of runtime.

While our results position DQI as a compelling pathway for demonstrating quantum advantage in optimization, several avenues for future research remain.

Continued refinement of the classical hardness analysis is crucial. While we have demonstrated resilience against the best known classical attacks in the low-rate approximation regime, further investigation into the impact of sophisticated algebraic attacks (\cite{briaud2025quantum}) and establishing formal average-case hardness guarantees remain important open problems.

On the quantum side, there are a few routes to improved performance as well. In \cite{chailloux2025quantum}, a few interesting ideas were proposed. Among these was the idea to use Guruswami-Sudan and Koetter-Vardy list decoding algorithms for RS codes \cite{guruswami1998improved,koetter2003algebraic}. This would allow us to find solutions that would take exponential time even with DQI and the Berlekamp-Massey decoder, with the cost of a higher (still polynomial) quantum circuit complexity.
In a concurrent work, \cite{GuJordan2025Algebraic} considers applying DQI to algebraic geometry codes. These codes have some potentially favorable properties for verifiable quantum advantage demonstrations. Notably, they can maintain a constant rate and relative distance with considerably smaller (and in some cases constant) alphabet sizes than Reed-Solomon codes, which could improve the space footprint of the syndrome registers.

The Dialog representation introduced here opens new theoretical and practical questions regarding linear representations of numbers and polynomials. Key open questions include:
\begin{enumerate}
    \item Can the size of the Dialog be reduced from $\approx 2n$ field elements to $\approx n$ when one of the inputs to the GCD is known classically?
    \item Is there an efficient, in-place algorithm for transforming the Dialog for $x$ into the Dialog for $x^{-1}$, or for computing the Dialog of $x+y$ from the Dialogs of $x$ and $y$, without passing through the standard polynomial representation?
    \item Are there more efficient linear representations beyond those based on the GCD algorithm, potentially enabling parallelization?
\end{enumerate}

By bridging complexity theory with concrete algorithmic engineering, this work identifies and enables an asymptotically optimal and practically efficient route toward verifiable quantum advantage. \\

\noindent \textbf{Acknowledgments:} We thank Mary Wootters for helpful conversations about decoding algorithms. We thank Alexandru Gheorghiu for teaching us about bent functions such as in eq~\eqref{eq:quadraticVariety} at a Simons workshop organized by Umesh Vazirani.

\clearpage
\bibliography{references}

\appendix

\section{Optimizing the number of trials of Extended Prange} \label{sec:dynamic_programming}
Recall the optimization problem from \cref{sec:extendedPrange}
\begin{align}
    \mathrm{maximize}\,\,\,& \mathbb{P}\left(\sum_{i = 1}^m X_i \geq t\right)     \nonumber \\
    \mathrm{subject\, to}\,\,\,&X_i \sim \mathrm{Ber}(P[s_i]) \nonumber \\ 
    &\sum_{i=1}^m s_i \leq B \nonumber \\
    &0 \leq s_i \leq b \nonumber\\
    &\mathbb{P}(1|s) \leq \mathbb{P}(1|s+1) \label{ineq:probability_constraint}\\
    \label{eq:dynamicprogrammingobjective}
\end{align}

Eq~\eqref{eq:dynamicprogrammingobjective} is hard to solve exactly for large $m, b, $ and $B$, but using dynamic programming we can compute a tight approximation. To compute this approximation we note that eq \eqref{eq:dynamicprogrammingobjective} is invariant under permutations of $s_i$ which allows us to build the solution in any order we desire.

Before we describe our approach we note that any constructive that chooses the $s_i$ values iteratively faces the problem of given $s_1 \cdots s_k$  choose the best $s_{k+1}$ which is a hard question since the sequence of $s_i$ induce a probability distribution and the set of probability distributions is intrinsically an unordered set. However, given our objective we can impose a few ordering relations with various degrees of effectiveness.

To solve the problem we employ a knapsack like dynamic programming approach. Dynamic programming is a constructive paradigm and since we are building an approximation algorithm the order we choose $s_i$ affects the quality of the result. Eq~\eqref{eq:probability_flow} shows that as we construct our solution the joint probability distribution induced by our choices of $s_i$ flows in the direction of increasing $\sum X_i$. We also note that the change in the joint distribution from $\mathbb{P}(\bullet|s_1\cdots s_k)$ to $\mathbb{P}(\bullet|s_1\cdots s_k s_{k+1})$ is controlled by $\mathbb{P}(1|s_{k+1})$ which from eq~\eqref{ineq:probability_constraint} is monotonic, from these two observation we find that it is advantageous to choose the $s_i$ values in descending order $s_1 \geq s_2 \geq \cdots \geq s_m$. The advantage of this ordering is two folds, the first is that after the first few choices of $s_i$ the joint probability distribution stabilizes from one choice to another,  the second is computational since it reduces the runtime of the algorithm.

\begin{multline}
    \mathbb{P}\left(\sum_{i = 1}^{k+1} X_i \geq t | s_1 \cdots s_k s_{k+1}\right) = \\
    \mathbb{P}\left(\sum_{i = 1}^k X_i \geq t | s_1 \cdots s_k\right) + \mathbb{P}\left(1 | s_{k+1}\right) \mathbb{P}\left(\sum_{i=1}^k X_i = t-1 | s_1 \cdots s_k\right) \label{eq:probability_flow} \\
\end{multline}

Our solution has a dynamic programming state of $(i, \mathrm{budget}, \mathrm{low})$ which means for the first $i$ variables and budget $\mathrm{budget} \leq B$ and $ \forall_{j \leq i} s_{j} \geq \mathrm{low}$ what is the best joint probability distribution. This state has only two transitions $(i, \mathrm{budget}, \mathrm{low}) \to (i, \mathrm{budget}, \mathrm{low} + 1)$ and $(i, \mathrm{budget}, \mathrm{low}) \to (i-1, \mathrm{budget} - \mathrm{low}, \mathrm{low})$. To compare between the up to two probability distributions we get from these transitions we used two comparison functions.

The first comparison function is eq~\eqref{eq:fast_sorting}, which is quick and for most test cases we generated was at most $5\%$ lower than the optimal result with some outliers with a bigger error. The second is eq~\eqref{eq:slow_sorting} which is slower since it involves computing the convolution between the probability distribution induced by our tuple $s_1\cdots s_k$ and the one induced by the sequence computed by eq~\eqref{eq:look_ahead}. The intuition behind $F$ which is computed using the classical dynamic programming knapsack algorithm is that we are doing a look ahead trying to estimate the best probability distribution given the remaining budget, but for computational reasons instead of solving the original problem we solve the related easier problem of maximizing the sum of probabilities, this gives us a better lower bound on the best final joint probability distribution starting from the given $S$. Note that, eq~\eqref{eq:fast_sorting} is a special case of eq~\eqref{eq:slow_sorting} since it can be interpreted as the convolution of the current probability distribution with the probability distribution $\mathbb{P}(0) = 1$.

\begin{equation}
    T(S) = (\mathbb{P}(\sum X_i \geq t | S), \mathbb{P}(\sum X_i = t-1 | S), \mathbb{P}(\sum X_i = t-2 | S), \cdots, \mathbb{P}(\sum X_i = 0 | S)) \label{eq:tuple_representation}
\end{equation}

\begin{equation}
    S_1 \geq S_2 \iff T(S_1) \geq T(S_2) \label{eq:fast_sorting}
\end{equation}

\begin{align}
    S_1 \geq S_2 &\iff T(S_1 + F(m - k, \mathrm{budget} - \sum_{s \in S_1} s_i,\ s_k^{(1)}) \geq T(S_2 + F(m - k, \mathrm{budget} - \sum_{s \in S_2} s_i),\ s_k^{(2)}) \label{eq:slow_sorting} \\
    S_1 + S_2 &= (s_1^{(1)}, \cdots, s_a^{(1)}) + (s_1^{(2)}, \cdots, s_b^{(2)}) \nonumber \\
    &= (s_1^{(1)}, \cdots, s_a^{(1)}, s_1^{(2)}, \cdots, s_b^{(2)}) \nonumber
\end{align}

\begin{align}
    F(m, \mathrm{budget}, \mathrm{low}) &= \mathrm{argmax}_{s_1 \cdots s_m} \sum_{i=1}^m P[s_i] \nonumber\\
    \mathrm{subject\, to}&\sum s_i \leq \mathrm{budget} \nonumber\\
    \mathrm{and\ }& s_i \geq \mathrm{low} \label{eq:look_ahead}
\end{align}

We were unable to find a case where the dynamic programming approach with the slow sorting relation eq~\eqref{eq:slow_sorting} disagrees with brute force solution. Although such cases theoretically exist we believe the error will be small. The time complexity of this solution is $\mathcal{O}(mbBt)$ with the fast sorting relation and $\mathcal{O}(mbBt^2)$ with the slow sorting relation.

\subsection{Upper Bound on the objective function} \label{subsection:objective_upper_bound}

The objective function of \eqref{eq:dynamicprogrammingobjective} is upper bounded through a direct application of the Hoeffding Inequality by

\begin{equation}
    \mathbb{P}\left(\sum_{i=1}^m X_i \geq t\right) \leq \exp\left(-2\frac{(t - \mathbb{E}[\sum_{i=1}^m X_i])^2}{m}\right)
\end{equation}

This upper bound is minimized by maximizing $\mathbb{E}[\sum_{i=1}^m X_i] = \sum_{i=1}^{m} P[s_i]$. The function $P[s]$ is derived from \eqref{eq:bounds_on_intersection} as 

\begin{equation}
    P[s] = \begin{cases} 
        \frac{1}{2} - \frac{1}{2^{k+1}} & s = 0 \\
        \frac{1}{2} & s = 1 \\
        \frac{1}{2} + \frac{1}{2^{k-s+2}} & 1 < s \leq k \\
        1 & s > k \\
    \end{cases} \label{eq:p_s_func}
\end{equation}

If we let $c_j$ be the number of times $s_i = j$ then maximizing the  expectation can be written as

\begin{align}
    \mathrm{maximize}\,\,\,& \mathbb{E}\left[\sum_{i=1}^m X_i\right] = \sum_{s=0}^{2k} c_s P[s]\\
    \mathrm{subject\, to}\,\,\,& \sum_{s=0}^{2k} c_s = m \nonumber\\
    \,&\sum_{s=0}^{2k} c_s s \leq B = bn = bRm\nonumber
\end{align}

Since the objective function is the expectation of the sum of $m$ Bernoulli variables, its value is a fraction of $m = hm, 0\leq h\leq 1$ which can be substituted into \eqref{eq:hoeffding} with $h$ controlling the exponent of \eqref{eq:asymptitic_scaling}. We can compute this expectation using a integer program or for large instances compute an upper bound by allowing $c_s$ to be real rather than integer and solving the linear program. For the specific $P[s]$ function in \eqref{eq:p_s_func} a simple strategy that uses all of the budget to get $s=k+1$ also maximizes the objective. This can be seen as follows.
After spending $s$ points the expectation value is increased by $P[s] - P[0]$. To maximize our total expectation value we should spend in a way that maximizes our gain per unit cost, i.e. $(P[s]-P[0]) / s$. A brief analysis shows that this occurs for $s=k+1$, so the optimal solution to the LP should greedily allocate as much mass as possible onto $p_{k+1}$.

Using this greedy approach we get an upper bound on the expectation

\begin{align}
    \mathbb{E}\left[\sum_{i=1}^m X_i\right] &< P[k+1] c_{k+1} + P[0] c_0 \\
    &= \frac{B}{k+1} + \left(\frac{1}{2} - \frac{1}{2^{k+1}}\right) \left(m - \frac{B}{k+1}\right) \nonumber\\
    &< \frac{B}{k+1} + \frac{1}{2} \left(m - \frac{B}{k+1}\right) \nonumber\\
    &= \frac{m}{2} + \frac{1}{2} \frac{B}{k+1} \nonumber\\
    &< \frac{m}{2} + \frac{1}{2} \frac{B}{k} \nonumber\\
    &= \frac{m}{2} + \frac{1}{2} \frac{2kRm}{k} \nonumber\\
    &= \left(\frac{1}{2} + R\right)m\nonumber
\end{align}

\section{Sparse Dicke State Preparation} \label{sec:sparse_dicke_state}
The Dicke state $\ket{D^{m}_{k}}$ is an equal weight superposition of all $m$-qubit states with Hamming weight $k$ (i.e. all strings of length $m$ with exactly $k$ ones over a binary alphabet). As shown in \cref{sec:dqi_quantum_circuit}, to prepare the syndrome register for DQI, we need to prepare the state
$$
    \sum_{k=0}^{l}w_{k}\ket{D^{m}_{k}}
$$
where $w_{0}, w_{1}, \dots, w_{l - 1}$ where $l=n/2$, are classically known coefficients and $\ket{D^m_k}$ is the Dicke state
$$
    \ket{D^m_k} = \frac{1}{\sqrt{{m}\choose{k}}}\sum_{|y|=k}\ket{y}
$$

In the context of DQI, the Dicke state encodes a superposition over all possible ${m \choose k}$ error locations for $0 \leq k \leq l$ errors in a codeword of length $m$. This encoding uses $m$ qubits. However, when $m \gg l = n/2$, we can define a sparse encoding that captures the same information using $l\times\log_2{m}$ qubits, and thus be more space efficient. 

A Sparse Dicke state $\ket{SD^{m}_{k}}$ can be defined as a uniform superposition over all $k-\text{combinations}$ of indices $[1, 2, \dots, m]$, stored using $k$ registers each of size $b=\lceil\log_2{(m + 1)}\rceil$, using a total of $k\cdot b$ qubits.

$$
    \ket{SD^m_k} = \frac{1}{\sqrt{\binom{m}{k}}} \sum_{1 \le c_1 < c_2 < \dots < c_k \le n} \ket{c_1}_b \ket{c_2}_b \cdots \ket{c_k}_b
$$

First, we describe a way to prepare the Sparse Dicke states $\ket{SD^m_k}$ using the Combinatorial Number System \cite{knuth2005generating}. Then, we give two different constructions for quantum circuits to unrank combinations, one that minimizes the asymptotic complexity and other that minimizes the constant factors. 

\subsection{The unranking strategy}

The Combinatorial Number System \cite{knuth2005generating} defines an ordering over all $\binom{m}{k}$ $k-$combinations such that a combination $[c_k, c_{k-1}, \dots, c_1]$ where $n - 1 \geq c_k > c_{k-1} > \dots > c_1\geq 0$ corresponds to a unique rank $0\leq r < \binom{m}{k}$ given by the bijection
$$
    r = \sum_{j=1}^{k}\binom{c_j}{j}
$$

Using this bijection, the state preparation strategy involves two steps:
\begin{enumerate}
    \item Prepare a uniform superposition of all ranks $r \in \{0, \dots, \binom{m}{k}-1\}$ using $\mathcal{O}(k\cdot b)$ gates \cite{Babbush_2018}. Here we use the fact that $\binom{m}{k} \leq m^k$, so $\log_2{\binom{m}{k}} \leq k\log_2{m} = k\cdot  b$. The state of the system is given as follows.
    $$
        \frac{1}{\sqrt{\binom{m}{k}}} \sum_{r=0}^{\binom{m}{k} - 1} \ket{r}_{k\cdot b}
    $$
    \item Apply a unitary transformation $U_\text{Unrank}$ that maps the rank $r$ to its corresponding combination $C(r)$:
    $$
    \frac{1}{\sqrt{\binom{m}{k}}} \sum_{r} \ket{r}_{k\cdot b} \xrightarrow{U_\text{Unrank}} \frac{1}{\sqrt{\binom{m}{k}}} \sum_{r} \ket{C(r)}_{k\cdot b} = \ket{SD^m_k}
    $$
\end{enumerate}
The efficiency of the state preparation is determined by the complexity of $U_\text{Unrank}$.

\subsection{Divide and Conquer Unranking for \texorpdfstring{$\bigOtilde(m)$}{Õ(m)} Dicke State Preparation} \label{sec:fast_dicke_state}
We first describe a Divide and Conquer strategy for fast unranking. 
Given inputs $(m, k, r)$, we split the $m$ elements into two halves, $M_1$ and $M_2$, of sizes $m_1 = \lfloor m/2 \rfloor$ and $m_2 = \lceil m/2 \rceil$, and then calculate:
\begin{enumerate}
    \item The number of elements $k_1$ and $k_2$ that the combination $C(r)$ selects from the first ($M_1$) and second halves ($M_2$) respectively, such that $k_1 + k_2 = k$.
    \item The residual ranks $r_1$ and $r_2$, such that $C_{m, k}(r) = C_{m_1, k_1}(r_1) \| C_{m_2, k_2}(r_2)$.
\end{enumerate}

The unranking problem can then be solved by recursively solving the left and right subproblems for $(m_1, k_1, r_1)$ and $(m_2, k_2, r_2)$. If the work done at one level of recursion is $W(m, k)$, then the overall complexity of $U_\text{Unrank}$ follows the following recurrence relation:
$$
T(m, k) = T(m/2, k_1) + T(m/2, k-k_1) + W(m, k)
$$
If we can show that $W(m, k) = \bigOtilde(m)$, then by Master Theorem (Case 2), the total complexity is $T(m, k) = \bigOtilde(m \log m)$, which is $\bigOtilde(m)$.

% In the dense case where $k=O(m)$, the recurrence simplifies to $T(m) = 2T(m/2) + \bigOtilde(m)$. By the Master Theorem (Case 2), the total complexity is $T(m) = \bigOtilde(m \log m)$, which is $\bigOtilde(m)$.

The number of ways to choose $k$ items such that exactly $i$ items come from $M_1$ is given by the Hypergeometric distribution:
$$
H(i) = \binom{m_1}{i} \binom{m_2}{k-i}
$$

To find the correct $k_1$ for rank $r$, we seek the value such that the prefix sum of $H(i)$ crosses $r$:
$$
PS(k_1) = \sum_{i=0}^{k_1-1} H(i) \le r < PS(k_1+1)
$$
We can find $k_1$ using a binary search in $O(\log k)$ steps, provided we can efficiently calculate the prefix sum $PS(x)$.
Once $k_1$ is found, we calculate the residual ranks $r_1$ and $r_2$ corresponding to the left and right subproblems (via division and modulo operations on $r - PS(k_1)$), and recursively solve them.

The efficiency of this approach hinges on the fast calculation of the Hypergeometric prefix sum $PS(x)$. The numbers involved (ranks and binomial coefficients) have bit length $L = O(k \log m)$. A naive summation of $O(k)$ terms is too slow.
We use a technique called \emph{Binary Splitting}. This method is effective for evaluating series where the ratio of consecutive terms $R(i) = H(i+1)/H(i)$ is a simple rational function of $i$:
$$
R(i) = \frac{(m_1-i)(k-i)}{(i+1)(m_2 - k + i + 1)}
$$
The prefix sum can be written as $PS(x) = H(0) \cdot (1 + R(0) + R(0)R(1) + \dots)$. Binary Splitting recursively computes this expression in a balanced tree structure. Crucially, it leverages fast integer multiplication algorithms (e.g., Schönhage–Strassen), which multiply $L$-bit numbers in time $\bigOtilde(L)$. The total time complexity for calculating $PS(x)$ using Binary Splitting is thus $\bigOtilde(L \log k) = \bigOtilde(k \log m)$.

The divide and conquer algorithm can be implemented as a reversible quantum circuit $U_\text{Unrank}$.

\begin{enumerate}
    \item \textbf{Reversible Fast Arithmetic:} We require efficient reversible quantum circuits for fast integer multiplication and division that maintain the $\bigOtilde(L)$ complexity for arithmetic over $L$ bit integers. 
    \item \textbf{Reversible Binary Splitting ($U_{BS}$):} The Binary Splitting procedure is implemented reversibly using the fast arithmetic circuits. The gate complexity of $U_{BS}$ is $\bigOtilde(k \log m)$.
    \item \textbf{Reversible Search:} The binary search for $k_1$ is implemented using the reversible Bit-wise Quantum Search strategy. This requires $O(\log k)$ coherent calls to $U_{BS}$.
\end{enumerate}

The work done at one level of the recursion, $W(m, k)$, is dominated by the reversible search:
$$
W(m, k) = O(\log k) \cdot \text{Cost}(U_{BS}) = \bigOtilde(k \log m) = \bigOtilde(m)
$$
% \begin{theorem}
% The Sparse Dicke State $\ket{SD^m_k}$ can be prepared deterministically using $O(k \log m)$ qubits and a quantum circuit containing $\bigOtilde(m)$ gates, utilizing reversible Divide and Conquer unranking accelerated by Binary Splitting.
% \end{theorem}

\subsection{Iterative Unranking for low constant factor Dicke State Preparation}
The divide and conquer strategy described has optimal asymptotic complexity but would not be ideal for a constant factor analysis. Here, we describe a simple greedy algorithm to find the $k-$combination corresponding to rank $r$ with small constant factors: take $c_k$ maximal with $\binom{c_k}{k} \leq r$, then take $c_{k - 1}$ maximal with $\binom{c_{k-1}}{k-1} \leq r -  \binom{c_k}{k}$, and so on. We can translate this greedy algorithm into a reversible quantum circuit to prepare Sparse Dicke states using $2\cdot k\cdot b + b$ qubits and $\mathcal{O}(m.k + k^2b^2)$ Toffoli gates as follows:

\begin{enumerate}
    \item The unitary $U_\text{Unrank}(n, k)$ can be defined as a product of $k$ unitaries, each of which iteratively finds the $j$th coefficient $c_j$ in the sequence $C(r)$ as follows:
    $$ 
        U_\text{Unrank}(m, k) = U_{\text{search}}(m, 1) \cdot U_{\text{search}}(m, 2) \dots U_{\text{search}}(m, k - 1) \cdot U_{\text{search}}(m, k)
    $$
    Here $U_{\text{search}}(m, j)$ can be defined as:
    $$
        U_{\text{search}}(m, j) \ket{r}_{j\log_2{m}}\ket{0}_{\log_2{m}} \rightarrow \ket{r - \binom{c_j}{j}}_{(j - 1)\log_2{m}}\ket{c_j}_{\log_2{m}}
    $$

    \item The unitary $U_\text{search}(m, j)$ can be implemented using a bitwise binary search strategy, where we determine the largest $c_{j}$ such that $\binom{c_j}{j} \leq r$, iterating over $\log_{2}{m}$ bits of $c_{j}$ from most significant bit to least significant bit, and for each $\log_{2}{m} \geq i \geq 0$, set $c_j = c_j + 2^i $ if $\binom{c_j + 2^i}{j} \leq r$.
    A reversible quantum circuit would therefore require $\log_{2}(m)$ calls to a sparse QROM, the $i$th of which loads $2^i$ binomial coefficients of the form $\binom{c_j + 2^i}{j}$. The total Toffoli cost of the sparse QROMs across all iterations is therefore $k.m$. Once the Binomial coefficients are loaded using the QROM, we also need a $k.\log_2{m}$ comparator for each of the $\log_2{m}$ bits of $c_{j}$. The total Toffoli cost of the comparators across all iterations is therefore $\mathcal{O}((k\log_2{m})^2)$.
\end{enumerate}

\section{Improved arithmetic circuits for binary extension fields}\label{sec:improved_gf2_arithmetic}

\begin{table}[H]
    \centering
    \begin{tabular}{|c||c|c||c|c|}
    \hline
    \hline
    $m$ & Toffoli & CNOT  & PCTOF & CNOT\\
   \hline
   \hline
   10 & 39 & 738 & 39 & 0 \\
   \hline
   11 & 47 & 1278 & 46 & 0\\
   \hline
   12 & 51 & 1506 & 51 & 0\\
   \hline
   \hline
\end{tabular}
\caption{Resource counts for GF$(2^m)$ multiplication for $m=10, 11, 12$, expressed as gate counts over two libraries, \{Toffoli, CNOT\} and \{Parity Control Toffoli, CNOT\}.}
\label{tbl:gf101112mult}
\end{table}

\subsection{Parity Control Toffoli $PCTOF$ and Parity CNOT $PCNOT$}

The parity control Toffoli PCTOF is equivalent to a multi-target Toffoli gate with controls being the parity of a subset of qubits. The PCNOT gate is a CNOT gate that computes the XOR of its controls and is equivalent to a series of CNOTs with the same target. The main advantages of the PCNOT gate is that in the presence of enough ancillae, it can be performed in a single cycle using lattice surgery independent of the number of controls. 

When the two control sets of a PCTOF are disjoint, as is the case of GF$(2^m)$ multiplication, the two control parities can be computed in a single cycle. \cref{fig:pctof_pcnot} shows the first PCTOF of \cref{fig:mbuc_gf3} expanded in terms of PCNOT and Toffoli and in terms of CNOT and Toffoli gates. 

\begin{figure}
    \centering
    \includegraphics[width=0.5\linewidth]{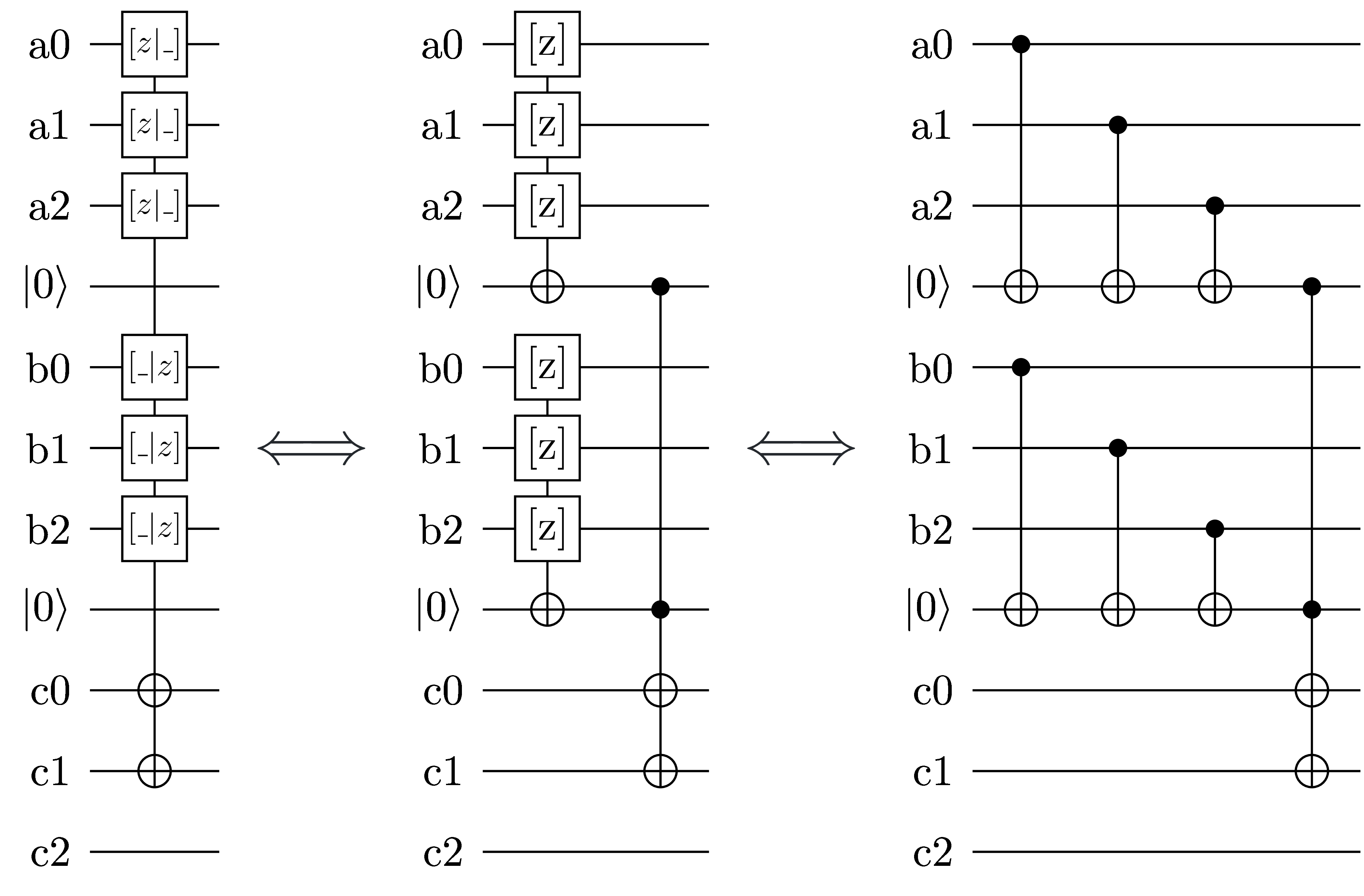}
    \caption{PCTOF in terms of a Toffoli and PCNOTs and then expanded in terms of a Toffoli and CNOTs. Note that we omitted the ancilla cleanup part.}
    \label{fig:pctof_pcnot}
\end{figure}

\subsection{\texorpdfstring{GF($2^m$)}{GF(2^m)} arithmetic circuits}
Our implementation of the DQI algorithm relies on arithmetic circuits over GF field sizes of $2^{10}$ through $2^{12}$.  The bottleneck is GF multiplication. 

To implement the respective multiplication (and division) circuits, we rely on \cite{maslov2025asymptotic}. 
This synthesis algorithm develops a modification of Karatsuba multiplication over a carefully selected irreducible polynomial, along with local optimizations using templates, allowing for a reduction in both Toffoli and CNOT gate counts compared to the state-of-the-art straight Karatsuba multiplication \cite{vanhoof2020spaceefficientquantummultiplicationpolynomials}. 
We tune the implementation \cite{maslov2025asymptotic} to focus on optimizing an $\infty$-to-1 weighted sum of Toffoli and CNOT gate counts to utilize the PCTOF gates better. 
To get constructions made of the PCTOF and CNOT gates, we start from the Toffoli and CNOT circuits from \cite{maslov2025asymptotic} and commute the CNOT gates to the left while updating the sets of the PCTOF gates, resulting in circuits that start with three disjoint CNOT stages acting on the two input registers and the resulting register followed by a circuit made of the PCTOF gates.  
The first two CNOT circuits are equivalent to the identity since we do not modify the input registers and the CNOT circuit acting on the target can be ignored since the target is always initialized in the zero state.  The resulting optimized circuits are summarized in \cref{tbl:gf101112mult}.  
\cref{fig:mbuc_gf3} shows the $GF(2^3)$ multiplication circuit with irreducible polynomial $x^3+x+1$ in terms of PCTOF gates and explicit circuit files are available as a part of the Zenodo upload of this paper. 

In addition, we developed a circuit optimization method applicable to the kinds of circuits with PCTOF gates considered.  The optimization algorithm relies on two observations: first, any two PCTOF gates commute, and second, each PCTOF gate computes a Boolean polynomial of degree 2 and then EXORs it onto a separate output register.  This means that a single PCTOF gate can be thought of as a vector in the $m^2$-dimensional Boolean space, where each coordinate corresponds to a Boolean product of some two variables, and circuits with $k$ PCTOF gates (with targets on the bottom register such as in our case of GF multiplication circuits) can be thought of as $k \,{\times}\, m^2$ matrices.  This means that the matrix rank determines the minimal number of PCTOF gates necessary, in a given set that implements the desired functionality.  We implemented this optimization algorithm and found a small optimization compared to what is directly offered by \cite{maslov2025asymptotic} (compare Toffoli count to the PCTOF count in \cref{tbl:gf101112mult}).  We suspect that the improvement is small due to the large number of optimizations that already went into \cite{maslov2025asymptotic}.

\begin{figure}
    \centering
    \includegraphics[width=0.5\linewidth]{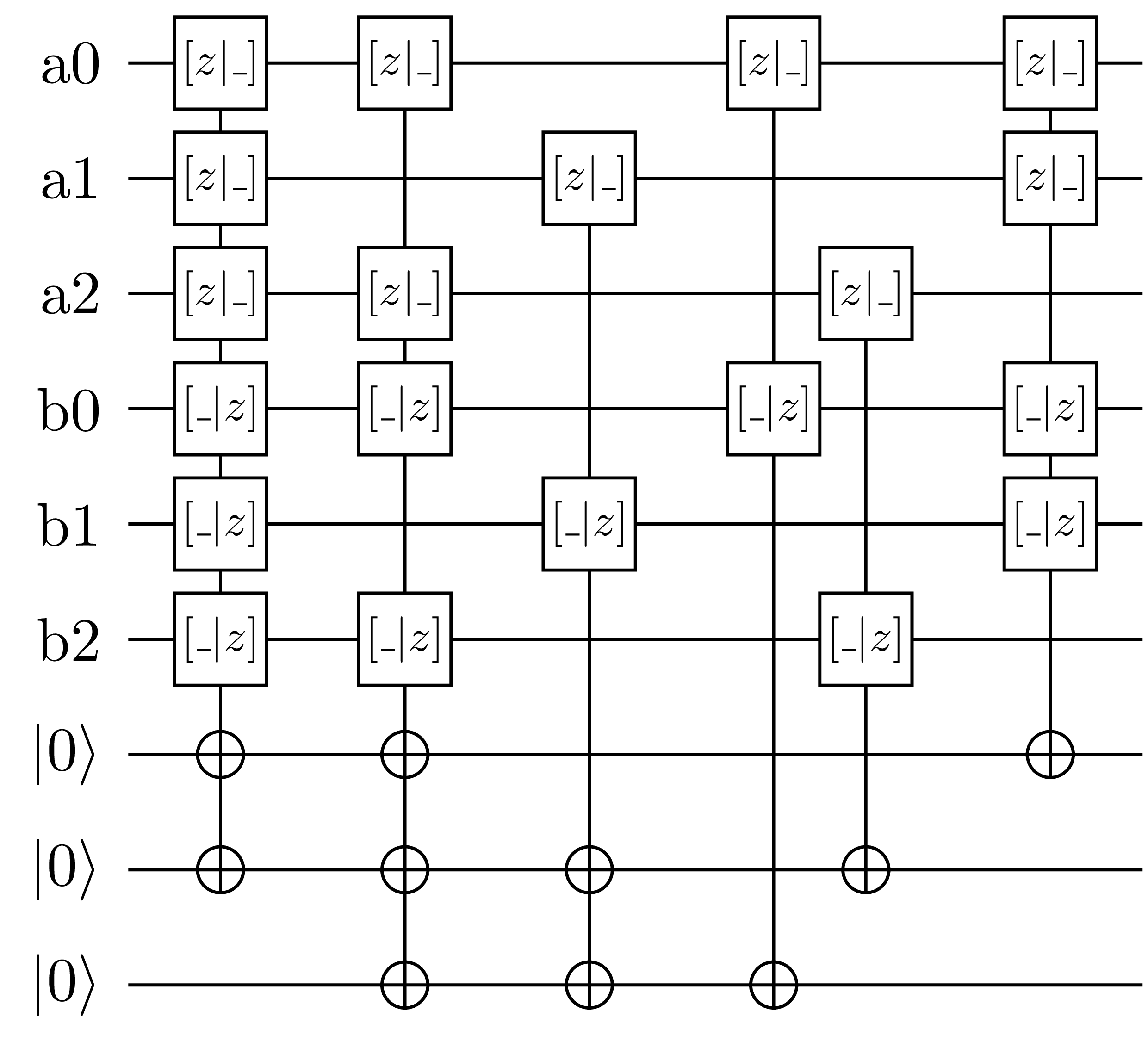}
    \caption{$GF(2^3)$ multiplication with irreducible polynomial $ x^3+x+1$ in terms of PCTOF gates.}
    \label{fig:mbuc_gf3}
\end{figure}

\subsection{Measurement-based Uncomputation for GF2 Multiplication.}
The GF$(2^m)$ multiplication operation is an out-of-place operation that applies the transformation $\ket{x}\ket{y}\ket{0} \rightarrow \ket{x} \ket{y} \ket{xy}$. The uncomputation of this operation can be done using only measurements and $CZ$ gates.

The uncomputation starts by measuring the target register in the $X$ basis. This is equivalent to applying the $H^{\otimes m}$ to the target register and then measuring in the computational basis. The $H^{\otimes m}$ operations transform the state to $\ket{x} \ket{y} \ket{xy} = \sum a_{i,j} \ket{i} \ket{j} \ket{ij} \xrightarrow[]{H^{\otimes m}} \frac{1}{\sqrt{2^m}}\sum_{c=0}^{2^m-1} \big \{ \sum (-1)^{h(c, ij)} \ket{i}\ket{j} \big \} \ket{c}$ where $h(a, b)$ is the Hamming weight of $a \oplus b$. After the measurement, we end up with a classical bitstring $c$ and phase flips on some of the coefficients of the input superposition $\sum (-1)^{h(c, ij)} a_{i,j} \ket{i} \ket{j}$.

While it seems that we need to know the product to correct the phase, the linearity of $h$ and symmetry of $CZ$ allow us to correct the phase with only $CZ$s. For example if $c_t{=}1$ then we need to apply $CZ(x_i, y_j)$ for all $(i, j)$ pairs such that $x^{i+j} \mod p(x)$ has $x^t$ where $p(x)$ is the irreducible polynomial. This leads to the requirement to implement a certain $CZ$ gate circuit.  Note that any such circuit can be implemented in depth $\lfloor m/2 + 0.4993{\cdot}\log^2(m) + 3.0191{\cdot}\log(m) - 10.9139\rfloor$, using no more than $m^2/4 + O(m\log^2(m))$ Clifford gates \cite{maslov2022depth} or $O(m^2/\log(m))$ gates asymptotically.
% an $\mathcal{O}(m^3)$ $CZ$s construction that can be optimized to at most $m^2$ $CZ$s by adding $CZ(x_i, y_j)$ iff $\bigoplus c_k = 1$ where $k$ are the degrees that appear in $x^{i+j} \mod p(x)$.

Our implementation of measurement-based uncomputation of GF2 multiplication is available in Qualtran \cite{harrigan2024expressinganalyzingquantumalgorithms} as \verb|GF2MulMBUC| and \cref{fig:mbuc} shows the uncomputation of GF2 multiplication for $m{=}3$ and irreducible polynomial $x^3 + x + 1$.

\begin{figure}
    \centering
    \includegraphics[width=0.5\linewidth]{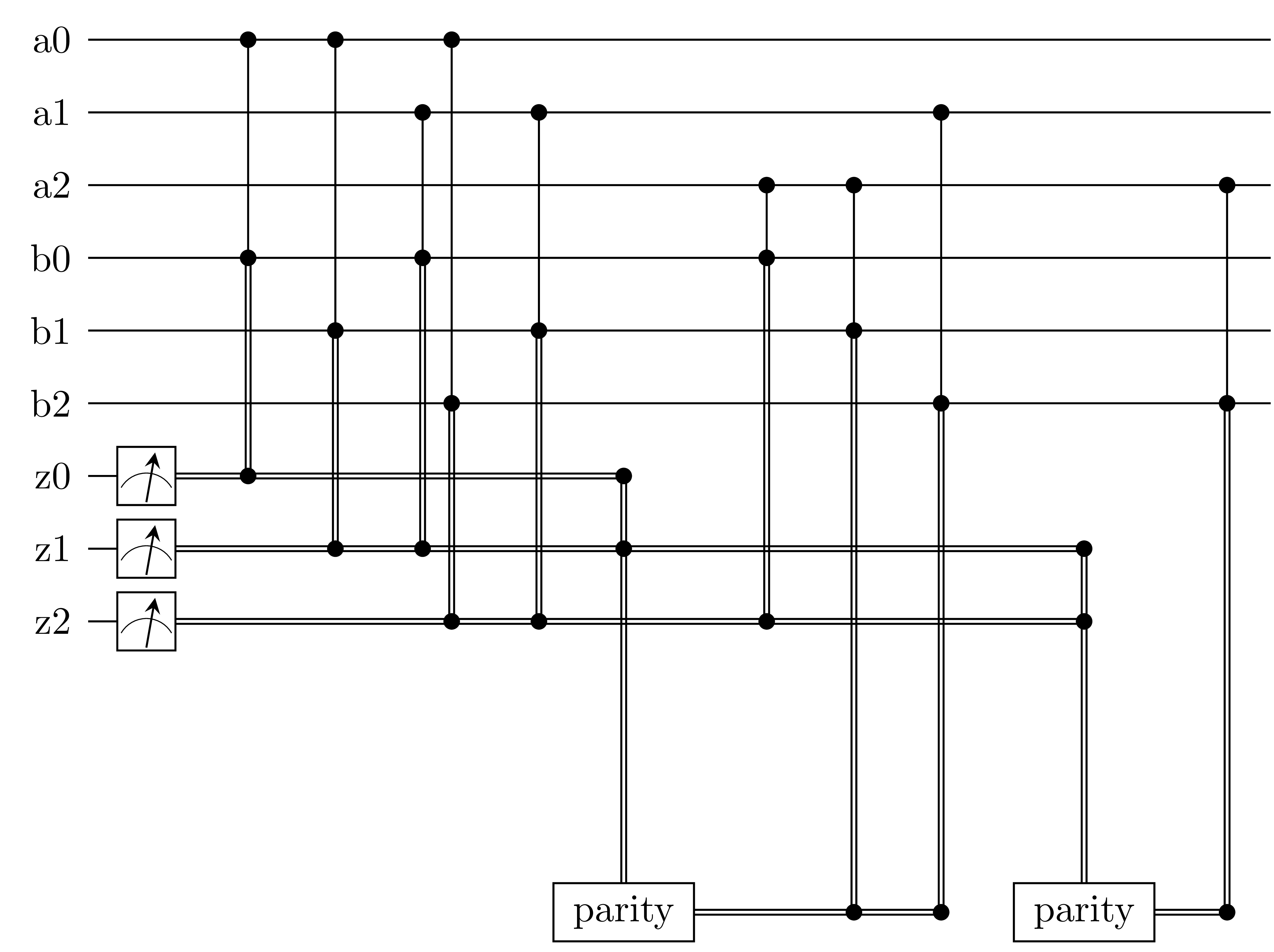}
    \caption{Construction of measurement based uncomputation of GF2 multiplication for irreducible polynomial $x^3 + x + 1$. Note that the measurement is in the X basis.}
    \label{fig:mbuc}
\end{figure}

\section{Bounding the number of iterations for Synchronized Reversible Polynomial EEA}\label{sec:improved_zalka_eea_iterations}
We analyze the worst-case number of iterations required for the synchronized, reversible implementation of the Polynomial Extended Euclidean Algorithm (PEEA) presented in \cref{sec:zalka_eea_optimized}. This implementation utilizes shared registers and a 4-stage synchronized architecture designed to allow computation branches in a superposition to proceed efficiently at their own pace.

Let the input polynomials be $A$ and $B$ over a finite field $\F_q$, with $n = \deg(A)$ and $\deg(B) < n$.
\begin{definition}[Iteration Dynamics]
For iteration $i$ (which computes quotient $Q_i$, remainder $R_i$, and cofactor $U_i$):
\begin{itemize}
    \item $d_i = \deg(Q_i)$. We define $d_0=0$. Since $\deg(B) < \deg(A)$, $d_i \ge 1$ for $i \ge 1$.
    \item $s_i = \deg(R_{i-1}) - \deg(R_i)$: The degree reduction of the remainder. $s_i \ge 1$.
\end{itemize}
Let $k$ be the total number of iterations. Let $D_k = \sum_{i=1}^k d_i$ and $S_k = \sum_{i=1}^k s_i$.
\end{definition}
\begin{lemma}[PEEA Constraints]\label{lem:peea_constraints_opt}
The variables are constrained by $k \le D_k$ and $k \le S_k$. For Full PEEA (coprime inputs, unequal degrees), $D_k = n$ and $S_k = n$.
\end{lemma}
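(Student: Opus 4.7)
For the inequalities $k \le D_k$ and $k \le S_k$, the plan is immediate. The definitions assert $d_i \ge 1$ and $s_i \ge 1$ for $1 \le i \le k$, and both bounds follow because the polynomial long-division identity $R_{i-2} = Q_i R_{i-1} + R_i$ forces a strict degree drop: $\deg Q_i = \deg R_{i-2} - \deg R_{i-1} \ge 1$ and $\deg R_i < \deg R_{i-1}$. Summing $k$ terms each at least $1$ gives $D_k, S_k \ge k$.

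For the equalities $D_k = S_k = n$ in the Full PEEA setting (coprime inputs of unequal degree), the plan is to use telescoping. Writing $d_i = \deg(R_{i-2}) - \deg(R_{i-1})$ and $s_i = \deg(R_{i-1}) - \deg(R_i)$, the two partial sums collapse:
\begin{equation*}
D_k = \deg(R_{-1}) - \deg(R_{k-1}), \qquad S_k = \deg(R_0) - \deg(R_k).
\end{equation*}
With $R_{-1} = A$ of degree $n$, $R_0 = B$ of degree $< n$, and coprime inputs so that the EEA terminates with the last nonzero remainder being a unit of degree $0$, the first telescoping identity drops out as $D_k = n - 0 = n$ immediately. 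A symmetric telescoping gives $S_k = n$ once the final-iteration bookkeeping is fixed.

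The main hurdle is precisely this boundary bookkeeping: $\deg(R_k)$ is not literally well defined when $R_k$ becomes the zero polynomial, so $S_k = n$ is an identity only under the specific convention implicitly enforced by the synchronized-cycle architecture in \cref{fig:zalka_eea_improved_flow}. I would pin this convention down by tracing one full logical iteration through the figure and confirming that the 4-stage cycle (Division, Normalization, B\'ezout Update, Swap) charges the terminal step the remaining degree budget, so that the total reduction recorded on the remainder register is $\deg R_0 - \deg R_k = n$. Once that convention is matched, both equalities collapse to the telescoping calculation above and the lemma follows; the inequalities $k \le D_k$ and $k \le S_k$ then immediately yield the desired $k \le n$ bound on the number of logical PEEA iterations, which in turn feeds into the overall $6n$ synchronized-cycle bound claimed in \cref{sec:zalka_eea_optimized}.
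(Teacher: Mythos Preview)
The paper states this lemma without proof, so there is no argument to compare against. Your approach---summing the per-step bounds $d_i,s_i\ge 1$ to get $k\le D_k$ and $k\le S_k$, and telescoping $d_i=\deg(R_{i-2})-\deg(R_{i-1})$ to get $D_k=\deg(R_{-1})-\deg(R_{k-1})=n-0=n$---is the standard argument and is correct.

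Your hesitation about $S_k=n$ is well-placed and is a genuine gap in the paper's statement. The telescoping gives $S_k=\deg(R_0)-\deg(R_k)$, and with $R_k=0$ this is only $n$ under a specific convention (e.g.\ $\deg(0)=-1$) \emph{and} the assumption $\deg(B)=n-1$. The lemma as written says only ``unequal degrees'', under which $S_k=n$ need not hold literally. In fact the paper only invokes $S_k=n$ in the proof of \cref{thm:full_peea_bound_opt}, where it explicitly restricts to the worst case $\deg(B)=n-1$; for other inputs one has $S_k\le n$, which is all that is needed for an upper bound on $T$. So your plan to pin down the convention via the normalization step of \cref{fig:zalka_eea_improved_flow} is reasonable, but the cleaner fix is simply to note that the lemma's equality $S_k=n$ should really be the inequality $S_k\le n$ (or should carry the extra hypothesis $\deg(B)=n-1$), and that this is harmless for the downstream use.
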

We analyze the time complexity $T_i$ of iteration $i$ based on the 5 steps described in \cref{fig:optimized_zalka_eea}.

\begin{lemma}[Cost Per Iteration]\label{lem:cost_per_iteration_opt}
The number of cycles $T_i$ required for iteration $i$ in the optimized implementation is $T_i = 2d_i + s_i + d_{i-1} + 2$.
\begin{proof}
We analyze the time required for each stage in the 4-stage cycle, corresponding to Steps 1-4 in Figure 1:
\begin{enumerate}
    \item \textbf{Stage 1} (Step-1, Poly Long Division): Time taken is the length of $Q_i$ (number of terms), which is $d_i+1$.
    \item \textbf{Stage 2} (Step-2, Remainder Normalization): Time taken is the total degree reduction $s_i$.
    \item \textbf{Stage 3} (Step-3, B\'ezout Cofactor Alignment): Time taken is the degree growth from the prior iteration, $d_{i-1}$.
    \item \textbf{Stage 4} (Step-4, B\'ezout Cofactor Update): Time taken is the length of $Q_i$, which is $d_i+1$. The register swap (Step-5) occurs concurrently during the final cycle of this stage, incurring no extra time.
\end{enumerate}

Summing the costs for iteration $i$:
\begin{align*}
T_i &= (d_i+1) + s_i + d_{i-1} + (d_i+1) \\
    &= 2d_i + s_i + d_{i-1} + 2.
\end{align*}
\end{proof}
\end{lemma}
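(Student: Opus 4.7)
The plan is to account for the worst-case cycle count of each of the four synchronized stages described in \cref{fig:zalka_eea_improved_flow} and then sum them. Each stage performs a simple operation (shift, partial multiplication, or coefficient extraction) one coefficient at a time, so its cycle count is controlled by the degree of the polynomial whose coefficients it scans. Before summing, I would record the invariants used by the shared-register layout: after iteration $i-1$, the register holding the pair $(u_{i-1}, r_{i-1})$ has total capacity $n$ with $\deg u_{i-1} + \deg r_{i-1} \le n$, and the Bézout cofactor's degree growth from one iteration to the next equals the degree of the quotient produced in that iteration.

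With these invariants in hand, I would bound each stage in order. Stage 1 (polynomial long division of $r_{i-1}$ by $r_i$) emits the quotient $Q_i$ one coefficient at a time from the leading term down, requiring $\deg Q_i + 1 = d_i + 1$ cycles. Stage 2 (remainder normalization) right-shifts the register until the leading coefficient of $R_i$ sits in its correct slot, taking exactly $\deg R_{i-1} - \deg R_i = s_i$ cycles. Stage 3 (cofactor alignment) right-shifts $u_{i-1}$ until its degree matches $u_i$, costing $\deg u_i - \deg u_{i-1} = d_{i-1}$ cycles by the Bézout growth invariant. Stage 4 (cofactor update) folds $u_{i+1} = u_{i-1} - Q_i u_i$ by scanning the coefficients of $Q_i$, giving another $d_i + 1$ cycles; the register swap of Step 5 is piggy-backed onto the final cycle of Stage 4 via in-place CSWAPs, contributing nothing extra. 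Summing the four contributions:
\begin{equation*}
T_i = (d_i + 1) + s_i + d_{i-1} + (d_i + 1) = 2d_i + s_i + d_{i-1} + 2.
\end{equation*}

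The delicate part is Stage 3. I need to justify that at the start of the cycle the degree mismatch between $u_{i-1}$ and $u_i$ is exactly $d_{i-1}$, rather than $d_i$ or some expression depending on both. This follows once we unroll the Bézout recursion $u_i = u_{i-2} - Q_{i-1} u_{i-1}$ and observe that $\deg u_i - \deg u_{i-1} = \deg Q_{i-1} = d_{i-1}$ whenever the Full-PEEA condition $\deg u_{i-2} < \deg(Q_{i-1} u_{i-1})$ holds, as guaranteed by \cref{lem:peea_constraints_opt}. Tying this algebraic invariant to the physical layout of the shared register --- so that the number of shift cycles really equals the degree mismatch and no extra padding bookkeeping is needed --- is the only subtle piece of the argument; the remaining accounting is a direct cycle-by-cycle count of the operations drawn in each step of \cref{fig:zalka_eea_improved_flow}.
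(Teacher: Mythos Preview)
Your proposal is correct and follows essentially the same approach as the paper: both decompose iteration $i$ into the four synchronized stages of \cref{fig:zalka_eea_improved_flow}, assign the cycle counts $d_i+1$, $s_i$, $d_{i-1}$, $d_i+1$ respectively (with the Step-5 swap absorbed into Stage~4), and sum. Your added elaboration justifying the Stage~3 count via the Bézout degree-growth invariant $\deg u_i - \deg u_{i-1} = d_{i-1}$ is a welcome detail that the paper leaves implicit.
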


\begin{lemma}[Total Cost Function]\label{lem:total_cost_function_opt}
The total number of cycles $T$ required to complete $k$ iterations is $T = 3D_k + S_k - d_k + 2k$.
\begin{proof}
The total time is the sum of $T_i$ from $i=1$ to $k$.
\begin{align*}
T &= \sum_{i=1}^{k} T_i = \sum_{i=1}^{k} (2d_i + s_i + d_{i-1} + 2) \\
  &= 2\sum_{i=1}^{k} d_i + \sum_{i=1}^{k} s_i + \sum_{i=1}^{k} d_{i-1} + 2k.
\end{align*}
We evaluate the third term: $\sum_{i=1}^{k} d_{i-1} = d_0 + d_1 + \dots + d_{k-1}$. Since $d_0=0$, this equals $D_k - d_k$.
\[ T = 2D_k + S_k + (D_k - d_k) + 2k = 3D_k + S_k - d_k + 2k. \]
\end{proof}
\end{lemma}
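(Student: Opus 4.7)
The plan is to compute $T$ by directly summing the per-iteration costs provided by Lemma~\ref{lem:cost_per_iteration_opt}. I would start with $T = \sum_{i=1}^{k} T_i$ and substitute $T_i = 2d_i + s_i + d_{i-1} + 2$, then split into four sums via linearity, corresponding to the $2d_i$, $s_i$, $d_{i-1}$, and constant contributions from each iteration.

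Three of the four sums are immediate from the definitions of $D_k$ and $S_k$: $\sum_{i=1}^{k} d_i = D_k$, $\sum_{i=1}^{k} s_i = S_k$, and $\sum_{i=1}^{k} 2 = 2k$. The only mildly delicate piece is the shifted sum $\sum_{i=1}^{k} d_{i-1}$. I would re-index with $j = i-1$ so that it becomes $\sum_{j=0}^{k-1} d_j$, which is the partial sum $d_0 + d_1 + \dots + d_{k-1}$. Invoking the convention $d_0 = 0$ from the iteration dynamics definition, this equals $D_k - d_k$ (the full sum minus the last term). Combining gives $T = 2D_k + S_k + (D_k - d_k) + 2k = 3D_k + S_k - d_k + 2k$, as required.

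There is no genuine obstacle: the proof is pure bookkeeping on top of the per-iteration cost formula. The only subtlety is keeping the shifted-index sum aligned with the boundary convention $d_0 = 0$, so that the $d_{i-1}$ term contributes $D_k - d_k$ rather than $D_k$ or $D_{k-1}$. All of the conceptual work lives upstream in Lemma~\ref{lem:cost_per_iteration_opt}, where the four synchronized stages (polynomial long division, remainder normalization, cofactor alignment, cofactor update) are assigned their individual cycle counts $d_i + 1$, $s_i$, $d_{i-1}$, and $d_i + 1$; once that per-iteration accounting is in hand, the total-cost formula falls out mechanically.
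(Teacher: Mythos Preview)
Your proposal is correct and follows essentially the same approach as the paper: sum the per-iteration costs from Lemma~\ref{lem:cost_per_iteration_opt}, split by linearity, and evaluate the shifted sum $\sum_{i=1}^{k} d_{i-1} = D_k - d_k$ using the convention $d_0 = 0$. The paper's proof is identical in structure and detail.
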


\subsection{Worst-Case Analysis: Full PEEA (GCD)}

We analyze the maximum cycles required for the Full PEEA, assuming worst-case coprime inputs with degrees $(n, n-1)$.

\begin{theorem}[Cycle Bound for Full PEEA]\label{thm:full_peea_bound_opt}
The maximum number of cycles required for the optimized synchronized Full PEEA on inputs of unequal degrees (max degree $n$) is $T_{max} = 6n - 1$.
\begin{proof}
We maximize the cost function $T = 3D_k + S_k - d_k + 2k$.
We use the constraints for Full PEEA (Lemma \ref{lem:peea_constraints_opt}): $D_k = n$ and $S_k = n$.

Substituting these values:
\[ T = 3n + n - d_k + 2k = 4n - d_k + 2k. \]
To maximize $T$, we maximize the number of iterations $k$ and minimize the final quotient degree $d_k$.
The constraints are $k \le n$ and $d_k \ge 1$.
We set $k_{max}=n$ and $d_{k,min}=1$.
\[ T_{max} = 4n - 1 + 2n = 6n - 1. \]
This worst case is achieved by the generalized Fibonacci sequence where $d_i=1$ and $s_i=1$ for all $i=1 \dots n$.
\end{proof}
\end{theorem}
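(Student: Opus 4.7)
The plan is to treat this as a simple constrained optimization problem built on top of the cost function already established in \cref{lem:total_cost_function_opt} and the global constraints in \cref{lem:peea_constraints_opt}. First I would substitute the Full PEEA constraints $D_k = n$ and $S_k = n$ into the expression $T = 3D_k + S_k - d_k + 2k$, reducing the objective to $T(k, d_k) = 4n - d_k + 2k$. Since $n$ is a fixed input parameter, the entire optimization collapses to maximizing $2k - d_k$ subject to the iteration and degree constraints.

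Next I would read off the admissible ranges for $k$ and $d_k$ from the underlying combinatorics. The inequality $k \le D_k = n$ from \cref{lem:peea_constraints_opt} caps the number of iterations at $n$, while the observation that every actual EEA quotient has degree at least $1$ (since $\deg(R_{i-1}) > \deg(R_i)$ forces a nontrivial quotient) gives $d_k \ge 1$. These are independent one-sided constraints on the two variables appearing in the objective, so the maximum is obtained by saturating both simultaneously: set $k = n$ and $d_k = 1$, yielding $T_{\max} = 4n - 1 + 2n = 6n - 1$.

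The remaining step is to exhibit an input achieving the bound, which is the only place where something could go wrong: a priori it is not obvious that $d_i = 1$ for every $i$ and $k = n$ are jointly compatible with $D_k = n$ and $S_k = n$. I would argue this by producing a generalized Fibonacci-like polynomial pair over $\F_q$, where at every step the quotient has degree exactly $1$ and the remainder degree drops by exactly $1$; then $D_n = \sum d_i = n$, $S_n = \sum s_i = n$, and $d_n = 1$ are all simultaneously realized, so the upper bound is tight.

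The main obstacle, modest as it is, lies in this final existence argument: one must verify that the recursion $R_{i-2} = z R_{i-1} + R_i$ with $\deg(R_i) = \deg(R_{i-1}) - 1$ can indeed be initialized so that the sequence runs for exactly $n$ iterations with coprime terminators, the polynomial analogue of the classical worst case for integer Euclid on consecutive Fibonacci numbers. Once that template is written down over $\F_q$, the whole theorem follows without further computation.
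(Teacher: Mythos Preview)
Your proposal is correct and follows essentially the same approach as the paper: substitute the Full PEEA constraints $D_k = S_k = n$ into the cost function $T = 3D_k + S_k - d_k + 2k$, then maximize $2k - d_k$ by pushing $k$ to $n$ and $d_k$ to $1$, and witness tightness via the degree-one-quotient ``Fibonacci'' sequence. You are slightly more explicit than the paper about why the witnessing sequence actually exists, but there is no substantive difference in method.
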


\subsection{Worst-Case Analysis: Half PEEA (Reed-Solomon Decoding)}

The Half-PEEA stops when $D_k > n/2$.

\begin{theorem}[Cycle Bound for Half PEEA]\label{thm:half_peea_bound_opt}
The maximum number of cycles required for the optimized synchronized Half-PEEA is $T_{max} = 6\lfloor n/2 \rfloor + 5$.
\begin{proof}
We maximize the cost function $T = 3D_k + S_k - d_k + 2k$.
The maximum degree reduction required is $D_{k,max} = \lfloor n/2 \rfloor + 1$.

To maximize $T$, we seek the slowest convergence scenario. This occurs when $d_i=1$ and $s_i=1$ for all $i$. In this scenario:
\begin{itemize}
    \item $D_k = D_{k,max}$.
    \item $k = D_{k,max}$ (since $d_i=1$).
    \item $S_k = D_{k,max}$ (since $s_i=1$).
    \item $d_k = 1$.
\end{itemize}
Substituting these into the cost function:
\begin{align*}
T_{max} &= 3D_{k,max} + D_{k,max} - 1 + 2D_{k,max} \\
        &= 6D_{k,max} - 1 \\
        &= 6(\lfloor n/2 \rfloor + 1) - 1 \\
        &= 6\lfloor n/2 \rfloor + 5.
\end{align*}
\end{proof}
\end{theorem}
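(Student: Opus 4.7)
The plan is to apply \cref{lem:total_cost_function_opt} to reduce the theorem to an optimization of $T = 3D_k + S_k - d_k + 2k$ over admissible trajectories $(d_1, \ldots, d_k; s_1, \ldots, s_k)$, subject to the feasibility constraints imposed by Half-PEEA. The essential first step is to translate the termination rule ``stop at the first $k$ with $D_k > n/2$'' into the bound $D_{k-1} \le \lfloor n/2 \rfloor$, and therefore $D_k \ge \lfloor n/2 \rfloor + 1$ for any terminating trajectory. Combined with $d_i \ge 1$ and $s_i \ge 1$ from \cref{lem:peea_constraints_opt}, this yields the structural inequalities $k \le D_k$ and $k \le S_k$, which pin down the feasible region.

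Next, I would argue that the maximum is attained by the slowest-convergence trajectory $d_i = s_i = 1$ for every $i$, the polynomial analogue of the Fibonacci worst case. The heuristic is that within a fixed ``degree budget'' $D_k$ just past $n/2$, maximizing $k$ requires each $d_i$ to be as small as possible ($d_i = 1$), and the polynomial-structure identity $d_i = s_{i-1}$ for $i \ge 2$ (arising from $\deg R_{i-2} = \deg Q_i + \deg R_{i-1}$) then forces $s_i = 1$ for $i = 1, \ldots, k-1$ as well. Under this configuration $D_k = S_k = k = \lfloor n/2 \rfloor + 1$ and $d_k = 1$, and substituting into $T = 3D_k + S_k - d_k + 2k$ gives
\begin{align*}
T_{\max} = 3(\lfloor n/2 \rfloor + 1) + (\lfloor n/2 \rfloor + 1) - 1 + 2(\lfloor n/2 \rfloor + 1) = 6\lfloor n/2 \rfloor + 5.
\end{align*}

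The hard part will be rigorously excluding competing trajectories -- in particular, those that lump a large quotient degree into a single $d_i$ to push $D_k$ strictly past $\lfloor n/2 \rfloor + 1$, or that inflate $s_k$ (which is not pinned to a later $d_{k+1}$ by the chain identity, since the algorithm has terminated) to push $S_k$ upward. Because $D_k$ and $S_k$ enter $T$ with positive coefficients, such moves are a priori candidates for improving $T$; the resolution must show that each such inflation either triggers earlier termination (decreasing $k$ by enough to cancel the gain through the $+2k$ term) or increases $d_k$ enough to offset it through the $-d_k$ term, or else is precluded by the structural constraints of the RS decoding setup. Rewriting $T = 3D_{k-1} + S_{k-1} + 2k + 2d_k + s_k$ and combining $D_{k-1} \le \lfloor n/2 \rfloor$, $k \le D_{k-1} + 1$, with the chain identity $S_{k-1} = (k-2) + d_k$ on slowest-convergence trajectories isolates these competing contributions and lets one close the optimality argument.
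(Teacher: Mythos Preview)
Your proposal follows essentially the same approach as the paper: reduce via \cref{lem:total_cost_function_opt} to maximizing $T = 3D_k + S_k - d_k + 2k$, identify the slowest-convergence trajectory $d_i = s_i = 1$ as the worst case, and substitute $D_k = S_k = k = \lfloor n/2 \rfloor + 1$, $d_k = 1$ to obtain $6\lfloor n/2 \rfloor + 5$. The paper's own argument is in fact briefer than yours --- it simply asserts that the slowest-convergence scenario achieves the maximum without justifying optimality --- so the structural identity $d_i = s_{i-1}$ and the exclusion of competing trajectories that you sketch go beyond what the paper provides.
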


\section{Deferred proofs of upper bounds on affine intersections of Maiorana-McFarland target sets}
\label{sec:proofs_of_bounds_on_a_intersect_s}

Here, we state and prove a few lemmas that are omitted from \cref{sec:bounds_on_a_intersect_s}. Together the proofs here and in \cref{sec:bounds_on_a_intersect_s} establish the upper bounds on $|A\cap S_k|$ given in \eqref{eq:upper_bound_on_a_intersection_sk}.

\begin{lemma}[Upper bound on $|A\cap R_k|$ and $|A\cap S_k|$ when $\dim A < k$]
    \label{lm:case_1}
    \,\newline
    Let $k$ be a positive integer, $d$ a non-negative integer and $A$ a $d$-dimensional affine subspace of $\mathbb{F}_2^{2k}$. Then
    \begin{align}
        |A\cap R_k| \leq 2^d, \quad
        |A\cap S_k| \leq 2^d.
    \end{align}
\end{lemma}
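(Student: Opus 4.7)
The plan is to observe that this bound is essentially trivial: any $d$-dimensional affine subspace $A$ of $\mathbb{F}_2^{2k}$ has cardinality $|A| = 2^d$, and since both $A \cap R_k$ and $A \cap S_k$ are subsets of $A$, each has cardinality at most $2^d$. That is the entire argument — no structure of $R_k$ or $S_k$ is used.

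Note that the hypothesis $d < k$ is not actually required for the bound itself; it is included because once $d \geq k$, the trivial bound $2^d$ ceases to be tight, and sharper bounds of the form $2^{d-1} + 2^{k-2}$ (as in Lemma \ref{lm:case_2}) become available by exploiting the quadric structure of $S_k$. The main purpose of \cref{lm:case_1}, then, is to serve as the low-dimensional base case invoked in the inductive proofs elsewhere in \cref{sec:bounds_on_a_intersect_s} — in particular in Lemma \ref{lm:case_2} and in Corollaries \ref{cr:hard_rsf_2terms} and \ref{cr:hard_rsf_4terms}, where one must handle affine subspaces $A_\sigma'$ whose dimension may drop below $k-1$ after projection. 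There is no real obstacle: the only step is the one-line containment argument combined with $|A|=2^{\dim A}$.
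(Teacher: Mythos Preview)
Your proposal is correct and matches the paper's proof exactly: the paper simply writes $|A\cap R_k| \leq |A| = 2^d$ and $|A\cap S_k| \leq |A| = 2^d$. Your additional commentary about the role of the hypothesis $d<k$ and the lemma's use as a base case is accurate and helpful context, though not part of the proof itself.
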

\begin{proof}
    $|A\cap R_k| \leq |A| = 2^d$ and $|A\cap S_k| \leq |A| = 2^d$.
\end{proof}

\begin{lemma}[Size of $R_k$ and $S_k$]
    \label{lm:rk_and_sk}
    \begin{align}
        |R_k| = 2^{2k-1} + 2^{k-1}, \\
        |S_k| = 2^{2k-1} - 2^{k-1}.
    \end{align}
\end{lemma}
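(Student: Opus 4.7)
The plan is to count $|R_k|$ and $|S_k|$ directly by conditioning on the first coordinate $x \in \mathbb{F}_2^k$ of the pair $(x,y) \in \mathbb{F}_2^k \times \mathbb{F}_2^k$. The dot product $\langle x, \cdot \rangle : \mathbb{F}_2^k \to \mathbb{F}_2$ is a linear functional whose rank depends only on whether $x=0$, so the fibers of the map $(x,y) \mapsto \langle x, y \rangle$ split into two easily countable cases.

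First I would dispatch the degenerate case $x = 0$: for this single value of $x$, every $y \in \mathbb{F}_2^k$ satisfies $\langle 0, y \rangle = 0$, contributing $2^k$ elements to $R_k$ and none to $S_k$. Next, for each of the $2^k - 1$ nonzero values of $x$, the map $y \mapsto \langle x, y\rangle$ is a nonzero $\mathbb{F}_2$-linear functional, so its kernel is a hyperplane of size $2^{k-1}$ and its nontrivial coset also has size $2^{k-1}$. Each such $x$ therefore contributes $2^{k-1}$ to both $|R_k|$ and $|S_k|$.

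Summing the contributions gives
\begin{align*}
|R_k| &= 2^k + (2^k - 1)\cdot 2^{k-1} = 2^{2k-1} + 2^{k-1}, \\
|S_k| &= 0 + (2^k - 1)\cdot 2^{k-1} = 2^{2k-1} - 2^{k-1},
\end{align*}
which one may sanity-check against the identity $R_k \sqcup S_k = \mathbb{F}_2^{2k}$, since $|R_k| + |S_k| = 2^{2k}$.

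There is no real obstacle here; the lemma is a one-line consequence of the rank-nullity theorem for a single nonzero $\mathbb{F}_2$-linear functional. The only thing to be careful about is not to forget the $x=0$ case, which is the sole source of the $\pm 2^{k-1}$ imbalance between $|R_k|$ and $|S_k|$ and which explains the character-sum flavor (``$\sum_y (-1)^{\langle x,y\rangle}$ vanishes unless $x=0$'') behind the asymmetry that drives the more delicate bounds proved earlier.
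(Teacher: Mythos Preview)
Your proof is correct and follows essentially the same approach as the paper: both decompose $R_k$ as $\bigsqcup_{x\in\mathbb{F}_2^k}\{x\}\times\{x\}^\perp$, split into the cases $x=0$ and $x\ne 0$, and use that $|\{x\}^\perp|$ equals $2^k$ or $2^{k-1}$ respectively. The paper's write-up is a touch more terse, but the argument is identical.
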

\begin{proof}
We can write
\begin{align}
    R_k = \bigsqcup_{x\in\mathbb{F}_2^k} \{x\}\times\{x\}^\perp
\end{align}
where $\{x\}^\perp \subset\mathbb{F}_2^k$ is the set of $k$-bit strings orthogonal to $x\in\mathbb{F}_2^k$ under the standard dot product $\langle.,.\rangle$. But
\begin{align}\label{eq:size_of_a_perp}
    |\{x\}^\perp| = \begin{cases}
        2^k & \text{if} \quad x=0\in\mathbb{F}_2^k \\
        2^{k-1} & \text{otherwise}
    \end{cases}
\end{align}
so $|R_k| = 2^{2k-1} + 2^{k-1}$ and $|S_k| = 2^{2k-1} - 2^{k-1}$.
\end{proof}

\begin{corollary}[Upper bound on $|A\cap R_k|$ and $|A\cap S_k|$ when $\dim A=2k$]
    \label{cr:case_4}
    \,\newline
    Let $k$ be a positive integer and $A = \mathbb{F}_2^{2k}$ the $2k$-dimensional affine subspace of $\mathbb{F}_2^{2k}$. Then
    \begin{align}
        |A \cap R_k| = 2^{2k-1} + 2^{k-1}, \\
        |A \cap S_k| = 2^{2k-1} - 2^{k-1}.
    \end{align}
\end{corollary}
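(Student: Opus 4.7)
The plan is to observe that the corollary is essentially an immediate consequence of Lemma \ref{lm:rk_and_sk}, which was established just prior. Since $A = \mathbb{F}_2^{2k}$ is the entire ambient space, we have the trivial identities $A \cap R_k = R_k$ and $A \cap S_k = S_k$, so the claim reduces to evaluating $|R_k|$ and $|S_k|$ in absolute terms.

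From Lemma \ref{lm:rk_and_sk}, $|R_k| = 2^{2k-1} + 2^{k-1}$ and $|S_k| = 2^{2k-1} - 2^{k-1}$, which after the substitution above yields the two stated equalities. For transparency, I would briefly recall the counting argument underlying Lemma \ref{lm:rk_and_sk}: partition $R_k$ according to its left block, writing $R_k = \bigsqcup_{x \in \mathbb{F}_2^k} \{x\} \times \{x\}^\perp$, where $\{x\}^\perp \subset \mathbb{F}_2^k$ denotes the hyperplane (or full space, if $x=0$) of vectors orthogonal to $x$ under the standard dot product. Using $|\{x\}^\perp| = 2^k$ when $x=0$ and $2^{k-1}$ otherwise, summation yields $|R_k| = 2^k + (2^k - 1) \cdot 2^{k-1} = 2^{2k-1} + 2^{k-1}$, and then $|S_k| = 2^{2k} - |R_k| = 2^{2k-1} - 2^{k-1}$.

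There is essentially no obstacle: the only subtlety worth stressing is that the corollary provides exact equalities rather than inequalities, which is possible precisely because the ambient-space case forces $A \cap R_k$ and $A \cap S_k$ to coincide with $R_k$ and $S_k$ themselves. This tightness also explains why the $d = 2k$ row of \eqref{eq:upper_bound_on_a_intersection_sk} is stated as an equality rather than a bound, and why this corollary serves as the base case for $e = 2k - 2$ appealed to in the inductive proof of Lemma \ref{lm:case_2}.
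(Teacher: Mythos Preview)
Your proposal is correct and matches the paper's approach exactly: the paper's proof is the single line ``Immediate consequence of Lemma \ref{lm:rk_and_sk},'' and your argument that $A=\mathbb{F}_2^{2k}$ forces $A\cap R_k=R_k$ and $A\cap S_k=S_k$ is precisely that consequence spelled out. One small inaccuracy in your closing commentary: the $d=2k$ row of \eqref{eq:upper_bound_on_a_intersection_sk} is in fact stated as an inequality $\leq$, not an equality, though of course it is tight.
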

\begin{proof}
    Immediate consequence of Lemma \ref{lm:rk_and_sk}.
\end{proof}

\begin{lemma}[Upper bound on $|A\cap R_k|$ and $|A\cap S_k|$ when $\dim A=2k-1$]
    \label{lm:case_3}
    \,\newline
    Let $k$ be a positive integer and let $A \subset \mathbb{F}_2^{2k}$ be a $(2k-1)$-dimensional affine subspace of $\mathbb{F}_2^{2k}$. Then
    \begin{align}
        |A\cap R_k| & \leq 2^{2k-2} + 2^{k-1} \\
        |A\cap S_k| & \leq 2^{2k-2}.
    \end{align}
\end{lemma}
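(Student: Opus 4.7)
The plan is to prove the lemma by induction on $k$, applying the recursive structure formulas of Lemma \ref{lm:recursive_structure_formulas} to peel $\mathbb{F}_2^{2k}$ down to $\mathbb{F}_2^{2k-2}$. The base case $k=1$ is immediate by inspection: $|R_1|=3$ and $|S_1|=1$, while every $1$-dimensional affine subspace of $\mathbb{F}_2^2$ has size $2$, so $|A\cap R_1|\leq 2=2^0+2^0$ and $|A\cap S_1|\leq 1=2^0$.

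For the inductive step, I would fix $k>1$, assume the bounds at $k-1$, and case-split on $\dim F$ where $F=\pi_1(A)$. Because $\dim A_\sigma'=(2k-1)-\dim F$ and $A_\sigma'\subset\mathbb{F}_2^{2k-2}$, the case $\dim F=0$ forces $\dim A_\sigma'=2k-1>2k-2$ and so cannot occur. When $\dim F=1$, each $A_\sigma'=\mathbb{F}_2^{2k-2}$ and the bounds drop out of Lemma \ref{lm:rk_and_sk}: in Case 3 of Remark \ref{rm:five_cases}, $|A\cap R_k|=2|R_{k-1}|=2^{2k-2}+2^{k-1}$ and $|A\cap S_k|=2|S_{k-1}|=2^{2k-2}-2^{k-1}$; in Case 4, both quantities equal $|R_{k-1}|+|S_{k-1}|=2^{2k-2}$.

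The crux is Case 5, where $\dim F=2$ and each $A_\sigma'$ has dimension $2k-3$. The key observation will be that by Lemma \ref{lm:recursive_structure_formulas} all four $A_\sigma'$ are cosets of a common linear subspace $W'\subset\mathbb{F}_2^{2k-2}$ of codimension $1$; since $\mathbb{F}_2^{2k-2}/W'$ has only two elements $E_0,E_1$, the $A_\sigma'$ collapse to at most two distinct cosets. Substituting this collapse into the four-term recursive formula from Remark \ref{rm:five_cases} and using the identity $|E\cap R_{k-1}|+|E\cap S_{k-1}|=|E|=2^{2k-3}$ to absorb the $R_{k-1}$ term at $\sigma=11$, the formula for $|A\cap S_k|$ simplifies to the canonical form
\begin{equation*}
|A\cap S_k|=2^{2k-3}+2\,|E'\cap S_{k-1}|
\end{equation*}
for some $E'\in\{E_0,E_1\}$, which is a $(2(k-1)-1)$-dimensional affine subspace of $\mathbb{F}_2^{2(k-1)}$. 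The inductive hypothesis gives $|E'\cap S_{k-1}|\leq 2^{2k-4}$, so $|A\cap S_k|\leq 2^{2k-2}$; the analogous computation for $|A\cap R_k|$ uses $|E'\cap R_{k-1}|\leq 2^{2k-4}+2^{k-2}$ to yield $|A\cap R_k|\leq 2^{2k-2}+2^{k-1}$.

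The main obstacle will be checking the Case~5 collapse uniformly. The assignment $\sigma\mapsto A_\sigma'$ is parameterized by two bits $\beta,\gamma\in\mathbb{F}_2^{2k-2}/W'$ (the residues of $\pi_0(b),\pi_0(c)$ for the two basis vectors of $V$ with nonzero suffixes), and the three nontrivial choices $(\beta,\gamma)\in\{(1,0),(0,1),(1,1)\}$ each induce a different pairing of $\sigma$'s into the cosets $E_0,E_1$, with the degenerate $\beta=\gamma=0$ placing all four $A_\sigma'$ on a single coset $X$. I will need a short bookkeeping check, leveraging $R_{k-1}\sqcup S_{k-1}=\mathbb{F}_2^{2k-2}$, to confirm that all four subcases simplify to the same canonical form displayed above.
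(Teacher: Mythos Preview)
Your proposal is correct and follows essentially the same approach as the paper's proof: induction on $k$, ruling out $\dim F=0$ by dimension, handling $\dim F=1$ directly via Lemma~\ref{lm:rk_and_sk}, and for $\dim F=2$ exploiting that the four $A_\sigma'$ are cosets of a codimension-$1$ subspace $W'$, so at most two are distinct, after which the $R_{k-1}$ term telescopes against an $S_{k-1}$ term to produce the canonical form $2^{2k-3}+2|E'\cap S_{k-1}|$. The only cosmetic difference is that the paper observes directly that $A_{11}'$ is always paired with some $A_\rho'$ (since the map $\sigma\mapsto A_\sigma'$ is affine from $\mathbb{F}_2^2$ to $\mathbb{F}_2$), whereas you propose to verify the three nondegenerate $(\beta,\gamma)$ pairings and the degenerate case separately; both routes lead to the same one-line inductive bound.
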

\begin{proof}
If $k=1$, then $|A|=2$ and $|S_1|=1$, so $|A \cap R_1| \leq 2$ and $|A \cap S_1|\leq 1$. Assume $k>1$. The affine space $A$ contains $2^{2k-1}$ bit strings, but there are only $2^{2k-2}$ bit strings of each possible suffix. Therefore, $F$ has at least two elements. Thus, by Remark \ref{rm:five_cases} we have three cases to consider
\begin{align}
    \dim F = 1 &\quad\land\quad 11\notin F\\
    \dim F = 1 &\quad\land\quad 11\in F\\
    \dim F = 2 &.
\end{align}
We prove the first two cases directly and the third one by induction.

If $\dim F = 1$, then Lemma \ref{lm:recursive_structure_formulas} implies that $\dim W'=2k-2$ and therefore
\begin{align}
    A = \mathbb{F}_2^{2k-2} \otimes F.
\end{align}
If $11\notin F$, then
\begin{align}
    A \cap R_k &= R_{k-1} \otimes F \\
    A \cap S_k &= S_{k-1} \otimes F
\end{align}
and using Lemma \ref{lm:rk_and_sk}
\begin{align}
    |A \cap R_k| &= |F|\cdot|R_{k-1}| = 2\cdot(2^{2k-3} + 2^{k-2}) = 2^{2k-2} + 2^{k-1} \\
    |A \cap S_k| &= |F|\cdot|S_{k-1}| = 2\cdot(2^{2k-3} - 2^{k-2}) < 2^{2k-2}.
\end{align}
If $F=\{\sigma,11\}$ with $\sigma\ne 11$, then
\begin{align}
    A \cap R_k &= (R_{k-1} \otimes \sigma) \sqcup (S_{k-1} \otimes 11)\\
    A \cap S_k &= (S_{k-1} \otimes \sigma) \sqcup (R_{k-1} \otimes 11)
\end{align}
and using $R_{k-1} \sqcup S_{k-1} = \mathbb{F}_2^{2k-2}$, we find
\begin{align}
    |A \cap R_k| &= |\mathbb{F}_2^{2k-2}| < 2^{2k-2} + 2^{k-1} \\
    |A \cap S_k| &= |\mathbb{F}_2^{2k-2}| = 2^{2k-2}.
\end{align}
We now prove the case $\dim F = 2$ by induction. We established the base case at the opening of the proof. Assume now that
\begin{align}
    |A\cap R_{k-1}| & \leq 2^{2k-4} + 2^{k-2} \\
    |A\cap S_{k-1}| & \leq 2^{2k-4}.
\end{align}
By Lemma \ref{lm:recursive_structure_formulas}, the four affine subspaces $A_\sigma'$ of $\mathbb{F}_2^{2k-2}$ with $\sigma\in F$ in \eqref{eq:structure_of_a_cap_r} and \eqref{eq:structure_of_a_cap_s} arise as translations of the same linear subspace $W'$ of dimension $2k-3$, so at most two of them are distinct
\begin{align}
    A_1' := A_{11}'=A_\rho',\quad A_2':=A_\sigma'=A_\origtau'.
\end{align}
The equation $R_{k-1}\sqcup S_{k-1}=\mathbb{F}_2^{2k-2}$ implies then that
\begin{align}
    |A \cap R_k| &= |A_1'| + 2\cdot |A_2'\cap R_{k-1}| \\
    |A \cap S_k| &= |A_1'| + 2\cdot |A_2'\cap S_{k-1}|
\end{align}
and using our inductive hypothesis, we obtain
\begin{align}
    |A \cap R_k| & \leq 2^{2k-3} + 2\cdot(2^{2k-4} + 2^{k-2}) = 2^{2k-2} + 2^{k-1} \\
    |A \cap S_k| & \leq 2^{2k-3} + 2\cdot 2^{2k-4} = 2^{2k-2}
\end{align}
completing the proof of the Lemma.
\end{proof}

\begin{lemma}[Proxy expression for mixed recursive structure formulas with four terms]
    \label{lm:hard_rsf_4terms}
    \,\newline
    Let $k$ be a positive integer and $a,b,c\in\mathbb{F}_2^{2k}$. For any four disjoint affine subspaces of the form
    \begin{align}
        A_{00} &= a + W \\
        A_{01} &= a + b + W \\
        A_{10} &= a + c + W \\
        A_{11} &= a + b + c + W
    \end{align}
    arising as translations of the same linear subspace $W\subset\mathbb{F}_2^{2k}$, there exist two disjoint sets $B_1$ and $B_2$ each of which is either empty or affine and such that
    \begin{align}
        |A_{00}\cap S_k| + |A_{01}\cap S_k| + |A_{10}\cap S_k| + |A_{11}\cap R_k| = |B_1| + 2\cdot|B_2\cap S_k|
    \end{align}
    and $|B_1| + |B_2| = 2\cdot|W|$.
\end{lemma}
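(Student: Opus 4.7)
The plan is to mirror the proof of \cref{lm:hard_rsf_2terms}, leveraging the same bilinear cancellation trick applied in two coordinated stages. The starting observation is that for any $\sigma,\tau\in\mathbb{F}_2^2$ the sum $\mathbb{1}_{S_k}(a_\sigma+x)+\mathbb{1}_{S_k}(a_\tau+x)\pmod 2$ is an affine functional of $x\in W$, because the quadratic $\mathbb{1}_{S_k}(x)$ contributions cancel in pairs by \eqref{eq:q(a+w)}. I would introduce two such affine functionals on $W$,
\[
L_a(x) := \mathbb{1}_{S_k}(a_{00}+x) + \mathbb{1}_{S_k}(a_{11}+x), \qquad L_b(x) := \mathbb{1}_{S_k}(a_{01}+x) + \mathbb{1}_{S_k}(a_{10}+x),
\]
(all arithmetic modulo $2$). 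The same cancellation applied to all four translates, together with $a_{00}+a_{01}+a_{10}+a_{11}=0$, shows that $L_a(x)+L_b(x)=\sum_\sigma\mathbb{1}_{S_k}(a_\sigma)\in\mathbb{F}_2$ is a constant $c$, so the level-set partitions of $W$ induced by $L_a$ and $L_b$ either coincide (when $c=0$) or are swapped (when $c=1$).

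Next I would split the target sum as $N = \bigl(|A_{11}\cap R_k|+|A_{00}\cap S_k|\bigr) + \bigl(|A_{01}\cap S_k|+|A_{10}\cap S_k|\bigr)$. The first parenthesis fits \cref{lm:hard_rsf_2terms} directly with $A_1=A_{11}$, $A_2=A_{00}$, producing $|W_0^a| + 2\,|(a_{00}+W_1^a)\cap S_k|$, where $W_0^a,W_1^a$ denote the level sets of $L_a$. For the second parenthesis I would partition $W$ by $L_b$ directly: on $\{L_b=1\}$ exactly one of $\mathbb{1}_{S_k}(a_{01}+x), \mathbb{1}_{S_k}(a_{10}+x)$ equals $1$, so that level set contributes its own cardinality; on $\{L_b=0\}$ both indicators agree, contributing twice the count of $x$ with $a_{01}+x\in S_k$. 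Rewriting the $L_b$ level sets in terms of $W_0^a,W_1^a$ via $L_b\equiv L_a+c$ and summing, the total collapses to $N = |W| + 2\,|B_2\cap S_k|$ for an explicit $B_2\subseteq A_{00}\cup A_{01}$ defined as the disjoint union of one coset of $V:=\ker L_a$ sitting inside $A_{00}$ with another coset of $V$ sitting inside $A_{01}$.

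The key technical step I expect to consume most of the write-up is verifying that $B_2$ is actually affine. When both level sets of $L_a$ are nonempty, each is a coset of $V$, so $B_2$ is a union of two disjoint $V$-cosets whose offset representatives differ by an element of the form $b$ or $b+w$ with $w\in W$; this difference cannot lie in $V\subseteq W$, because that would force $b\in W$ and contradict the disjointness hypothesis $A_{00}\cap A_{01}=\emptyset$. Consequently $B_2$ is a single coset of $V+\langle b\rangle$ of size $|W|$, and taking $B_1:=A_{10}$ gives an affine subspace of size $|W|$ disjoint from $B_2\subseteq A_{00}\cup A_{01}$, with $|B_1|+|B_2|=2|W|$ as required. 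The degenerate subcases where $L_a$ (and hence $L_b$) is constant on $W$ are handled separately by direct inspection: for $c=0$ the total still has the form $|W|+2\,|A_\sigma\cap S_k|$ for some $\sigma\in\{00,01\}$, while for $c=1$ the total becomes either $2|W|$ (take $B_1=A_{00}\cup A_{01}$, $B_2=\emptyset$) or $2\,|(A_{00}\cup A_{01})\cap S_k|$ (take $B_1=\emptyset$, $B_2=A_{00}\cup A_{01}$), using that $A_{00}\cup A_{01}=a_{00}+W+\langle b\rangle$ is affine since $b\notin W$.
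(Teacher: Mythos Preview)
Your approach is correct and arrives at the same conclusion, but organizes the argument differently from the paper. The paper partitions $W$ into eight pieces indexed by the triple $(\mathbb{1}_{S_k}(a+x),\mathbb{1}_{S_k}(a+b+x),\mathbb{1}_{S_k}(a+c+x))$, then case-splits on $\omega(b,c)$ and uses a \emph{different} affine functional in each case: for $\omega(b,c)=0$ it uses the level sets $W_{xxy},W_{x\overline{x}y}$ of $\omega(b,\cdot)$ and builds $B_1,B_2$ inside $A_{01}\cup A_{10}$, whereas for $\omega(b,c)=1$ it uses the level sets $W_{xyy},W_{xy\overline{y}}$ of $\omega(b+c,\cdot)$ and builds them inside $A_{00}\cup A_{01}$. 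You instead pair $\{00,11\}$ with $\{01,10\}$ from the outset, which forces both of your functionals $L_a,L_b$ to share the linear part $\omega(b+c,\cdot)$; this lets you recycle \cref{lm:hard_rsf_2terms} verbatim for the first pair and treat the second pair symmetrically, avoiding the eight-way bookkeeping. Your resulting $B_2$ always sits inside $A_{00}\cup A_{01}$ and your $B_1=A_{10}$ is a single translate rather than a union of two half-cosets, which is a tidier construction. The trade-off is that the paper's explicit eight-way table makes the cardinality identities mechanically checkable, while your argument is more compact but requires the reader to track how the $\kappa=0$ and $\kappa=1$ sums collapse to the common form $|W|+2|B_2\cap S_k|$ in the non-degenerate case. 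Both routes hinge on the same cancellation identity $\sum_\sigma\mathbb{1}_{S_k}(a_\sigma+x)=\omega(b,c)$, which is exactly your constant $c$.
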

\begin{proof}
First note that for any $x\in W$
\begin{align}
    \mathbb{1}_{S_k}(a+x) &= \mathbb{1}_{S_k}(a) + \omega(a, x) + \mathbb{1}_{S_k}(x) \\
    \mathbb{1}_{S_k}(a+b+x) & = \mathbb{1}_{S_k}(a) + \mathbb{1}_{S_k}(b) + \omega(a, b) + \omega(a, x) + \omega(b, x) + \mathbb{1}_{S_k}(x) \\
    \mathbb{1}_{S_k}(a+c+x) & = \mathbb{1}_{S_k}(a) + \mathbb{1}_{S_k}(c) + \omega(a, c) + \omega(a, x) + \omega(c, x) + \mathbb{1}_{S_k}(x) \\
    \mathbb{1}_{S_k}(a+b+c+x) & = \mathbb{1}_{S_k}(a) + \mathbb{1}_{S_k}(b) + \mathbb{1}_{S_k}(c) + \omega(a, b) + \omega(b, c) + \omega(c, a) \notag \\
    &+ \omega(a, x) + \omega(b, x) + \omega(c, x) + \mathbb{1}_{S_k}(x)
\end{align}
which means that
\begin{align}
    \mathbb{1}_{S_k}(a+x) + \mathbb{1}_{S_k}(a + b + x) + \mathbb{1}_{S_k}(a + c + x) + \mathbb{1}_{S_k}(a + b + c + x) = \omega(b, c).
\end{align}
Thus, for any fixed $a,b,c\in\mathbb{F}_2^{2k}$, we can infer $\mathbb{1}_{S_k}(a+b+c+x)$ from $\mathbb{1}_{S_k}(a+x)$, $\mathbb{1}_{S_k}(a+b+x)$, and $\mathbb{1}_{S_k}(a+c+x)$. Indeed, if $\omega(b,c)=0$, then for every $x\in W$ we have the following eight equivalences
\begin{align}
    &(a+x\in R_k)\,\land\,(a+b+x\in R_k)\,\land\,(a+c+x\in R_k) \iff a+b+c+x\in R_k \\
    &(a+x\in R_k)\,\land\,(a+b+x\in R_k)\,\land\,(a+c+x\in S_k) \iff a+b+c+x\in S_k \\
    &(a+x\in R_k)\,\land\,(a+b+x\in S_k)\,\land\,(a+c+x\in R_k) \iff a+b+c+x\in S_k \\
    &(a+x\in R_k)\,\land\,(a+b+x\in S_k)\,\land\,(a+c+x\in S_k) \iff a+b+c+x\in R_k \\
    &\quad\quad\ldots\notag \\
    &(a+x\in S_k)\,\land\,(a+b+x\in S_k)\,\land\,(a+c+x\in S_k) \iff a+b+c+x\in S_k.
\end{align}
If on the other hand $\omega(b,c)=1$, then for every $x\in W$ we have the eight equivalences
\begin{align}
    &(a+x\in R_k)\,\land\,(a+b+x\in R_k)\,\land\,(a+c+x\in R_k) \iff a+b+c+x\in S_k \\
    &(a+x\in R_k)\,\land\,(a+b+x\in R_k)\,\land\,(a+c+x\in S_k) \iff a+b+c+x\in R_k \\
    &(a+x\in R_k)\,\land\,(a+b+x\in S_k)\,\land\,(a+c+x\in R_k) \iff a+b+c+x\in R_k \\
    &(a+x\in R_k)\,\land\,(a+b+x\in S_k)\,\land\,(a+c+x\in S_k) \iff a+b+c+x\in S_k \\
    &\quad\quad\ldots\notag \\
    &(a+x\in S_k)\,\land\,(a+b+x\in S_k)\,\land\,(a+c+x\in S_k) \iff a+b+c+x\in R_k.
\end{align}
In either case, $W$ can be partitioned into eight disjoint sets
\begin{align}
    W_{rrr} &:= \{w\in W\,|\,(a+w\in R_k)\,\land\,(a+b+w\in R_k)\,\land\,(a+c+w\in R_k)\} \\
    W_{rrs} &:= \{w\in W\,|\,(a+w\in R_k)\,\land\,(a+b+w\in R_k)\,\land\,(a+c+w\in S_k)\} \\
    W_{rsr} &:= \{w\in W\,|\,(a+w\in R_k)\,\land\,(a+b+w\in S_k)\,\land\,(a+c+w\in R_k)\} \\
    W_{rss} &:= \{w\in W\,|\,(a+w\in R_k)\,\land\,(a+b+w\in S_k)\,\land\,(a+c+w\in S_k)\} \\
    &\ldots \notag \\
    W_{sss} &:= \{w\in W\,|\,(a+w\in S_k)\,\land\,(a+b+w\in S_k)\,\land\,(a+c+w\in S_k)\}
\end{align}
so that $W = W_{rrr} \sqcup W_{rrs} \sqcup\ldots\sqcup W_{sss}$. Consider the following disjoint unions of these sets
\begin{align}
    \label{eq:w_b0_definition}
    W_{xxy} &:= W_{rrr} \sqcup W_{rrs} \sqcup W_{ssr} \sqcup W_{sss} \\
    \label{eq:w_b1_definition}
    W_{x\overline{x}y} &:= W_{rsr} \sqcup W_{rss} \sqcup W_{srr} \sqcup W_{srs} \\
    \label{eq:w_c0_definition}
    W_{xyy} &:= W_{rrr} \sqcup W_{rss} \sqcup W_{srr} \sqcup W_{sss} \\
    \label{eq:w_c1_definition}
    W_{xy\overline{y}} &:= W_{rrs} \sqcup W_{rsr} \sqcup W_{srs} \sqcup W_{ssr}.
\end{align}
The indicator function of $W_{x\overline{x}y}$ restricted to $W$ is
\begin{align}
    \mathbb{1}_{W_{x\overline{x}y}}(x) &= \mathbb{1}_{R_k}(a+x)\cdot\mathbb{1}_{S_k}(a+b+x)\cdot\mathbb{1}_{R_k}(a+c+x) + \mathbb{1}_{R_k}(a+x)\cdot\mathbb{1}_{S_k}(a+b+x)\cdot\mathbb{1}_{S_k}(a+c+x) \\
    &+ \mathbb{1}_{S_k}(a+x)\cdot\mathbb{1}_{R_k}(a+b+x)\cdot\mathbb{1}_{R_k}(a+c+x)  + \mathbb{1}_{S_k}(a+x)\cdot\mathbb{1}_{R_k}(a+b+x)\cdot\mathbb{1}_{S_k}(a+c+x) \\
    &= \mathbb{1}_{R_k}(a+x)\cdot\mathbb{1}_{S_k}(a+b+x) + \mathbb{1}_{S_k}(a+x)\cdot\mathbb{1}_{R_k}(a+b+x) \\
    &= \left(1+\mathbb{1}_{S_k}(a+x)\right)\cdot\mathbb{1}_{S_k}(a+b+x) + \mathbb{1}_{S_k}(a+x)\cdot\left(1+\mathbb{1}_{S_k}(a+b+x)\right) \\
    &= \mathbb{1}_{S_k}(a+x) + \mathbb{1}_{S_k}(a+b+x) \\
    &= \mathbb{1}_{S_k}(b) + \omega(a,b)+\omega(b,x)
\end{align}
which is an affine functional on $W$. Therefore, $W_{x\overline{x}y}$ is empty or an affine subspace of $W$. So is $W_{xxy}$ whose indicator function restricted to $W$ is $\mathbb{1}_{W_{xxy}}(x)=\mathbb{1}_{W_{x\overline{x}y}}(x)+1$. Moreover, \eqref{eq:w_b0_definition} and \eqref{eq:w_b1_definition} imply that $W_{xxy}\sqcup W_{x\overline{x}y} = W$.

Similarly, the indicator function of $W_{xy\overline{y}}$ restricted to $W$ is $\mathbb{1}_{W_{xy\overline{y}}}(x) = \mathbb{1}_{S_k}(b) + \mathbb{1}_{S_k}(c) + \omega(a,b+c)+\omega(b+c,x)$ which is also an affine functional on $W$. Therefore, $W_{xy\overline{y}}$ is empty or an affine subspace of $W$. So is $W_{xyy}$ whose indicator function restricted to $W$ is $\mathbb{1}_{W_{xyy}}(x)=\mathbb{1}_{W_{xy\overline{y}}}(x)+1$. Moreover, \eqref{eq:w_c0_definition} and \eqref{eq:w_c1_definition} imply that $W_{xyy}\sqcup W_{xy\overline{y}} = W$.

We can write
\begin{align}
    |A_{00}\cap S_k| = |W_{srr}| + |W_{srs}| + |W_{ssr}| + |W_{sss}| \\
    |A_{01}\cap S_k| = |W_{rsr}| + |W_{rss}| + |W_{ssr}| + |W_{sss}| \\
    |A_{10}\cap S_k| = |W_{rrs}| + |W_{rss}| + |W_{srs}| + |W_{sss}|
\end{align}
irrespective of $\omega(b,c)$, so that
\begin{align}
    \label{eq:a00s_a01s_a10s}
    &|A_{00}\cap S_k| + |A_{01}\cap S_k| + |A_{10}\cap S_k| \\
    &= |W_{rrs}| + |W_{rsr}| + 2\cdot|W_{rss}| + |W_{srr}| + 2\cdot|W_{srs}| + 2\cdot|W_{ssr}| + 3\cdot|W_{sss}| \notag
\end{align}
also irrespective of $\omega(b,c)$. The analogous expression for $|A_{11}\cap R_k|$ depends on the value of $\omega(b,c)$. We consider both cases constructing a pair of suitable affine spaces $B_1$ and $B_2$ for each one in turn.

First, if $\omega(b,c)=0$, then
\begin{align}
    |A_{11}\cap R_k| = |W_{rrr}| + |W_{rss}| + |W_{srs}| + |W_{ssr}|
\end{align}
and adding \eqref{eq:a00s_a01s_a10s} yields
\begin{align}
    &|A_{00}\cap S_k| + |A_{01}\cap S_k| + |A_{10}\cap S_k| + |A_{11}\cap R_k| \\
    &= |W_{rrr}| + |W_{rrs}| + |W_{rsr}| + 3\cdot|W_{rss}| + |W_{srr}| + 3\cdot|W_{srs}| + 3\cdot|W_{ssr}| + 3\cdot|W_{sss}| \\
    &= |W| + 2\cdot\left(|W_{rss}| + |W_{srs}| + |W_{ssr}| + |W_{sss}|\right).
\end{align}
Define
\begin{align}
    B_1 &:= (a + b + W_{x\overline{x}y}) \sqcup (a + c + W_{xxy}) \\
    B_2 &:= (a + b + W_{xxy}) \sqcup (a + c + W_{x\overline{x}y})
\end{align}
so that $B_1 \sqcup B_2$ is an affine space of dimension $\dim W + 1$ and
\begin{align}
    |B_1| &= |W| \\
    |B_2 \cap S_k| &= |W_{rss}| + |W_{srs}| + |W_{ssr}| + |W_{sss}|
\end{align}
and hence
\begin{align}
    |A_{00}\cap S_k| + |A_{01}\cap S_k| + |A_{10}\cap S_k| + |A_{11}\cap R_k| = |B_1| + 2\cdot|B_2 \cap S_k|.
\end{align}

Now, if $\omega(b,c)=1$, then
\begin{align}
    |A_{11}\cap R_k| = |W_{rrs}| + |W_{rsr}| + |W_{srr}| + |W_{sss}|
\end{align}
and adding \eqref{eq:a00s_a01s_a10s} yields
\begin{align}
    &|A_{00}\cap S_k| + |A_{01}\cap S_k| + |A_{10}\cap S_k| + |A_{11}\cap R_k| \\
    &= 2\cdot|W_{rrs}| + 2\cdot|W_{rsr}| + 2\cdot|W_{rss}| + 2\cdot|W_{srr}| + 2\cdot|W_{srs}| + 2\cdot|W_{ssr}| + 4\cdot|W_{sss}|. \notag
\end{align}
Redefine
\begin{align}
    B_1 &:= (a + W_{xy\overline{y}}) \sqcup (a + b + W_{xy\overline{y}}) \\
    B_2 &:= (a + W_{xyy}) \sqcup (a + b + W_{xyy})
\end{align}
so that $B_1 \sqcup B_2$ is again an affine space of dimension $\dim W + 1$ and
\begin{align}
    |B_1| &= 2\cdot|W_{rrs}| + 2\cdot|W_{rsr}| + 2\cdot|W_{srs}| + 2\cdot|W_{ssr}| \\
    |B_2 \cap S_k| &= |W_{rss}| + |W_{srr}| + 2\cdot|W_{sss}|
\end{align}
and hence
\begin{align}
    |A_{00}\cap S_k| + |A_{01}\cap S_k| + |A_{10}\cap S_k| + |A_{11}\cap R_k| = |B_1| + 2\cdot|B_2 \cap S_k|
\end{align}
completing the proof of the Lemma.
\end{proof}

We note the following conditional upper bound which is implied by Lemma \ref{lm:hard_rsf_4terms} and which will facilitate its use in an inductive proof.

\begin{corollary}[Bound for mixed recursive structure formulas with four terms]
    \label{cr:hard_rsf_4terms}
    \,\newline
    Let $e$ and $k>1$ be positive integers. If $|B\cap S_{k-1}| \leq 2^{\dim B-1} + 2^{k-3}$ for every affine subspace $B$ of $\mathbb{F}_2^{2k-2}$ with $\dim B\in\{e,e+1\}$, then
    \begin{align}
        |A_{00}\cap S_{k-1}| + |A_{01}\cap S_{k-1}| + |A_{10}\cap S_{k-1}| + |A_{11}\cap R_{k-1}| \leq 2^{e+1} + 2^{k-2}
    \end{align}
    for any four disjoint affine subspaces of the form
    \begin{align}
        A_{00} &= a + W \\
        A_{01} &= a + b + W \\
        A_{10} &= a + c + W \\
        A_{11} &= a + b + c + W
    \end{align}
    arising as translations of the same $e$-dimensional linear subspace $W\subset\mathbb{F}_2^{2k-2}$ with $a,b,c\in\mathbb{F}_2^{2k-2}$.
\end{corollary}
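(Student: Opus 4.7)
The plan is to mimic the structure of the proof of \cref{cr:hard_rsf_2terms}, using \cref{lm:hard_rsf_4terms} (applied with $k$ replaced by $k-1$) as the key decomposition tool and then arguing via case analysis on whether the two sets produced by that lemma are empty or affine.

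First, I would invoke \cref{lm:hard_rsf_4terms} on the four affine subspaces $A_{00},A_{01},A_{10},A_{11}$ of $\mathbb{F}_2^{2k-2}$ (all translates of the $e$-dimensional linear subspace $W$). This yields two disjoint sets $B_1$ and $B_2$, each either empty or affine, such that
\begin{equation*}
|A_{00}\cap S_{k-1}| + |A_{01}\cap S_{k-1}| + |A_{10}\cap S_{k-1}| + |A_{11}\cap R_{k-1}| = |B_1| + 2\cdot|B_2\cap S_{k-1}|
\end{equation*}
and $|B_1|+|B_2| = 2\cdot|W| = 2^{e+1}$. Crucially, the proof of \cref{lm:hard_rsf_4terms} shows $B_1\sqcup B_2$ is itself an affine space of dimension $e+1$, so when both are nonempty they must be parallel cosets of the same $e$-dimensional linear subspace inside this ambient affine space, forcing $\dim B_1 = \dim B_2 = e$.

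Next I would perform the three-case analysis. If $B_1=\emptyset$, then $B_2$ is affine of dimension $e+1$, and the hypothesis (applied at $\dim B = e+1$) gives $|B_2\cap S_{k-1}|\leq 2^e + 2^{k-3}$, so the total is at most $0 + 2(2^e+2^{k-3}) = 2^{e+1}+2^{k-2}$. If $B_2=\emptyset$, then $|B_1|=2^{e+1}$ and the total is at most $2^{e+1}$, which is bounded by $2^{e+1}+2^{k-2}$. If both are nonempty, both are affine of dimension $e$, so $|B_1|\leq 2^e$ and (by hypothesis at $\dim B = e$) $|B_2\cap S_{k-1}|\leq 2^{e-1}+2^{k-3}$, giving $|B_1|+2|B_2\cap S_{k-1}|\leq 2^e + 2(2^{e-1}+2^{k-3}) = 2^{e+1}+2^{k-2}$.

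The main obstacle is not algebraic but a bookkeeping issue: ensuring that the dimensions attributed to $B_1$ and $B_2$ in the three cases are correct, which hinges on reading off from the proof of \cref{lm:hard_rsf_4terms} that $B_1\sqcup B_2$ is affine of dimension $e+1$ (so that nonempty $B_i$ must each be of dimension exactly $e$ when neither is empty, and of dimension $e+1$ when the other is empty). Once this observation is made, each of the three cases is a direct arithmetic verification parallel to the two-term version, and no new inductive machinery is needed.
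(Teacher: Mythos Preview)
Your proposal is correct and follows essentially the same approach as the paper's proof: invoke \cref{lm:hard_rsf_4terms} to obtain $B_1,B_2$ with $|B_1|+|B_2|=2^{e+1}$, then perform the three-way case split on emptiness with identical arithmetic in each case. Your explicit observation that $B_1\sqcup B_2$ is affine of dimension $e+1$ (which the paper's corollary proof leaves implicit but is established in the lemma's proof) is exactly the right justification for the dimension claims in the nonempty cases.
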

\begin{proof}
By Lemma \ref{lm:hard_rsf_4terms}, there exist two sets $B_1$ and $B_2$ each of which is either empty or affine and such that
\begin{align}
    \label{eq:ar_as<=b+2bs}
    |A_{00}\cap S_{k-1}| + |A_{01}\cap S_{k-1}| + |A_{10}\cap S_{k-1}| + |A_{11}\cap R_{k-1}| = |B_1| + 2\cdot|B_2\cap S_{k-1}|
\end{align}
and such that $|B_1| + |B_2|=2\cdot|W|$. Thus, if $B_1=\emptyset$, then $B_2$ is an affine subspace of $\mathbb{F}_2^{2k-2}$ of dimension $e+1$ and using the assumption in the Corollary, we obtain
\begin{align}
    |A_{00}\cap S_{k-1}| + |A_{01}\cap S_{k-1}| + |A_{10}\cap S_{k-1}| + |A_{11}\cap R_{k-1}| \leq 0 + 2\cdot(2^{e} + 2^{k-3}) = 2^{e+1} + 2^{k-2}.
\end{align}
If $B_2=\emptyset$, then $B_1$ is an affine subspace of $\mathbb{F}_2^{2k-2}$ of dimension $e+1$. Therefore,
\begin{align}
    |A_{00}\cap S_{k-1}| + |A_{01}\cap S_{k-1}| + |A_{10}\cap S_{k-1}| + |A_{11}\cap R_{k-1}| \leq 2^{e+1} + 2\cdot 0 < 2^{e+1} + 2^{k-2}.
\end{align}
Finally, if neither $B_1$ nor $B_2$ is empty, then both are affine and $\dim B_1=\dim B_2=e$. Therefore,
\begin{align}
    |A_{00}\cap S_{k-1}| + |A_{01}\cap S_{k-1}| + |A_{10}\cap S_{k-1}| + |A_{11}\cap R_{k-1}| \leq 2^e + 2\cdot(2^{e-1} + 2^{k-3}) = 2^{e+1} + 2^{k-2}
\end{align}
completing the proof of the Corollary.
\end{proof}

\section{Reference Python Implementation}\label{sec:ref_python_impl}

\subsection{Reed Solomon Decoding using Synchronized EEA for Explicit Bézout coefficients}
\cref{fig:optimized_zalka_eea} gives our optimized implementation of Zalka's synchronized reversible EEA \cite{kaye2004optimizedquantumimplementationelliptic}. 
\cref{fig:rs_decoder_eea_zalka} uses this as a subroutine and shows how to implement the Chien Search \cite{Chien1964} and Forney's algorithm \cite{Forney1965} steps for syndrome decoding of Reed Solomon codes, given the shared in-place register representation of the error locator polynomial $\sigma(z)$ and error evaluator polynomial $\Omega(z)$. 
 
\begin{figure}
    \pythonlisting{6pt}{assets/code/poly_eea_zalka_gcd.py}
    \caption{Our optimized implementation of Zalka's EEA for explicit access to Bézout coefficients.}
    \label{fig:optimized_zalka_eea}
\end{figure}

\begin{figure}
    \pythonlisting{7pt}{assets/code/rs_syndrome_decoder_eea_zalka.py}
    \caption{RS Decoding using our optimized implementation of Zalka's EEA \cite{kaye2004optimizedquantumimplementationelliptic, proos2004shorsdiscretelogarithmquantum} presented in \cref{fig:optimized_zalka_eea}}
    \label{fig:rs_decoder_eea_zalka}
\end{figure}

\subsection{Reed Solomon Decoding using Dialog based EEA for Implicit Bézout Coefficients}
\cref{fig:in-place-dialog-divstep} gives our optimized implementation of constructing a Dialog based on the Bernstein-Yang GCD algorithm \cite{cryptoeprint:2019/266} using a shared register representation to save space. We also show how once can consume the Dialog to perform modular multiplication or modular division. 
\cref{fig:rs_decoder_eea_inplace} uses uses the Dialog based EEA for syndrome decoding of Reed Solomon codes, and shows how one can evaluate the error locator polynomial $\sigma(z)$ and it's derivative $\sigma^\prime(z)$, given access to the Dialog of $\sigma(z)$.

\begin{figure}
    \pythonlisting{7pt}{assets/code/divstep_dialog_inplace.py}
    \caption{In Place Dialog construction using register sharing and it's use to perform modular division and modular multiplication.}
    \label{fig:in-place-dialog-divstep}
\end{figure}

\begin{figure}
    \pythonlisting{7pt}{assets/code/rs_syndrome_decoder_eea_dialogs.py}
    \caption{Optimized implementation of EEA based syndrome decoder for RS codes using register sharing to efficiently compute the Dialog representation and use it to implicitly evaluate Bézout coefficient corresponding to polynomial $\sigma(z)$ and $\sigma^\prime(z)$.}
    \label{fig:rs_decoder_eea_inplace}
\end{figure}

\end{document}